\def\BState{\State\hskip-\ALG@thistlm}
\newenvironment{breakablealgorithm}
  {
   \begin{center}
     \refstepcounter{algorithm}
     \hrule height.8pt depth0pt \kern2pt
     \renewcommand{\caption}[2][\relax]{
       {\raggedright\textbf{\fname@algorithm~\thealgorithm} ##2\par}%
       \ifx\relax##1\relax 
         \addcontentsline{loa}{algorithm}{\protect\numberline{\thealgorithm}##2}%
       \else 
         \addcontentsline{loa}{algorithm}{\protect\numberline{\thealgorithm}##1}%
       \fi
       \kern2pt\hrule\kern2pt
     }
  }{
     \kern2pt\hrule\relax
   \end{center}
  }
\newcommand{\raisedtarget}[1]{%
  \raisebox{\fontcharht\font`P}[0pt][0pt]{\hypertarget{#1}{}}%
}
\newtheorem{remark2}{Remark}
\newtheorem{statement}{Statement}
\newtheorem{scenario}{Scenario}
\newtheorem{invariant}{Invariant}
\newcommand{\bandwidth}{\mathrm{Bandwidth}}
\newcommand{\ap}[1]{}
\newcommand{\va}[1]{}
\newcommand{\is}[1]{}
\newcommand{\stefan}[1]{}
\newcommand{\None}{\mathrm{None}}
\begin{document}
\title{Self-Adjusting Linear Networks\\ with Ladder Demand Graph}
%
\titlerunning{Self-Adjusting Linear Networks with Ladder Demand Graph}
%
\author{Vitaly Aksenov\inst{1} \and
Anton Paramonov\inst{1} \and
Iosif Salem\inst{2} \and
Stefan Schmid\inst{2}}
\authorrunning{V. Aksenov et al.}
%
\institute{ITMO University, St. Petersburg, Russia\\
\and
TU Berlin, Berlin, Germany\\
}
%

\maketitle              
\begin{abstract}
Self-adjusting networks (SANs) have the ability to adapt to communication demand by dynamically adjusting the workload (or demand) embedding, i.e., the mapping of communication requests into the network topology. SANs can thus reduce routing costs for frequently communicating node pairs by paying a cost for adjusting the embedding. This is particularly beneficial when the demand has structure, which the network can adapt to. Demand can be represented in the form of a demand graph, which is defined by the set of network nodes (vertices) and the set of pairwise communication requests (edges). Thus, adapting to the demand can be interpreted by embedding the demand graph to the network topology. This can be challenging both when the demand graph is known in advance (offline) and when it revealed edge-by-edge (online). The difficulty also depends on whether we aim at constructing a static topology or a dynamic (self-adjusting) one that improves the embedding as more parts of the demand graph are revealed. Yet very little is known about these self-adjusting embeddings.

In this paper, the network topology is restricted to a line and the demand graph to a ladder graph, i.e., a $2\times n$ grid, including all possible subgraphs of the ladder. We present an online self-adjusting network that matches the known lower bound asymptotically and is 12-competitive in terms of request cost. As a warm up result, we present an asymptotically optimal algorithm for the cycle demand graph. We also present an oracle-based algorithm for an arbitrary demand graph that has a constant overhead.

\keywords{Ladder graph \and Self-adjusting networks \and Traffic patterns \and online algorithms.}
\end{abstract}

\vspace{-1cm}
\section{Introduction}
\vspace{-0.2cm}

Traditional networks are static and demand-oblivious, i.e., designed without considering the communication demand. 
While this might be beneficial for all-to-all traffic, it doesn't take into account temporal or spatial locality features in demand. That is, sets of nodes that temporarily cover the majority of communication requests may be placed diameter-away from each other in the network topology.
This is a relevant concern as studies on datacenter network traces have shown that communication demand is indeed bursty and skewed \cite{avin2020complexity}.

Self-adjusting networks (SANs) are optimized towards the traffic they serve. SANs can be static or dynamic, depending on whether it is possible to reconfigure the embedding (mapping of communication requests to the network topology) in between requests, and offline or online, depending on whether the sequence of communication requests is known in advance or revealed piece-wise.
In the online case, we assume that the embedding can be adjusted in between requests at a cost linear to the added and deleted edges, thus, bringing closer frequently communicating nodes.
Online algorithms for SANs aim to reduce the sum of routing and reconfiguration (re-embedding) costs for any communication sequence.

We can express traffic in the form of a demand graph that is defined by the set of nodes in the network and the set of pairwise communication requests (edge set) among them. 
Knowing the structure of the demand graph could allow us to further optimize online SANs, even though the demand is still revealed online.
That is, by re-embedding the demand graph to the network we optimize the use of network resources according to recent patterns in demand.

To the best of our knowledge, the only work on demand graph re-embeddings to date is \cite{avin2019self}, where the network topology is a line and the demand graph is also a line.
The authors presented an algorithm that serves $m = \Omega(n^2)$ requests at cost $O(n^2 \log n + m)$ and showed that this complexity is the lower bound. The problem is inspired by the Itinerant List Update Problem~\cite{olver2018itinerant} (ILU). To be more precise, the problem in~\cite{avin2019self} appears to be the restricted version of the online Dynamic Minimum Linear Arrangement problem, which is another reformulation of ILU.

\textbf{Contributions.} In this work, we take the next step towards optimizing online SANs for more general demand graphs. We restrict the network topology to a line, but assume that the demand graph is a ladder, i.e., a $2\times n$ grid.
We assume that before performing a request, we can re-adjust the line graph by performing several swaps of two neighbouring nodes, paying one for each swap. 
We present a 12-competitive online algorithm that embeds a ladder demand graph to the line topology, thus, asymptotically matching the lower bound in \cite{avin2019self}. This algorithm can be applied to any demand graph that is a subgraph of the ladder graph and that when all edges of the demand graph are revealed the topology is optimal and no more adjustments occur. We also optimally solve the case of cycle demand graphs, which is a simple generalization of the line demand graph, but is not a subcase of the ladder due to odd cycles. Finally, we provide a generic algorithm for arbitrary demand graphs, given an oracle that computes an embedding with the cost of requests bounded by the bandwidth.

A solution for the ladder is the first step towards the $n \times m$ grid demand graph. 
Moreover, a ladder (and a cycle) has a constant \emph{bandwidth}, i.e., a minimum value over all embeddings onto a target line graph of a maximal path between the ends of an edge (request). It can be shown that given a demand graph $G$ the best possible complexity per request is the bandwidth.

\textbf{Related work.}
Avin et al. \cite{avin2019self}, consider a fixed line (host) network and a line demand graph. Their online algorithm re-embeds the demand graph to the host line graph with minimum number of swaps on the embedding.
Both \cite{avin2022deterministic,avin2022push} present constant-competitive online algorithms for a fixed and complete binary tree, where nodes can swap and the demand is originating only from the source. However, these two works do not consider a specific demand graph.
Moreover, \cite{avin2022demand} studied optimal but static and bounded-degree network topologies, when the demand is known.
Self-adjusting networks have been formally organized and surveyed in \cite{avin2019toward}.
Other existing online SAN algorithms consider different models. The most distinct difference is our focus on online re-embedding while keeping a fixed host graph (i.e., a line) compared to other works that focus on changing the network topology. The latter is what, for example, SplayNet \cite{DBLP:journals/ton/SchmidASBHL16} is proposing, where tree rotations change the form of the binary search tree network, without optimizing for a specific family of demand patterns.

Online demand graph re-embedding also relates to dynamically re-allocating network resources to follow traffic patterns. In \cite{avin2016online}, the authors consider a fixed set of clusters of bounded size, which contain all nodes and migrate nodes online according to the communication demand. But more broadly, \cite{batista2007self} assumes a fixed grid network and migrates tasks according to their communication patterns.

Also, relevant problems, from a migration point of view, are the classic list update problem (LU) \cite{sleator1985amortized}, the related Itinerant List Update (ILU) problem \cite{olver2018itinerant}, and the Minimum Linear Arrangement (MLA) problem \cite{hansen1989approximation}.
In contrast to those problems, we study an online problem where requests occur between nodes.

\textbf{Roadmap. }
Section~\ref{sec:model} describes the model and background. Section \ref{sec:results} contains the summary of our three contributions (ladder, cycle, general demand graph) and their high-level proofs. 
Section~\ref{sec:algorithm} presents the algorithm and analysis for ladder demand graphs. 
Some technical details are deferred to the appendix. 

\vspace{-0.3cm}
\section{Model and Background}
\label{sec:model}
\vspace{-0.2cm}

Let us introduce the notation that we are going to use throughout the paper.
Let $V(H)$ and $E(H)$ be the sets of vertices and edges in graph $H$, respectively. Sometimes, we just use $V$ and $E$ if the graph $H$ is obvious from the context.
Let $d_H(u, v)$ be the distance between $u$ and $v$ in graph $H$.

Let $N$ be the network topology and $\sigma$ be a sequence of pairwise communication requests between nodes in $N$.
Let the demand graph $G$ be the graph built over the nodes in $N$ and the pairs of nodes that appear in $\sigma$, i.e. $G = (V(N), \{\sigma_i = (s_i, d_i) \,|\, \sigma_i \in \sigma\})$. 
We assume that the demand graph is of a certain type and our overall goal will be to embed the demand graph $G$ onto the actual network topology $N$ at a minimum cost.
This is non-trivial as requests are selected from $G$ by an online adversary and $G$ is not known in advance.
In the following, we formalize demand graph embedding and topology reconfiguration.

A configuration (or an embedding) of $G$ (the demand graph) in a graph $N$ (the host network) is an injection of $V(G)$
into $V(N)$; $C_{G \rightarrow N}$ denotes the set of all such configurations.
A configuration $c \in C_{G \rightarrow N}$ is said to serve a communication request $(u, v) \in E(G)$ at the cost $d_N(c(u), c(v))$.
A finite communication sequence $\sigma = (\sigma_1, \ldots, \sigma_m)$ is served by a sequence of configurations $c_0, c_1, \ldots, c_m \in C_{G \rightarrow N}$.
The cost of serving $\sigma$ is the sum of serving each $\sigma_i$
in $c_i$ plus the reconfiguration cost between subsequent configurations $c_i$
and $c_{i+1}$.
The reconfiguration cost between $c_i$ and $c_{i + 1}$ is the number of migrations necessary to change from $c_i$ to $c_{i+1}$; a migration swaps the images of two neighbouring nodes $u$ and $v$ under $c$ in $N$.
Moreover, $E_i = \{\sigma_1, \ldots, \sigma_i\}$ denotes the first $i$ requests of $\sigma$ interpreted as a set of
edges on $V$.
%
We present algorithms for an online self-adjusting linear network:
a network whose topology forms a 1-dimensional grid, i.e., a line.

\vspace{-0.2cm}
\begin{definition}[Working Model]
Let $G$ be the demand graph, $n$ be the number of vertices in  $G$, $N = (\{1, \ldots, n\}, \{(1, 2),(2, 3), \ldots,(n - 1,  n)\}$ be a line (or list) graph $L_n$ (host network), $c$ be a configuration from $C_{G \rightarrow N}$, and $\sigma$ be a sequence of communication requests. The cost of serving $\sigma_i = (u, v) \in \sigma$ is given by $|c(u) - c(v)|$, i.e., the distance between $u$ and $v$ in $N$. Migrations can occur before serving a request and can only occur between nodes configured on adjacent vertices in $N$.
\end{definition}
\vspace{-0.2cm}

In the following we introduce notions relevant to our new results.
\begin{definition}
A \emph{correct embedding} of a graph $G$ into graph $N$ is an injective mapping $\varphi: V(G) \rightarrow V(N)$ that preserves edges, i.e. 
\begin{align*}
\begin{cases}
    \forall u, v \in V(G) \text{ with } u \neq v \Rightarrow \varphi(u) \neq \varphi(v)\\
    (u, v) \in E(G) \Rightarrow (\varphi(u), \varphi(v)) \in E(N)
\end{cases}
\end{align*}
\end{definition}

\begin{definition}[Bandwidth]
Given a graph $G$, the \emph{Bandwidth} of an embedding $c \in C_{G \rightarrow L_n}$ is equal to the maximum over all edges $(u, v) \in E$ of $|c(u) - c(v)|$, i.e., the distance between $u$ and $v$ on $L_n$. 
$\bandwidth(G)$ is the minimum bandwidth over all embeddings from $C_{G \rightarrow L_n}$.
\end{definition}

\vspace{-0.2cm}

\begin{remark2}
The $\bandwidth$ computation of an arbitrary graph is an NP-hard problem~\cite{diaz2002survey}.
\end{remark2}

\vspace{-0.2cm}

To save the space, we typically omit the proofs of lemmas and theorems in this paper and put them in Appendix~\ref{app:proofs}.
Here we define the $2\times n$ grid or ladder graph for which we get the main results of our paper.
\begin{definition}
A graph $Ladder_n = (V, E)$ is represented as follows. The vertices $V$ are the nodes of the grid $2 \times n$~--- $\{(1, 1), (1, 2), \ldots, (1, n), (2, 1), (2, 2), \ldots, (2, n)\}$. There is an edge between vertices $(x_1, y_1)$ and $(x_2, y_2)$ iff $|x_1 - x_2| + |y_1 - y_2| = 1$.
\end{definition}

\noindent
\begin{minipage}{0.6\textwidth}
\begin{lemma}\label{lem:gridbandwidth}
$\bandwidth(Ladder_n) = 2$.
\end{lemma}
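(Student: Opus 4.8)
The plan is to prove the two inequalities $\bandwidth(Ladder_n) \leq 2$ and $\bandwidth(Ladder_n) \geq 2$ separately. The upper bound is constructive and should be the easier direction, while the lower bound requires a short combinatorial argument.

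For the upper bound, I would exhibit an explicit embedding $c : V(Ladder_n) \to \{1, \ldots, 2n\}$ achieving bandwidth $2$. The natural choice is the ``column-by-column'' embedding that interleaves the two rows: map $(1, y) \mapsto 2y - 1$ and $(2, y) \mapsto 2y$. Then I would simply check the three types of edges. A vertical (rung) edge connects $(1, y)$ and $(2, y)$, whose images $2y-1$ and $2y$ differ by $1$. A horizontal edge in row $1$ connects $(1, y)$ and $(1, y+1)$, whose images $2y-1$ and $2y+1$ differ by $2$; similarly for row $2$, the images $2y$ and $2y+2$ differ by $2$. Hence every edge has stretch at most $2$, so this embedding witnesses $\bandwidth(Ladder_n) \leq 2$.

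For the lower bound, I would argue that no embedding can achieve bandwidth $1$, i.e.\ no \emph{correct embedding} of $Ladder_n$ into $L_{2n}$ exists. An embedding of bandwidth $1$ would be exactly a correct embedding in the sense of the earlier definition: every edge of $Ladder_n$ would map to an edge of the line. The cleanest way to see this is impossible is a degree/Hamiltonicity argument: a bandwidth-$1$ embedding realizes $Ladder_n$ as a subgraph of the path $L_{2n}$, forcing $Ladder_n$ to be a disjoint union of paths, hence acyclic with maximum degree $2$. But $Ladder_n$ (for $n \geq 2$) contains the $4$-cycle $(1,1),(1,2),(2,2),(2,1)$ and has vertices of degree $3$, a contradiction. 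For the degenerate case $n = 1$ the ladder is a single edge with bandwidth $1$, so I would state the lemma for $n \geq 2$ (or note this boundary case explicitly).

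I do not expect either direction to be a serious obstacle; the main care needed is just in handling the interleaving indices cleanly in the upper bound and making the lower-bound contradiction airtight (a single cycle or a single degree-$3$ vertex suffices). The only genuinely delicate point is the boundary value $n = 1$, where the ladder degenerates and the bandwidth drops to $1$, so the statement should implicitly assume $n \geq 2$.
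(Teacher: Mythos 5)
Your proposal is correct and follows essentially the same route as the paper: the upper bound via the interleaved ``column-by-column'' map is exactly the paper's level-by-level enumeration, and your lower bound (a bandwidth-$1$ embedding would realize $Ladder_n$ inside a path, contradicting the existence of degree-$3$ vertices) is precisely the paper's one-line argument, just spelled out. Your observation about the degenerate case $n = 1$, where the bandwidth is $1$, is a valid caveat that the paper silently ignores.
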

\begin{proof}
The bandwidth is greater than 1, because there are nodes of degree three. The bandwidth of 2 can be achieved via the ``level-by-level'' enumeration as shown on the figure.

\end{proof}
\end{minipage}
\hspace{0.01\textwidth}
\begin{minipage}{0.39\textwidth}
    \centering
    \vspace{0.2cm}
    \includegraphics[scale=0.2]{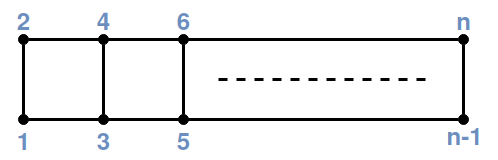}
    \captionof{figure}{Optimal ladder numeration.}
    \label{fig:ladder_numeration}
\end{minipage}

\vspace{-0.2cm}
\begin{lemma}\label{lem:embedding enumeration}
For each subgraph $S$ of a graph $G$, $\bandwidth(S) \leq \bandwidth(G)$.
\end{lemma}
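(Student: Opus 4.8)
The plan is to start from an optimal embedding of the larger graph $G$ and show that its restriction to the vertices of $S$, after the unused positions are removed, is a valid embedding of $S$ whose bandwidth does not exceed $\bandwidth(G)$. Let $n = |V(G)|$ and let $c \in C_{G \rightarrow L_n}$ be an embedding achieving the minimum, so that $|c(u) - c(v)| \leq \bandwidth(G)$ for every edge $(u,v) \in E(G)$. Since $S$ is a subgraph, $V(S) \subseteq V(G)$ and $E(S) \subseteq E(G)$.

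First I would isolate the positions occupied by $S$. Write $k = |V(S)|$ and let $p_1 < p_2 < \cdots < p_k$ be the images $\{c(w) : w \in V(S)\}$ listed in increasing order. Define the order-preserving \emph{compression} $\pi : \{p_1, \ldots, p_k\} \rightarrow \{1, \ldots, k\}$ by $\pi(p_i) = i$, and set $\varphi = \pi \circ c|_{V(S)}$. Then $\varphi$ is an injection from $V(S)$ into $V(L_k)$, hence a configuration in $C_{S \rightarrow L_k}$.

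The key step is the monotonicity of compression: for any two vertices $u, v \in V(S)$ whose images under $c$ are $p_i$ and $p_j$ with $i \leq j$, we have $|\varphi(u) - \varphi(v)| = j - i$, while $|c(u) - c(v)| = p_j - p_i$. Because the $p_\ell$ are strictly increasing integers, $p_j - p_i = \sum_{\ell = i+1}^{j} (p_\ell - p_{\ell-1}) \geq j - i$. Hence $|\varphi(u) - \varphi(v)| \leq |c(u) - c(v)|$ for every pair, and in particular for every edge of $S$.

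Combining these, for each $(u,v) \in E(S) \subseteq E(G)$ we obtain $|\varphi(u) - \varphi(v)| \leq |c(u) - c(v)| \leq \bandwidth(G)$, so the bandwidth of the embedding $\varphi$ is at most $\bandwidth(G)$, and therefore $\bandwidth(S) \leq \bandwidth(G)$. The only subtlety — and the single point that genuinely needs care — is the compression argument above; everything else is a direct restriction of the optimal embedding of $G$. I expect this monotonicity claim to be the main (though mild) obstacle, since it is what rules out the possibility that relabeling the surviving positions consecutively could somehow stretch an edge. When $V(S) = V(G)$ the compression is the identity and the result is immediate, as the maximum over the smaller edge set $E(S)$ cannot exceed the maximum over $E(G)$.
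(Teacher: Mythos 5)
Your proof is correct and follows essentially the same route as the paper: the paper restricts the bandwidth-optimal enumeration of $G$ to the vertices of $S$ and compresses via the ordinal map $ord_U(v) = |\{u \in U \mid u \leq v\}|$, which is exactly your order-preserving compression $\pi$, and both arguments rest on the same non-expansiveness fact $|ord_U(u) - ord_U(v)| \leq |u - v|$. Your telescoping-sum justification of that fact is a nice explicit touch (the paper merely asserts it), but the approach is identical.
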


\vspace{-0.5cm}
\subsection{Background}
\label{sec:background}
\vspace{-0.1cm}

Let us overview the previous results from~\cite{avin2019self}.
In that work, both the demand and the host graph (network topology) were the line graph $L_n$ on $n$ vertices.
It was shown that there exists an algorithm that performs $O(n^2 \log n)$ migrations in total, while serving the requests themselves in $O(1)$. By that, if the number of requests is $\Omega(n^2 \log n)$ then each request has $O(1)$ amortized cost.

\begin{theorem}[Avin et al. \cite{avin2019self}]
\label{thm:line}
Consider a linear network $L_n$ and a linear demand graph. 
There is an algorithm such that the total time spent on migrations is $O(n^2 \log n)$, while each request is performed in $O(1)$ omitting the migrations.
\end{theorem}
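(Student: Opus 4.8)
The plan is to establish this by exhibiting an explicit online algorithm and bounding its total migration cost. The key conceptual point is that embedding a line demand graph $G$ into the line host $L_n$ is equivalent to sorting: a correct embedding must place the vertices of $G$ in the order they appear along the demand line (or its reverse), since $L_n$ has a unique path structure and every edge of $G$ must map to an adjacent pair in $N$. So the target configuration is essentially fixed up to reflection, and the task reduces to transforming the initial configuration into a sorted one using adjacent transpositions, each costing $1$. The subtlety is that $G$ is revealed online edge-by-edge, so the algorithm does not initially know the full target permutation and must act only on the partial information $E_i$.

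First I would describe the algorithmic idea: maintain, at each step, a partition of the vertices into contiguous ``chains'' corresponding to the connected path-fragments of the demand graph revealed so far, each chain already embedded as a contiguous correctly-ordered block on the line. When a new request $(u,v)$ arrives, it either lies inside an already-embedded chain (cost $O(1)$, no migration needed) or it merges two chains (or extends one), at which point the algorithm must splice the two blocks together so that the combined chain is contiguous and correctly ordered. The migration cost of a single merge is proportional to the number of swaps needed to move one block adjacent to the other and interleave their boundary, which can be as large as $\Theta(n)$ in the worst case for one operation. Serving the triggering request itself then costs $O(1)$ once the blocks are adjacent.

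Next I would bound the total migration cost by an amortized/potential argument. The natural potential is the total number of inversions between the current configuration and the final sorted target, i.e. $\sum$ over pairs of the disagreement in relative order; each adjacent swap changes this by one, so the total number of swaps ever performed is at most the number of inversions in the worst starting configuration, namely $O(n^2)$. To sharpen this to $O(n^2 \log n)$ — and to handle the online constraint that merges happen in an adversarial order — I would use a merge-sort-style accounting: the sequence of chain-merges forms a binary merging tree over the $n$ vertices, and the total work of merging is $\sum$ (sizes of merged blocks) summed over all merges, which is $O(n^2 \log n)$ when merges are balanced and still $O(n^2)$ amortized per the inversion bound, with the $\log n$ factor arising from the worst-case adversarial merge order (the standard list-update / merge-cost recurrence $T(n) = 2T(n/2) + O(n^2)$ style bound). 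The per-request routing cost of $O(1)$ follows directly because, after the merge that attaches its endpoints, every served edge connects line-adjacent vertices.

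The hard part will be the online merge order and making the $O(n^2 \log n)$ bound tight rather than the naive $O(n^3)$ one could get by re-sorting from scratch at each merge. The adversary controls which fragments are joined and in which sequence, so I would need to argue that no merge schedule forces more than $O(n^2 \log n)$ total swaps — likely by charging the cost of each merge to the smaller of the two blocks being combined (a ``union by size'' / heavy-path argument), so that each vertex is charged $O(n)$ per doubling of its containing block and hence $O(n \log n)$ in total, summing to $O(n^2 \log n)$ across all $n$ vertices. Verifying that this charging is valid regardless of the adversary's interleaving, and that serving never requires additional migrations beyond the merges, is the crux of the argument; since this is a cited prior result, I would reference \cite{avin2019self} for the detailed swap-scheduling analysis.
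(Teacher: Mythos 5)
Your final argument is the paper's own: maintain each revealed path-fragment as a contiguous, correctly ordered block on the line; when a request merges two fragments, move the smaller block next to the larger; charge each vertex $O(n)$ swaps per doubling of its containing block, hence $O(n \log n)$ per vertex and $O(n^2 \log n)$ in total, with $O(1)$ serving cost because merged chains remain contiguous. That union-by-size charging is exactly how the paper (and Avin et al.) obtains the bound, so your conclusion stands.

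However, the inversion-potential detour in your middle paragraph is genuinely wrong and should be removed, not merely ``sharpened.'' You claim that the total number of swaps ever performed is at most the number of inversions between the initial configuration and the final sorted target, i.e.\ $O(n^2)$. If that were true for the online algorithm, its total migration cost would be $O(n^2)$, contradicting the $\Omega(n^2 \log n)$ lower bound proved in the very paper you cite. The potential argument fails online precisely because the target permutation is not known until the demand line is fully revealed: the adversary can reveal edges so that a block, once placed, must later be moved back, so the algorithm's swaps do not monotonically decrease the inversion count relative to the eventual target --- some swaps are necessarily wasted, and that waste is where the extra $\log n$ factor comes from. A related slip: the recurrence you quote, $T(n) = 2T(n/2) + O(n^2)$, solves to $O(n^2)$, not $O(n^2 \log n)$; the correct accounting for a balanced merge schedule is $T(m) = 2T(m/2) + O(m \cdot n)$, since a merge of two blocks of size $m/2$ moves $m/2$ vertices a distance that can be $\Theta(n)$ regardless of $m$, and each of the $\log n$ levels then contributes $\Theta(n^2)$. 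None of this damages your result, because the union-by-size charging argument is self-contained and does not rely on the inversion bound, but as written the proof asserts two mutually inconsistent totals ($O(n^2)$ and $\Theta(n^2\log n)$) for the same quantity.
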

\vspace{-0.1cm}

We give an overview of this algorithm. At each moment in time, we know some subgraph of the line demand graph. For each new communication request, there are two cases: 1)~the edge from the demand graph is already known~--- then, we do nothing; 2) the new edge is revealed. In the second case, this edge connects two connected components. We just move the smallest component on the line network closer to the largest component. The move of each node in one reconfiguration does not exceed $n$. Since, the total number of reconfigurations in which the node participates does not exceed $\log n$, we have $O(n^2 \log n)$ upper bound on the algorithm. From \cite{avin2019self}, $\Omega(n^2 \log n)$ is also the lower bound on the total cost. Thus, the algorithm is asymptotically optimal in the terms of complexity. 

\begin{corollary}
If $|\sigma| = \Omega(n^2 \log n)$ the amortized service cost per request is $O(1)$.
\end{corollary}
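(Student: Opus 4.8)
The plan is to bound the total cost of serving $\sigma$ and then divide by the number of requests $m = |\sigma|$, i.e., to read the corollary as a straightforward amortization of Theorem~\ref{thm:line}. By that theorem, the algorithm spends a total of $O(n^2 \log n)$ on migrations over the entire execution, independently of how long $\sigma$ is, and it serves each individual request in $O(1)$ once the migrations are set aside. Summing these two contributions gives a total service cost of $O(n^2 \log n) + O(m) = O(n^2 \log n + m)$.

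First I would fix the meaning of amortized cost as the total cost divided by the number of requests, namely $\frac{1}{m}\bigl(O(n^2 \log n) + O(m)\bigr) = O\bigl(\frac{n^2 \log n}{m} + 1\bigr)$. The point to keep in mind is that the migration budget is a one-time cost that does not grow with $m$, so the whole question reduces to checking that spreading this fixed budget over all $m$ requests leaves only an $O(1)$ per-request share, while the request-serving term contributes $O(1)$ per request by construction.

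The final step is to invoke the hypothesis $m = \Omega(n^2 \log n)$, which means $m \ge c\, n^2 \log n$ for some constant $c > 0$ and all sufficiently large $n$, whence $\frac{n^2 \log n}{m} \le \frac{1}{c} = O(1)$. Substituting into the amortized bound yields $O(1)$ per request, as claimed. I do not expect a genuine obstacle here: the statement is a direct consequence of Theorem~\ref{thm:line}, and the only thing to be careful about is the bookkeeping, namely that the migration cost is charged once in total rather than once per request, so that it is the denominator $m$ that absorbs the $n^2 \log n$ factor.
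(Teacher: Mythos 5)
Your proof is correct and follows exactly the argument the paper intends: Theorem~\ref{thm:line} gives a one-time migration budget of $O(n^2 \log n)$ plus $O(1)$ per request, so the total cost is $O(n^2 \log n + m)$, and the hypothesis $m = \Omega(n^2 \log n)$ lets the denominator absorb the migration term, yielding $O(1)$ amortized cost. No gap here; this is precisely the paper's (implicit) reasoning stated immediately before the corollary.
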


The algorithms are not obliged to perform migrations at all, but the sum of costs for $\Theta(n^2)$ requests can be lower-bounded with $\Omega(n^2 \log n)$.

\begin{theorem}[Lower bound, Avin et al. \cite{avin2019self}]
For every online algorithm $ON$ there is a sequence of requests $\sigma_{ON}$ of length $\Theta(n^2)$ with the demand graph being a line, such that $cost(ON(\sigma_{ON})) = \Omega(n^2 \log n)$.
\end{theorem}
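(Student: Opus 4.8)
The plan is to construct, for each online algorithm $ON$, an adaptive adversarial request sequence $\sigma_{ON}$ whose underlying demand graph is a Hamiltonian path (a line), and to lower bound $ON$'s cost by a charging argument that is agnostic to whether $ON$ pays through serving distance or through migrations. The first ingredient is a simple ``either--or'' observation: if at some moment the endpoints $a,b$ of a demand edge lie at host distance $d$, then to serve that edge $t$ times $ON$ either keeps them far and pays $\ge t\cdot d$ in serving cost, or spends $\ge d-1$ migrations pulling them together. The adversary picks $t$ so that both options are expensive. To turn ``pulling two endpoints together'' into the far costlier ``moving a whole block,'' the adversary maintains a structural invariant, described next.

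The hard instance follows a balanced binary \emph{merge tree} with $n$ leaves. At level $r$ the revealed edges form $k=n/2^r$ vertex-disjoint sub-paths (``blocks'') of length $2^r$, which are the building blocks of the final line $L_n$. The adversary processes levels bottom-up. Within a level it repeatedly \emph{refreshes} every already-revealed edge (issuing each as a request once per round), so that to avoid unbounded serving cost $ON$ is forced to keep each current block embedded on a near-contiguous interval of host positions. Having then inspected $ON$'s configuration, the adversary reveals the $k/2$ new connecting edges defining the next level, choosing the block pairing adaptively as the \emph{outer} matching (leftmost block with rightmost, second-leftmost with second-rightmost, and so on), which guarantees that the paired blocks are pairwise far apart on the host line.

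For the accounting, fix a level with $k$ blocks of size $s=n/k$. Because each block must stay contiguous (by the refresh invariant) while its designated endpoint must become adjacent to the partner block's endpoint, $ON$ must effectively transport an entire block of $s$ nodes across the $\Theta(d)$ elements separating it from its partner, costing $\Omega(s\cdot d)$ migrations --- or pay a comparable serving cost if it declines to move. The outer matching makes the total separation $\sum_{\text{pairs}} d = \Omega(nk)$, so the forced cost at this level is $\Omega(s\cdot nk)=\Omega(n^2)$, independent of $r$. Summing over the $\log_2 n$ levels yields the claimed $\Omega(n^2\log n)$. A round of refreshing together with the new edges costs $O(n)$ requests, and $O(n)$ rounds per level suffice to dominate the migration alternative, so $|\sigma_{ON}|=O(n^2)$; padding makes it $\Theta(n^2)$ exactly.

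The main obstacle --- and the step deserving the most care --- is justifying that $ON$ genuinely cannot escape the $\Omega(s\cdot d)$ charge at each merge. A clever algorithm might let a block temporarily lose contiguity, move only the relevant endpoint, and repair the block later more cheaply, or it might ``pre-sort'' in anticipation of future merges. Ruling this out requires a potential-function argument that (i) quantifies, via the number of inversions between $ON$'s configuration and every contiguous block layout consistent with the revealed edges, the displacement that is truly unavoidable, and (ii) shows that the adaptively chosen outer matchings at successive levels are mutually incompatible, so that work done to cheaply serve one level cannot be amortized against the next. Establishing this incompatibility, rather than the per-level counting, is where the real content of the $\log n$ factor lies.
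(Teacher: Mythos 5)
This theorem is not proved in the paper at all: it is imported verbatim from \cite{avin2019self} as background, so there is no in-paper proof to compare against, and your proposal must stand on its own. Its skeleton (an adaptive merging adversary, refresh requests to enforce block integrity, an outer matching chosen from $ON$'s current configuration so that $\sum_i d_i = \Omega(nk)$ per level) is the right kind of construction, but the proposal has a genuine hole exactly where you flag it: the claim that $ON$ ``must effectively transport an entire block'' at cost $\Omega(s\cdot d)$ per merge is the entire content of the theorem, and it is deferred to an unspecified potential-function argument. Without it, $ON$ can merge a pair by moving only the two designated endpoint nodes at cost $O(d)$, accepting stretched intra-block edges; whether your refresh rounds make that strategy as expensive as transporting the block is precisely what must be proved, not assumed. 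I would also point out that your item (ii) --- the ``mutual incompatibility'' of matchings across levels --- is a red herring: since the adversary picks each level's matching adaptively from $ON$'s configuration at the start of that level, and each level's charge counts only cost incurred during that level, the charges are time-disjoint and sum with no further argument. All the real work is in your item (i).

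There is a second, quantitative gap: your request-count bookkeeping contradicts both itself and the theorem's requirement that $|\sigma_{ON}| = \Theta(n^2)$. For the either--or to force a per-pair cost of $\Omega(s\cdot d)$, the number $t$ of refresh rounds in a level must satisfy $t \geq s$, since (granting block integrity) the per-pair cost over a level is only $\min_{0\le d'\le d}\bigl(s(d-d')+t\,d'\bigr)$, which is at least $s\,d$ exactly when $t\ge s$. You instead take $O(n)$ rounds per level uniformly, each round costing $O(n)$ requests, over $\log n$ levels: that is $O(n^2\log n)$ requests, not the $O(n^2)$ you assert --- and on a sequence of length $\Theta(n^2\log n)$ a cost of $\Omega(n^2\log n)$ is the trivial amortized $\Omega(1)$ bound. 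The construction can be repaired by scaling the refresh effort with block size, e.g.\ $t_r = \Theta(2^r)$ rounds at level $r$, which gives $\sum_r 2^r\cdot O(n) = O(n^2)$ requests in total while preserving the per-pair bound at every level; but as written the stated sequence length does not follow from the construction.
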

\vspace{-0.2cm}

That implies $\Omega(\log n)$ optimality factor since any offline algorithm knowing the whole request sequence $\sigma$ in advance can simply reconfigure the network to match the (line) demand graph by paying $\Theta(n^2)$ in the worst case.

\vspace{-0.2cm}
\section{Summary of contributions}
\label{sec:results}
\vspace{-0.2cm}

In this work we present self-adjusting networks with a line topology for a demand graph that is either a cycle, or a $2\times n$ grid (ladder), or an arbitrary graph.
We study offline and online algorithms on how to best embed the demand graph on the line graph, such that the embedding cost is minimized.
The online case is more challenging, as the demand graph is revealed edge-by-edge and the embedding changes, with a cost.
The result for the cycle follows from~\cite{avin2019self} almost directly. However the result for the ladder is non-trivial and
requires new techniques; 
it is not simple to reconfigure a subgraph on a $2 \times n$ grid after revealing a new edge in order to get $O(n^2 \log n)$ cost of modifications in total.
We give an overview of each case below.

\vspace{-0.2cm}
\subsection{Cycle demand graph}
\vspace{-0.2cm}

We 
start with the following observation.
%
Let $C_n$ be a cycle graph on $n$ vertices, i.e., $E(C_n) = \{(1, 2), \ldots, (n - 1, n), (n, 1)\}$. Then, $\bandwidth(C_n) = 2$.
We give a brief description of how the algorithm works. We start with the same algorithm as for the line (Section \ref{sec:background}): while the number of revealed edges is not more than $n-1$, we can emulate the algorithm for the line. When the last edge appears we restructure the whole embedding in order to get bandwidth $2$, which is the cycle bandwidth.
For the whole restructuring using swaps, we pay no more than $O(n^2)$. This cost is less than the total time spent on the reconstruction $\Omega(n^2 \log n)$.

\vspace{-0.2cm}
\begin{theorem}\label{thm:cycle}
Suppose the demand graph is $C_n$. There is an algorithm such that the total cost spent on the migrations is $O(n^2 \log n)$ and each request is performed in $O(1)$.
In particular, if the number of requests is $\Omega(n^2 \log n)$ each request has $O(1)$ amortized cost.
\end{theorem}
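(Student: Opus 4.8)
The plan is to reduce the cycle almost entirely to the line algorithm of Theorem~\ref{thm:line} and to pay a single $O(n^2)$ restructuring cost at the very end. The starting observation is that a cycle $C_n$ has $n$ vertices and $n$ edges, and \emph{any} proper subset of its edges induces an acyclic subgraph, namely a linear forest (a disjoint union of arcs). Hence, as edges are revealed online, the revealed subgraph $E_i$ is a linear forest for every $i \le n-1$, and it becomes the full cycle precisely when the $n$-th distinct edge appears. In particular, every non-final revealed edge necessarily joins two \emph{distinct} connected components; this is exactly the merge operation that the line algorithm performs. So during the first $n-1$ distinct revelations I would run that algorithm verbatim, keeping each revealed arc as a contiguous block on $L_n$ laid out in its natural path order. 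Each newly revealed edge is then served at distance $1$ (after the merge migrations, which the model permits before a request), and by Theorem~\ref{thm:line} the total migration cost over this phase is $O(n^2\log n)$.

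When the last (the $n$-th distinct) edge arrives, the demand graph is the complete cycle, whose bandwidth is $2$ by the observation preceding the theorem. At that single moment I would reconfigure from the current embedding to a fixed bandwidth-$2$ embedding of $C_n$, e.g. the ``folded'' layout placing cycle vertices at line positions $1,3,5,\ldots$ along one side and $\ldots,6,4,2$ back along the other, so that every cyclically adjacent pair lands within distance $2$. Transforming any permutation of $n$ elements on a line into any other target permutation takes at most $\binom{n}{2}=O(n^2)$ adjacent swaps (bounded by the number of inversions), so this one-time restructuring costs $O(n^2)$ migrations. Afterwards every edge of $C_n$ is at host-distance at most $2$, so every subsequent request is served in $O(1)$.

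The accounting then closes immediately. The total migration cost is $O(n^2\log n)+O(n^2)=O(n^2\log n)$, while each request costs $O(1)$ throughout (distance $1$ during the line phase, distance at most $2$ after restructuring). For the ``in particular'' clause I would argue exactly as in the line corollary: distributing the $O(n^2\log n)$ migrations over a request sequence of length $\Omega(n^2\log n)$ yields $O(1)$ amortized migration cost per request, which together with the $O(1)$ service cost gives $O(1)$ amortized cost per request.

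The main point requiring care is verifying that the line algorithm transfers as a black box: I must confirm that its merge step and its invariant that every connected component is stored as a contiguous, correctly ordered block remain valid when the components are arcs of a cycle rather than sub-paths of a line, and that we need not commit in advance to which cycle edge will be the ``missing'' one. The latter ambiguity dissolves precisely because every non-final edge provably merges two distinct components, so the emulation is oblivious to the identity of the closing edge. The restructuring bound itself is routine (inversion counting), so I expect the black-box reduction to be the only delicate step.
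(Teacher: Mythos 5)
Your proposal is correct and follows essentially the same route as the paper: run the line algorithm of Theorem~\ref{thm:line} while the revealed edges still form a linear forest (the paper likewise relies on every non-final edge merging two components), then pay a one-time $O(n^2)$ restructuring into the folded bandwidth-$2$ enumeration (the paper's map $i \mapsto 2i-1$ for $i \le \lceil n/2\rceil$, $i \mapsto 2(n-i+1)$ otherwise, is exactly your layout), bounded by inversion counting. The cost accounting (each node moves $O(n)$ at most $O(\log n)$ times, requests served at distance at most $2$) matches the paper's as well.
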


\vspace{-0.2cm}
The full proof appears in Appendix~\ref{app:cycle}.
\vspace{-0.2cm}
\begin{remark2}
The lower bound with $\Omega(n^2 \log n)$ that was presented for a line demand graph still holds in the case of a cycle, since the cycle contains the line as the subgraph. Thus, our algorithm is optimal.
\end{remark2}

\vspace{-0.5cm}
\subsection{Ladder demand graph}
\vspace{-0.2cm}

Now, we state the main result of the paper~--- the algorithm for the case when the demand graph is an ladder.
\begin{theorem}\label{thm:grid request graph}
Suppose a demand graph is an ladder. There is an algorithm such that the total cost spent on the migrations is $O(n^2 \log n)$ and each request is performed in $O(1)$.
In particular, if the number of requests is $\Omega(n^2 \log n)$ each request has $O(1)$ amortized cost.
\end{theorem}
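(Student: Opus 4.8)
The plan is to lift the ``drag the smaller component'' strategy used for the line (Section~\ref{sec:background}) and the cycle (Theorem~\ref{thm:cycle}) to the ladder, while maintaining the invariant that every connected component of the revealed demand graph $G_i=(V,E_i)$ is kept in a canonical embedding of constant bandwidth. The constant-bandwidth invariant immediately yields the $O(1)$ service cost per request, so the whole difficulty is concentrated in (i)~exhibiting such a canonical embedding for \emph{every} connected subgraph of the ladder, and (ii)~bounding the total number of migrations by $O(n^2\log n)$.

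First I would describe the shape of a connected subgraph $S$ of the ladder. Because two occupied columns at distance two cannot be joined without the column between them, the occupied columns of $S$ form a contiguous block; in a central region both rails are present and tied together (the $\trunkCore{S}$), while the two ends may degenerate into single-rail paths hanging off it. Accordingly I would split $S$ into $\trunkCore{S}$, embedded by the bandwidth-$2$ level-by-level enumeration of Lemma~\ref{lem:gridbandwidth} (via the $\level{\cdot}$ numbering), and the two appendages, each laid out as a contiguous segment continuing the line from the corresponding end of the trunk. A short case check over the possible end-shapes shows this embedding has bandwidth at most a fixed constant, establishing~(i).

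For the inductive step I maintain that after $i$ requests each component of $G_i$ is in its canonical embedding, components occupy disjoint contiguous intervals of $L_n$, and trunks and appendages have a fixed orientation so that two canonical shapes can be concatenated cheaply. When $\sigma_i$ arrives there are four cases: the edge is already in $E_{i-1}$ (do nothing, pay $O(1)$); it extends an appendage or attaches a pendant vertex inside one component; it is a rung that closes a $4$-cycle and thereby lengthens the $\trunkCore{\cdot}$; or it joins two distinct components. In the last case I drag the smaller component (by vertex count) to the larger one and re-canonicalize the union, exactly as on the line; since the union is again a ladder subgraph it again admits the canonical form, and only the moved component's nodes together with an $O(1)$-length boundary of the larger one change position.

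The place where the ladder genuinely departs from the line and the cycle, and the main obstacle, is bounding the cost of the purely \emph{internal} edges --- the square-closing rungs and the appendage-to-trunk conversions they trigger. Re-embedding a newly absorbed appendage level-by-level can move $\Theta(n)$ nodes, and a careless accounting would charge this to one component $\Omega(n)$ times. I would resolve this with two separate charging arguments. Component merges are handled by the weighted-union bound: a node lies on the smaller side of a merge at most $O(\log n)$ times, each such relocation costing $O(n)$, for $O(n^2\log n)$ in total. The internal restructurings are amortized against the monotonicity of $\trunkCore{\cdot}$: a column, once part of a trunk, stays trunk across all later extensions and merges, so each column triggers at most one appendage-to-trunk conversion over the whole execution, contributing $O(n)$ relocation cost, i.e. $O(n^2)$ overall. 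Adding the two bounds gives total migration cost $O(n^2\log n)$; together with the constant service cost this yields $O(1)$ amortized cost per request once $|\sigma|=\Omega(n^2\log n)$, matching the $\Omega(n^2\log n)$ lower bound of Section~\ref{sec:background}.
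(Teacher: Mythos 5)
Your skeleton coincides with the paper's: keep each revealed component in a constant-bandwidth canonical embedding (the paper's quasi-correct embedding into $Ladder_N$ with at most three vertices per level, read off level-by-level onto the line, Lemma~\ref{quasi const cost}), merge smaller-into-larger for the $O(n^2\log n)$ term, and amortize internal restructurings so that each node pays $O(n)$ only a bounded number of times. However, both pillars of your argument have genuine gaps, and they sit exactly where the paper does its heavy lifting. Your structural claim (i) is false as stated: a connected subgraph of the ladder is not ``a doubly-railed core with two single-rail end appendages''. Branches (the paper's inner simple-graphs) can hang off the trunk core at \emph{any} interior column --- for instance a row-1 path carrying a row-2 path attached by a single interior rung, whose two hands extend left and right --- and several such branches can compete for the same levels. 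Choosing their orientations so that no level is overloaded is precisely what the paper's Septum invariant (Invariant~\ref{invariant:septum}) and its one-handed/two-handed tree-embedding strategy are for; a ``short case check over the possible end-shapes'' has no counterpart to this, so the existence of your canonical embedding is unproven.

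Second, your case analysis of internal edges is too narrow: an edge inside one component need not close a $4$-cycle. If the component is a long ``U'' (two rails joined at only one end), the rung at the far end closes a cycle of length $\Theta(n)$; the paper handles this by extracting the completed frame of the new maximal cycle (frames, whiskers, Lemma~\ref{adjacent whiskers}), re-embedding it, and re-attaching the two residual pieces in their old relative order. Your charging rule (a column converts to trunk at most once) parallels the paper's ``first time on a cycle'' scenario and is the right idea, but it ignores nodes that are \emph{already} canonically placed and still must move: rotating a pre-existing cycle inside the enlarged frame, reorienting the inner simple-graphs adjacent to the frame, and re-embedding the exit-graphs of an end tree when a merge attaches there. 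In particular, your claim that in a merge only the dragged component plus an ``$O(1)$-length boundary'' of the larger one moves is false in the exit-graph case: the paper re-runs the static embedding on that whole end tree and pays for it with the once-per-node ``no more an exit-graph'' scenario. These adjustment costs need their own once-per-node or $O(1)$-per-event bounds (the paper's scenario accounting in Appendix~\ref{app:cost}) before the $O(n^2\log n)$ total follows.
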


We provide a brief description of the algorithm.
%
We say that an ladder has $n$ levels from left to right: i.e., the nodes $(1, x)$ and $(2, x)$ are on the same level $x$ (see Figure~\ref{fig:ladder_numeration}).
On a high-level, we use the same algorithmic approach as in Theorem~\ref{thm:line} for the line demand graph. The main difference is that instead of embedding the demand graph right away onto the line network, at first, we ``quasi-embed'' the graph onto the $2n$-ladder graph, which then we embed onto the line.
By ``quasi-embedding'' we mean a relaxation of the embedding defined earlier: at most \textbf{three} vertices of the demand graph are mapped on each level of the ladder.

Suppose for a moment that we have a dynamic algorithm that quasi-embeds the graph onto the $2n$-ladder.
Given this quasi-embedding we can then embed the $2n$-ladder onto the line $L_n$. We consequently go through from level $1$ to level $2n$ of our ladder and map (at most three) vertices from the level to the line in some order (see Theorem~\ref{lem:gridbandwidth}). Such a transformation from the ladder to the line costs only a constant multiplication overhead.

We explain briefly how to design a dynamic quasi-embedding algorithm with the desired complexity.
At first, we present a static quasi-embedding algorithm, i.e., we are given a subgraph of the ladder and we need to quasi-embed it. This algorithm consists of three parts: embed a tree, embed a cycle, embed everything together. To embed a tree we find a special path in it, named trunk. We embed this trunk from left to right: one vertex per level. All the subgraphs connected to trunk are pretty simple and can be easily quasi-embedded in parallel to the trunk (see Figure~\ref{fig:quasi-embedding-lite}).
To embed a cycle we just have to decide which orientation it should have. To simplify the algorithm we embed only the cycles of length at least $6$, omitting the cycles of length $4$. This decision increases the multiplicative constant of the cost. Finally, we embed the whole graph: we construct its cycle-tree decomposition and embed cycles and trees one by one from left to right.

Now, we give a high-level description of our dynamic algorithm. We maintain the invariant that all the components are quasi-embedded. When an already served request appears, we do nothing. The complication comes from a newly revealed edge-request. There are two cases. The first one is when the edge connects nodes in the same component~--- thus, there is a cycle. We redo only the part of the quasi-embedding of the component around the new cycle; the rest of the component remains. In the second case, the edge connects two components. We move the smaller component to the bigger one as in Theorem~\ref{thm:line}. The bigger component does not move and we redo the quasi-embedding of the smaller one.

\begin{wrapfigure}[6]{r}{0.6\textwidth}
\vspace{-1cm}
  \begin{center}
    \includegraphics[scale=0.3]{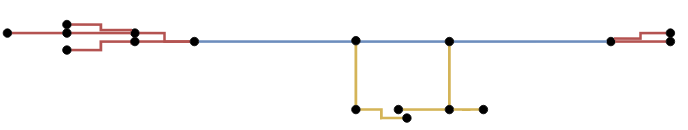}
  \end{center}
\caption{Quasi-correct embedding of a tree}
\label{fig:quasi-embedding-lite}
\vspace{2cm}
\end{wrapfigure}
Now, we briefly calculate the complexity of the dynamic algorithm. For the requests of the first case, if the nodes are on the cycle for the first time (this event happens only once for each node), we pay $O(n)$ for it. 
Otherwise, there are already nodes in the cycle.
In this case we make sure to re-embed the existing cycle in a way that all the nodes are moved for a $O(1)$ distance. As for the neighboring nodes, it can be shown that each node is moved only once as a part of the cycle neighborhood, so we also bound this movement with $O(n)$ cost. This gives us $O(n^2)$ complexity in total~--- each node is moved by at most $O(n)$. For the requests of the second case, we always move the smaller component and, thus, we pay $O(n^2 \log n)$ in total: each node can be moved by $O(n)$ at most $O(\log n)$ times, i.e., any node can be at most $\log n$ times in the ``smaller'' component. Our algorithm matches the lower bound, since the ladder contains $L_n$ as a subgraph.

\vspace{-0.2cm}
\subsection{General graph}
We finish the list of contributions with a general result; the case where the demand graph is an arbitrary graph $G$.
The full proofs are available at Appendix~\ref{app:general}. 



\vspace{-0.1cm}
\begin{theorem}\label{thm:arbitrary-graph}
Suppose we are given a (demand) graph $G$ and an algorithm $B$, that for any subgraph $S$ of $G$ outputs an embedding $c \in C_{S \rightarrow L_{|V(G)|}}$ with  bandwidth less than or equal to $\lambda \cdot \bandwidth(G)$ for some $\lambda$. 
Then, for any sequence of requests $\sigma$ with a demand graph $G$ there is an algorithm that serves $\sigma$ with a total cost of $O(|E(G)| \cdot |V(G)|^2 + \lambda \cdot \bandwidth(G) \cdot |\sigma|)$.
In particular, if the number of requests is $\Omega(|E(G)| \cdot |V(G)|^2)$ each request has $O(\lambda \cdot \bandwidth(G))$ amortized cost.
\end{theorem}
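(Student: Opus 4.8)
The plan is to keep, at all times, an embedding of the currently revealed subgraph produced by the oracle $B$, and to split the total cost into a \emph{reconfiguration} part, paid only when a genuinely new edge is revealed, and a \emph{serving} part, paid on every request but always small thanks to the bandwidth guarantee of $B$. Write $n = |V(G)|$ and let $S_i = (V(G), E_i)$ denote the subgraph formed by the edges revealed after the first $i$ requests.

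First I would specify the algorithm. Before serving $\sigma_i = (u,v)$ we test whether $(u,v) \in E_{i-1}$. If the edge is already known we perform no migrations and serve it in the current configuration. If $(u,v)$ is new we form $S_i$ by adding this edge, invoke $B$ on $S_i$ to obtain a target configuration $c_i \in C_{S_i \rightarrow L_n}$ of bandwidth at most $\lambda \cdot \bandwidth(G)$, reconfigure the line from the current configuration into $c_i$ by adjacent swaps, and only then serve the request. Since migrations are permitted before serving a request (per the working model), every request is served in a configuration of bandwidth at most $\lambda \cdot \bandwidth(G)$, hence at cost at most $\lambda \cdot \bandwidth(G)$.

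Next I would bound the two cost components separately. For serving: there are $|\sigma|$ requests, each costing at most $\lambda \cdot \bandwidth(G)$, giving $O(\lambda \cdot \bandwidth(G) \cdot |\sigma|)$. For reconfiguration: the oracle is called, and swaps are performed, only when a new edge is revealed, which happens at most $|E(G)|$ times over the whole sequence. To bound a single reconfiguration, observe that both the old and the new configuration are injections of the $n$ vertices onto the line, i.e.\ permutations of $\{1, \ldots, n\}$; converting one permutation into another by swaps of adjacent elements costs exactly the number of inversions between them, which is at most $\binom{n}{2} = O(n^2)$. Hence each reconfiguration costs $O(|V(G)|^2)$, the total reconfiguration cost is $O(|E(G)| \cdot |V(G)|^2)$, and summing yields the claimed $O(|E(G)| \cdot |V(G)|^2 + \lambda \cdot \bandwidth(G) \cdot |\sigma|)$. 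The amortized statement follows by dividing by $|\sigma|$ once $|\sigma| = \Omega(|E(G)| \cdot |V(G)|^2)$.

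The only genuinely load-bearing step is the $O(n^2)$ bound on a single reconfiguration: because $B$ is a black box, two consecutive target configurations may be arbitrarily unrelated, so one cannot hope to reuse the previous layout. The saving grace is that the worst case is still only $O(n^2)$ adjacent swaps for \emph{any} pair of permutations on $n$ elements, and that a fresh configuration is computed at most once per edge of $G$ rather than once per request; everything else is bookkeeping. I would also note that the bound is insensitive to how $B$ is implemented internally — only its bandwidth guarantee and the fact that it outputs a genuine configuration (an injection of the vertices onto the line) are used.
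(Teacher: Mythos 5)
Your proposal is correct and follows essentially the same route as the paper's proof: maintain an oracle-produced configuration, rebuild it from scratch only when a new edge is revealed (at most $|E(G)|$ times, $O(|V(G)|^2)$ swaps each), and serve every request within the bandwidth guarantee $\lambda \cdot \bandwidth(G)$. Your inversion-counting justification of the $O(|V(G)|^2)$ per-reconfiguration bound is in fact slightly more careful than the paper's bare assertion in this proof (the paper proves that fact separately, in its Lemma~\ref{relative order cost}).
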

\vspace{-0.1cm}

Here we give a brief description of the algorithm. Suppose that the current configuration $c_i$ is the embedding of the current demand graph $G_i$ onto $L_{|V(G)|}$ after $i$ requests. Now, we need to serve a new request in $\lambda \cdot \bandwidth(G_i) \leq \lambda \cdot \bandwidth(G)$. If the corresponding edge already exists in the demand graph, we simply serve the request without the reconfiguration.
Now, suppose the request reveals a new edge and we get the demand graph $G_{i+1}$. Using the algorithm $B$ we get the configuration (embedding) $c_{i+1}$ that has $\lambda \cdot \bandwidth(G_{i+1}) \leq \lambda \cdot \bandwidth(G)$. To serve the request fast, we should rebuild the configuration $c_i$ into the configuration $c_{i+1}$. By using the swap operations on the line we can get from $c_i$ to $c_{i+1}$ in $O(|V(G)|^2)$ operations: each vertex moves by at most $V(G)$. After the reconfiguration we can serve the request with the desired cost.

A new edge appears at most $|E(G)|$ times while the reconfiguration costs $|V(G)|^2$. Each request is served in $\lambda \cdot \bandwidth(G)$. Thus, the total cost of requests $\sigma$ is $O(|E(G)| \cdot |V(G)|^2 + \lambda \cdot \bandwidth(G) \cdot |\sigma|)$.


\vspace{-0.1cm}
\begin{lemma}
\label{lem:lower-bound:arbitrary-graph}
Given a demand graph $G$. For each online algorithm $ON$ there is a request sequence $\sigma_{ON}$ such that $ON$ serves each request from $\sigma_{ON}$ for a cost of at least $\bandwidth(G)$.
\end{lemma}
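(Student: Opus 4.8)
The plan is to use an adaptive online adversary that, at every step, probes the algorithm at the weakest edge of its current embedding. The single fact driving the whole argument is the definition of bandwidth: since $\bandwidth(G)$ is the \emph{minimum} over all configurations $c \in C_{G \rightarrow L_n}$ of $\max_{(u,v) \in E(G)} |c(u) - c(v)|$, every configuration $c$ that the algorithm can possibly hold must satisfy $\max_{(u,v)\in E(G)} |c(u)-c(v)| \ge \bandwidth(G)$. In words, no embedding of $G$ into the line $L_{|V(G)|}$ can keep all edges stretched below the bandwidth; in any configuration there is always at least one edge stretched to distance $\ge \bandwidth(G)$.

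First I would fix an arbitrary online algorithm $ON$ and build $\sigma_{ON}$ one request at a time. Before issuing the $i$-th request, I inspect the configuration $c$ that $ON$ currently holds; by the fact above I select an edge $(u,v) \in E(G)$ with $|c(u)-c(v)| \ge \bandwidth(G)$ and set $\sigma_i = (u,v)$. Since $(u,v)$ is an edge of the demand graph, it is a legal request, and serving it in $c$ costs exactly $|c(u)-c(v)| \ge \bandwidth(G)$. Iterating this probe gives a sequence each of whose requests is served at cost at least $\bandwidth(G)$, as claimed. The construction is purely adaptive and needs no structural assumption on $G$ beyond the definition of its bandwidth, so it applies uniformly to the cycle, the ladder, and the general case.

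The main obstacle is the timing of migrations relative to serving. One must make precise that the serving cost is evaluated in the configuration in force when the request is issued: the adversary reads $ON$'s state and names the worst edge for \emph{that} state, so the cost $|c(u)-c(v)|$ is charged in $c$. Any reconfiguration the algorithm subsequently performs merely produces a new configuration $c'$, to which the same bandwidth fact reapplies, again exhibiting an edge of stretch $\ge \bandwidth(G)$; hence the invariant ``the current configuration has an edge of stretch $\ge \bandwidth(G)$'' is re-established after every migration and the adversary can keep hitting it indefinitely. I would therefore state explicitly at the start of the proof that each request is served in the configuration present at its arrival, which is precisely the regime in which the per-request lower bound of $\bandwidth(G)$ is meaningful and in which it tightly complements the $O(\lambda\cdot\bandwidth(G))$ amortized upper bound of Theorem~\ref{thm:arbitrary-graph}.
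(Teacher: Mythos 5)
Your proposal is correct and matches the paper's proof essentially verbatim: an adaptive adversary inspects the algorithm's current configuration, uses the definition of bandwidth to find an edge of stretch at least $\bandwidth(G)$, and issues it as the next request. The only difference is cosmetic: where you pin the serving cost to the arrival configuration by fiat, one can instead note that under the paper's model (which permits migrations before serving) the bound still holds, since every swap that shortens the requested edge costs one unit of reconfiguration per unit of stretch saved, so reconfiguration plus service still totals at least $\bandwidth(G)$.
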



\vspace{-0.2cm}



\vspace{-0.2cm}
\section{Embedding a ladder demand graph}
\label{sec:algorithm}

We present our algorithms for embedding a demand graph that is a subgraph of the ladder graph ($2\times n$-grid) on the line graph. We first present the offline case, where the demand graph is known in advance (Section \ref{sec:static}). Then we present the dynamic case, where requests are revealed online, revealing also the demand graph and thus possibly changing the current embedding (Section \ref{sec:dynamic}).
Finally, we discuss the cost of the dynamic case in Section \ref{sec:cost}.

Though our final goal is to embed a demand graph into the line, we will first focus on how to embed a partially-known demand graph into $Ladder_N$, where $N$ is large enough to make the embedding possible, i.e., not more then $2n$. When we have such an embedding one might construct an embedding from $Ladder_N$ into $Line_n$, simply composing it with a level by level (see the proof of Lemma \ref{lem:gridbandwidth}) embedding of $Ladder_N$ to $Line_{2N}$ and then by omitting empty images we get $Line_n$. 
Such a mapping of $Ladder_N$ to $Line_{2N}$ enlarges the bandwidth for at most a factor of 2,  
%
but significantly simplifies the construction of our embedding.



\vspace{-0.1cm}

\begin{definition}
An ladder graph $l$ consists of two line-graphs on $n$ vertices $l_1$ and $l_2$ with additional edges between the lines: $\{(l_1[i], l_2[i]) \mid i \in [n]\}$, where $l_j[i]$ is the $i$-th node of the line-graph $l_j$.
%
%
We call the set of two vertices, $\{l_1[i], l_2[i]\}$, the $i$-th level of the ladder and denote it as $level_{Ladder_n}(i)$ or just $level(i)$ if it is clear from the context. We refer to $l_1[i]$ and $l_2[i]$ as $level(i)[1]$ and $level(i)[2]$, respectively.
We say that $level\langle v \rangle = i$ for $v \in V(Ladder_n)$ if $v \in level_{Ladder_n}(i)$.
We refer to $l_1$ and $l_2$ as the sides of the ladder.
%
\end{definition}

\vspace{-0.3cm}
\subsection{Static quasi-embedding}
\label{sec:static}

We start with one of the basic algorithms~--- how to quasi-embed on $Ladder_N$ with large $N$ any graph that can be embedded in $Ladder_n$. We present a tree and cycle embedding and then we show how to to combine them in embedding a general component (by first doing a cycle-tree decomposition). The whole algorithm is presented in Appendix~\ref{app:static}.

\vspace{-0.4cm}
\subsubsection{Tree embedding}
In this case, our task is to embed a tree on a ladder graph. We start with some definitions and basic lemmas.

\vspace{-0.1cm}
\begin{definition}
\label{def:trunk}
Consider some correct embedding $\varphi$ of a tree $T$ into $Ladder_n$. Let $r = \arg\max\limits_{v \in V(T)} level\langle \varphi(v)\rangle$ and
$l = \arg\min\limits_{v \in V(T)} level\langle \varphi(v)\rangle $ be the ``rightmost'' and ``leftmost'' nodes of the embedding, respectively. The trunk of $T$ is a path in $T$ connecting $l$ and $r$. The trunk of a tree $T$ for the embedding $\varphi$ is denoted with $trunk_\varphi(T)$.
\end{definition}

\vspace{-0.2cm}
\begin{definition}
\label{def:occupied}
Let $T$ be a tree and $\varphi$ be its correct embedding into $Ladder_n$.
The level $i$ of $Ladder_n$ is called \emph{occupied} if there is a vertex $v \in V(T)$ on that level, i.e., $\varphi(v) \in level_{Ladder_n}(i)$.
\end{definition}

\vspace{-0.4cm}
\begin{statement}\label{trunk each level}
For every occupied level $i$ there is $v \in trunk_\varphi(T)$ such that $v \in level(i)$. 
\end{statement}
\begin{proof}
By the definition of the trunk, an image goes from the minimal occupied level to the maximal. It cannot skip a level since the trunk is connected and the correct embedding preserves connectivity.
\end{proof}

The trunk of a tree in an embedding is a useful concept to define since the following hold for it. 
The proofs for the lemmas in this section appear in Appendix \ref{app:proofs}.

\begin{lemma}\label{lem:left line-graphs}
Let $T$ be a tree correctly embedded into $Ladder_n$ by some embedding $\varphi$. Then, all the connected components in $T \,\setminus\, trunk_\varphi(T)$ are line-graphs.  
\end{lemma}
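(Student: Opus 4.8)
The plan is to prove that each connected component $K$ of $T \setminus trunk_\varphi(T)$ is a path. Since $T$ is a tree and $trunk_\varphi(T)$ is a path (a connected vertex subset), deleting these vertices leaves a forest, so every such $K$ is itself a tree; hence it suffices to show that every vertex of $K$ has degree at most $2$ inside $K$, because a connected tree with maximum degree at most $2$ is a line-graph. So I would reduce the whole claim to a local degree bound.

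The heart of the argument is a structural observation about the vertical neighbour in the ladder. I would fix a vertex $v \in K$ and set $i = level\langle \varphi(v) \rangle$. In $Ladder_n$ the image $\varphi(v)$ has at most three neighbours: the two horizontal ones on levels $i-1$ and $i+1$ on the same side, and the single vertical one, namely the other vertex of level $i$. Since $\varphi$ is a correct embedding, every $T$-neighbour of $v$ maps to one of these three positions, so $\deg_T(v) \le 3$. Next I would invoke Statement~\ref{trunk each level}: level $i$ is occupied (it contains $\varphi(v)$), so it carries a trunk vertex $t$ with $level\langle \varphi(t)\rangle = i$. Because $v \notin trunk_\varphi(T)$ while $t \in trunk_\varphi(T)$, we have $v \neq t$; and since level $i$ offers only the two positions $level(i)[1]$ and $level(i)[2]$, the trunk vertex $t$ must occupy exactly the position not taken by $\varphi(v)$. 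In other words, $\varphi(t)$ is precisely the vertical neighbour of $\varphi(v)$.

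Consequently, if $v$ has a vertical neighbour in $T$ at all, that neighbour is forced to be the trunk vertex $t$, which is removed when we pass from $T$ to $T \setminus trunk_\varphi(T)$. Thus inside $K$ the vertex $v$ can retain only its (at most two) horizontal neighbours, giving $\deg_K(v) \le 2$. Applying this to every vertex of $K$ yields a tree of maximum degree at most $2$, i.e. a line-graph, which completes the proof. The step I expect to be the main obstacle is pinning down the vertical-neighbour identification cleanly: it rests on combining the pigeonhole fact that a level holds only two ladder-vertices with Statement~\ref{trunk each level}, and one must argue carefully that $v$ lying off the trunk is exactly what forces the trunk vertex to sit on $v$'s vertical partner rather than anywhere else.
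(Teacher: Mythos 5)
Your proof is correct and rests on exactly the same two ingredients as the paper's: Statement~\ref{trunk each level} (the trunk meets every occupied level) together with the pigeonhole fact that a level has only two ladder positions. The only difference is presentational~--- you argue directly that every off-trunk vertex has its vertical partner occupied by a trunk vertex and hence degree at most~$2$ in its component, whereas the paper runs the same observation as a contradiction at a hypothetical degree-three node.
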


\begin{lemma}\label{trunk certain nodes}
For the tree $T$ and for each node $v$ of degree three (except for maximum two of them) we can verify in polynomial time if for any correct embedding $\varphi$, $trunk_\varphi(T)$ passes through $v$ or not. 
\end{lemma}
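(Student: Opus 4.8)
The plan is to reduce the question to a purely combinatorial separation property of $T$ and then exploit Lemma~\ref{lem:left line-graphs} to make it embedding-independent. First I would observe that, by Definition~\ref{def:trunk}, $trunk_\varphi(T)$ is exactly the unique path in $T$ joining the level-minimal node $l$ and the level-maximal node $r$ of the embedding $\varphi$. Hence $trunk_\varphi(T)$ passes through a vertex $v$ if and only if $v$ separates $l$ from $r$ in $T$, i.e. $l$ and $r$ lie in two different components of $T - v$. So for a degree-three node $v$, whose removal splits $T$ into exactly three subtrees $T_1,T_2,T_3$, being on the trunk is equivalent to $l$ and $r$ landing in two distinct $T_i$'s.

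Next I would extract an embedding-independent criterion from Lemma~\ref{lem:left line-graphs}. If $v$ is a degree-three node that is off the trunk for some $\varphi$, then $v$ lies in a component $P$ of $T \setminus trunk_\varphi(T)$, and $P$ is a line-graph. Since $T$ is a tree, $P$ is joined to the trunk by a single edge; as $\deg_T v = 3$ and $\deg_P v \le 2$, that attaching edge must be incident to $v$ and $\deg_P v = 2$. Consequently the two ``arms'' of $P$ at $v$ are themselves paths, so two of the three subtrees $T_1,T_2,T_3$ are line-graphs. Taking the contrapositive yields a clean, embedding-independent sufficient condition for being on the trunk: if at least two of $T_1,T_2,T_3$ contain a branching vertex (a vertex of degree three in $T$), then $v$ lies on $trunk_\varphi(T)$ for every correct $\varphi$. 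This criterion is clearly checkable in polynomial time, since it only requires counting degree-three vertices inside each subtree.

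It remains to treat the nodes $v$ for which at least two incident subtrees are paths; these are the only candidates that can be off some trunk. Here I would argue that all correct trunks share a common ``core'' and can differ only near their two tips, so that the status ``on the trunk'' is still constant except at a bounded number of nodes. Concretely, I would show that if two of $v$'s path-subtrees are both long enough that neither can be placed entirely in the free slots alongside the spine (using the width-two constraint of $Ladder_n$), then both must reach toward opposite level-extremes and $v$ is again forced onto every trunk; conversely, if at least two of them are short enough to hang, $v$ is off every trunk. The genuinely ambiguous vertices are those sitting at the transition between ``must carry the span'' and ``can hang,'' and I would bound their number by associating each such vertex with one of the two ends $l$ or $r$ of the trunk, giving at most two exceptions in total.

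The main obstacle I anticipate is the last paragraph: turning the intuitive ``trunks share a core and vary only at the two tips'' into a rigorous bound of exactly two exceptional degree-three nodes. This needs a careful quantitative handle on how long a hanging line-graph can be inside the width-two strip before it is forced to become part of the spine, together with an argument that the two trunk endpoints are the only sources of freedom. The separation reformulation and the contrapositive of Lemma~\ref{lem:left line-graphs} are routine; the width-two packing argument that localizes the ambiguity to the two ends is where the real work lies.
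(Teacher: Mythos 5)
Your first two paragraphs are correct: the reformulation of membership in $trunk_\varphi(T)$ as ``$v$ separates $l$ from $r$'', and the contrapositive of Lemma~\ref{lem:left line-graphs} showing that a degree-three node $v$ is forced onto every trunk as soon as two of the three components of $T-v$ contain branching vertices, are both valid; this recovers, by a somewhat cleaner route than the paper's, the forced status of what the paper calls support nodes of the second kind (nodes lying on a path between two degree-three nodes). The genuine gap is your third paragraph, and it is not merely the missing quantitative work you acknowledge: the length-based dichotomy you propose is false. Concretely, take the path $a_1,\ldots,a_{10}$ with one extra leaf $b$ attached to $a_8$. The components of $T-a_8$ are a long path, the two-vertex path $\{a_9,a_{10}\}$, and $\{b\}$, so two of them are as short as one could wish and can indeed ``hang'' beside a horizontal embedding of the long path; nevertheless $a_8$ lies on the trunk of \emph{every} correct embedding (if the trunk avoided $a_8$ it would be confined to a single component of $T-a_8$, yet by Statement~\ref{trunk each level} it must visit every occupied level, and a two-slots-per-level case analysis around the image of $a_8$ rules this out). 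The same phenomenon can be replicated at arbitrarily many nodes (e.g., a spider whose legs each carry one degree-three node with a short pendant path), so these misclassifications cannot be absorbed by the ``at most two exceptions'' allowance.

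The missing idea is the paper's Statement~\ref{pass through three node}, a purely local width-two argument that is adjacency-based rather than length-based: a degree-three node with no degree-three neighbour (``single'') lies on every trunk regardless of how short its arms are --- the trunk must cross its level, and the only alternative slot on that level is occupied by a neighbour of degree at most two, which cannot carry the trunk past that level --- and of two adjacent degree-three nodes (``paired'') at least one lies on every trunk. With this in hand the classification is finished by tree combinatorics: the forced nodes (the single nodes together with your paragraph-two nodes) form a path, the trunk core, which every trunk must contain; the pair-partner of an \emph{interior} node of this path lies on \emph{no} trunk, since the trunk is a path and cannot contain the core plus an extra neighbour attached to an interior core node; and the only genuinely undecided nodes are the partners in the (at most two) pairs sitting at the two ends of the core, which is exactly where the ``except for maximum two'' in the statement comes from. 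Without the single/paired distinction and this local geometric lemma, the residual case --- which is the heart of the proof --- does not go through.
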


\emph{Support nodes} are the nodes of two types: either a node of degree three without neighbours of degree three or a node that is located on some path between two nodes with degree three.
The path through passing through all support nodes is called \emph{trunk core}. We denote this path for a tree $T$ as $trunkCore(T)$.
%
Intuitively, the trunk core consists of vertices that lie on a trunk of any embedding.
%
It can be proven that the support nodes appear in the trunk of every correct embedding (proof appears in the appendix).

\begin{definition}
\label{def:simple-graphs}
Let $T$ be a tree. All the connected components in $T\,\setminus\,trunkCore(T)$ are called \emph{simple-graphs} of tree $T$.
\end{definition}

\begin{lemma}\label{simple-graphs line-graphs}
The simple-graphs of a tree $T$ are line-graphs. 
\end{lemma}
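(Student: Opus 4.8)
The plan is to prove the statement directly from the degree structure of $T$, by showing that every branching vertex of $T$ is absorbed into the trunk core, so that each simple-graph is left with maximum degree at most two.

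First I would record a structural fact about $T$. Since $T$ admits a correct embedding into $Ladder_n$, and a correct embedding preserves edges while every vertex of $Ladder_n$ has degree at most three, every vertex of $T$ has degree at most three. Hence the only ``branching'' vertices of $T$ are those of degree exactly three, and it suffices to control where these sit.

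The crucial step is to show that \emph{every} degree-three vertex $v$ of $T$ is a support node, and therefore lies on $trunkCore(T)$. I would split into the two cases of the support-node definition. If $v$ has no neighbour of degree three, then $v$ is a support node of the first type directly. Otherwise $v$ has a degree-three neighbour $w$; then $v$ and $w$ are two distinct degree-three vertices and $v$ lies on the (in this case trivial) path in $T$ joining the two degree-three nodes $v$ and $w$, so $v$ is a support node of the second type. Either way $v$ belongs to $trunkCore(T)$. I expect this case analysis to be the main, if modest, obstacle, since it relies on reading the definition so that the endpoints of a path between two degree-three nodes count as ``located on'' that path; one also has to check the degenerate situations (a unique degree-three vertex, or none at all), but these collapse to the first type or to a tree with no branching vertices at all.

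Finally I would conclude the argument. Let $S$ be any connected component of $T \setminus trunkCore(T)$, i.e. a simple-graph. As a connected subgraph of the tree $T$, the graph $S$ is itself a tree. By the previous step $S$ contains no degree-three vertex of $T$, so every vertex of $S$ has degree at most two in $T$, and hence at most two in $S$; a tree of maximum degree at most two is a path, i.e. a line-graph. As an alternative route, one could instead fix any correct embedding $\varphi$ and invoke Lemma~\ref{lem:left line-graphs} together with the fact that $trunkCore(T) \subseteq trunk_\varphi(T)$, but this would additionally require controlling the portion of the trunk lying outside the trunk core, a complication that the direct degree argument above sidesteps.
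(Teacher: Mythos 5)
Your closing degree argument is fine, but the pivotal step --- that \emph{every} degree-three vertex of $T$ is a support node and hence lies on $trunkCore(T)$ --- is false under the definition the paper actually uses, so the proof has a genuine gap. The paper reads ``located on some path between two nodes with degree three'' strictly: the endpoints of the path do not count. This is not a pedantic choice; it is forced by the rest of the paper. The trunk core is meant to consist of vertices lying on the trunk of \emph{every} correct embedding, and for an adjacent pair of degree-three vertices $a,b$ in which $b$ separates $a$ from all other degree-three vertices, there exist correct embeddings whose trunk avoids $a$ entirely (the trunk can enter the level of $a$ through the subtree hanging off $b$ instead). Consistently with this, the paper's tree-embedding strategy explicitly treats a tree whose only degree-three vertices are one adjacent pair as having $|V(trunkCore(T))| \leq 1$; under your inclusive reading both members of the pair would be support nodes and the trunk core would have at least two vertices, contradicting that case analysis. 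So the claim ``every degree-three vertex belongs to $trunkCore(T)$'' is simply not available to you.

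The paper's own proof uses the correct, weaker statement: every degree-three vertex is either \emph{in} the trunk core or \emph{adjacent} to it. Indeed, for a pair $a,b$ as above with at least one further degree-three vertex $c$, the vertex $b$ lies strictly on the path between $a$ and $c$ and is therefore a support node, while $a$ is merely adjacent to it. Deleting the trunk core then strips such an $a$ of its trunk-core neighbour, so $a$ has degree at most two \emph{inside its simple-graph}; since simple-graphs are acyclic, they are paths. Note the difference from your conclusion: you needed degree at most two in $T$ itself, which would require all degree-three vertices to be deleted, whereas the correct argument only needs degree at most two in $T \setminus trunkCore(T)$. (Both arguments tacitly set aside the degenerate tree whose only degree-three vertices are a single adjacent pair, where the trunk core is empty; the paper handles such trees separately in its embedding strategy rather than through this lemma.)
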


\noindent
\begin{minipage}{0.7\textwidth}
\begin{definition}
\label{def:hands}
The edge between a simple-graph and the trunk core is called a \emph{leg}.
The end of a leg in the simple-graph is called a \emph{head} of the simple-graph.
The end of a leg in the trunk core is called a \emph{foot} of the simple-graph.

If you remove the head of a simple-graph and it falls apart into two connected components, such simple-graph is called \emph{two-handed} and those parts are called its \emph{hands}. Otherwise, the graph is called \emph{one-handed}, and the sole remaining component is called a \emph{hand}. If there are no nodes in the simple-graph but just a head it is called zero-handed.
\end{definition}
\end{minipage}
\hspace{0.01\textwidth}
\begin{minipage}{0.29\textwidth}
\centering
\includegraphics[width=\textwidth]{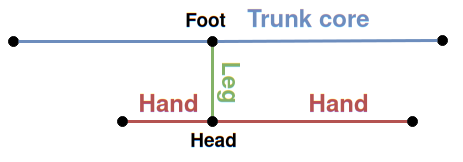}
\captionof{figure}{Hands, Legs, and Trunk core.}
\end{minipage}

\vspace{-0.1cm}

\begin{definition}
\label{def:exit}
\label{def:inner}
A simple-graph connected to some end node of the trunk core is called \emph{exit-graph}.
A simple-graph connected to an inner node of the trunk core is called \emph{inner-graph}.
\end{definition}

\vspace{-0.2cm}

Please note that the next definition is about a much larger ladder graph, $Ladder_N$, rather than $Ladder_n$. Here, $N$ is equal to $2n$ to make sure that we have enough space to embed.
\begin{definition}
\label{def:quasi-correct}
An embedding $\varphi: V(G) \rightarrow V(Ladder_N)$ of a graph $G$ into $Ladder_N$ is called \emph{quasi-correct} if:
\vspace{-0.2cm}
\begin{itemize}
    \item $(u, v) \in E(G) \Rightarrow (\varphi(u), \varphi(v)) \in E(Ladder_N)$, i.e., images of adjacent vertices in $G$ are adjacent in the grid.
    \item There are no more than \textbf{three} nodes mapped into each level of $Ladder_N$, i.e., the two grid nodes on each level are the images of no more than three nodes.
\end{itemize}
\end{definition}

\vspace{-0.2cm}

We can think of a quasi-correct embedding as an embedding into levels of the grid with no more than three nodes embedded to the same level. Then, we can compose this embedding with an embedding of a grid into the line which is the enumeration level by level. More formally if a node $u$ is embedded to level $i$ and a node $v$ is embedded to level $j$ and $i < j$ then the resulting number of $u$ on the line is smaller than the number of $v$, but if two nodes are embedded to the same level, we give no guarantee.

\vspace{-0.2cm}
\begin{lemma}\label{quasi const cost}
Any graph mapped into the ladder graph by the quasi-correct embedding described above can be mapped onto the line level by level with the property that any pair of adjacent nodes are embedded at the distance of at most five.
\end{lemma}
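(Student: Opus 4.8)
The plan is to discard the fine grid coordinates produced by the quasi-correct embedding and to argue entirely in terms of the level assigned to each vertex. For every vertex $v$ set $\ell(v) := level\langle \varphi(v)\rangle$, and record the two facts that follow immediately from Definition~\ref{def:quasi-correct}. First, since images of adjacent vertices are adjacent in $Ladder_N$, and adjacency in the ladder links only vertices on the same level or on two consecutive levels, every edge $(u,v)\in E(G)$ satisfies $|\ell(u)-\ell(v)|\le 1$. Second, the ``three per level'' condition gives $n_i := |\{v : \ell(v)=i\}| \le 3$ for every level $i$. These are the only two properties of the embedding the proof will use.

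Next I would make the level-by-level line embedding explicit so that positions can be compared. Let $S_i := \sum_{j<i} n_j$ be the number of vertices placed on strictly smaller levels; the level-by-level rule then assigns the vertices of level $i$ the consecutive line positions $\{S_i+1,\dots,S_i+n_i\}$ in some internal order that we do not control. Writing $pos(v)$ for the resulting line position, the entire claim reduces to bounding $|pos(u)-pos(v)|$ over all edges, using only that consecutive levels occupy adjacent blocks of integers, i.e. $S_{i+1}=S_i+n_i$.

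The estimate itself is then a short two-case argument on an edge $(u,v)$. If $\ell(u)=\ell(v)=i$, both positions lie in the length-$n_i$ block, so $|pos(u)-pos(v)|\le n_i-1\le 2$. If $\ell(u)=i$ and $\ell(v)=i+1$, the two positions lie in the adjacent blocks $\{S_i+1,\dots,S_i+n_i\}$ and $\{S_i+n_i+1,\dots,S_i+n_i+n_{i+1}\}$, so the largest possible gap is $(S_i+n_i+n_{i+1})-(S_i+1)=n_i+n_{i+1}-1\le 3+3-1=5$. Taking the maximum over the two cases yields the stated bound of $5$.

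I do not expect a genuine obstacle here: the content of the lemma is essentially bookkeeping about how consecutive level-blocks sit on the line. The one point deserving care, and the reason the constant is $5$ rather than the $3$ one would get for an ordinary (two-per-level) embedding, is that a level may carry three images; the worst case is precisely an edge crossing between a full level and an equally full neighbouring level. I would therefore highlight the tightness of the arithmetic $n_i+n_{i+1}-1$ and stress that the ordering within a level is adversarial, so the bound must be shown to hold uniformly over all internal orderings rather than for a convenient one.
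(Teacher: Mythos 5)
Your proof is correct and follows essentially the same route as the paper's: both arguments use only the two facts that adjacent vertices lie on levels differing by at most one and that each level carries at most three images, concluding that at most four nodes can separate the pair on the line. Your version merely makes the block-position bookkeeping (the bound $n_i + n_{i+1} - 1 \le 5$) explicit where the paper states it in one sentence.
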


\vspace{-0.2cm}
Assume, we are given a tree $T$ that can be embedded into $Ladder_n$. Furthermore, there are two special nodes in the tree: one is marked as \textit{R} (right) and another one is marked as \textit{L} (left). It is known that there exists a correct embedding of $T$ into $Ladder_n$ with \textit{R} being the rightmost node, meaning no node is embedded more to the right or to the same level, and \textit{L} being the leftmost node.

We now describe how to obtain a quasi-correct embedding of $T$ onto $Ladder_N$ with \textit{R} being the rightmost node and \textit{L} being the leftmost one while $L$ is mapped to $ImageL$~--- some node of the $Ladder_N$. Moreover, our embedding obeys the following invariant.

\vspace{-0.2cm}
\begin{invariant}[Septum invariant]
\label{invariant:septum}
For each inner simple-graph, its foot and its head are embedded to the same level and no other node is embedded to that level.
\end{invariant}

\vspace{-0.2cm}
\begin{wrapfigure}{r}{0.5\textwidth}
\vspace{-1cm}
  \begin{center}
    \includegraphics[scale=0.25]{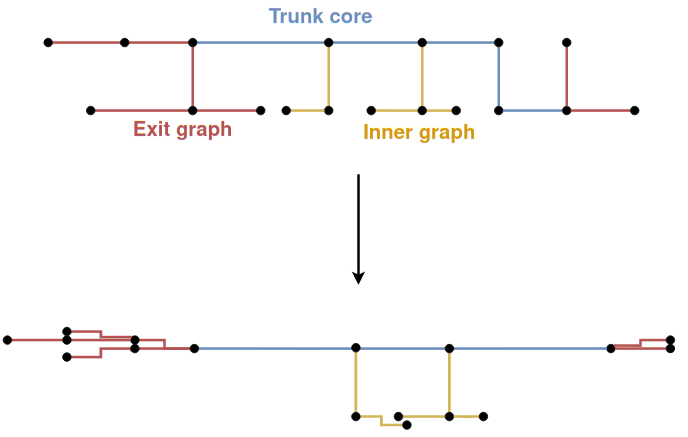}
  \end{center}
\caption{Example of a quasi-correct embedding}
\label{fig:quasi-embedding:example}
\vspace{-1cm}
\end{wrapfigure}
We embed a path between $L$ and $R$ simply horizontally and then we orient line-graphs connected to it in a way that they do not violate our desired invariant. It can be shown that it is always possible if $T$ can be embedded onto $Ladder_n$. The pseudocode is in Appendix Algorithm~\ref{alg:LR:tree:embedding}.

Suppose now that not all information, such as $R$, $L$, and $ImageL$, is provided. We explain how we can embed a tree $T$.
%
We first get the \emph{trunk core} of the given tree. This can be done by following the definition. 
Now the idea would be to first embed the trunk core and its inner line-graphs using a tree embedding presented earlier with $R$ and $L$ to be the ends of the trunk core. Then, we embed exit-graphs strictly horizontally ``away'' from the trunk core. That means, that the hands of exit-graphs that are connected to the right of the trunk core are embedded to the right, and the hands of those exit-graphs that are connected to the left of the trunk core are embedded to the left. An example of the quasi-correct embedding is shown in Figure~\ref{fig:quasi-embedding:example}.


If a tree does not have a trunk core, then its structure is quite simple (in particular it has no more than two nodes of degree three). Such a tree can be embedded without conflicts.
The pseudocode appears in Appendix Algorithm~\ref{alg:tree:embedding}.

\vspace{-0.3cm}
\subsubsection{Cycle embedding}
Now, we show how to embed a cycle into $Ladder_N$. First, we give some important definitions and lemmas.

\vspace{-0.2cm}
\begin{definition}
\label{def:maximal-cycle}
A \emph{maximal} cycle $C$ of a graph $G$ is a cycle in $G$ that cannot be enlarged, i.e., there is no other cycle $C'$ in $G$ such that $V(C) \subsetneq V(C')$.
\end{definition}

\vspace{-0.2cm}

\noindent
\begin{minipage}{0.69\linewidth}
\begin{definition}
\label{def:whiskers}
Consider a graph $G$ and a maximal cycle $C$ of $G$. A whisker $W$ of $C$ is a line graph inside $G$ such that:
1) $V(W) \neq \emptyset$ and $V(W) \cap V(C) = \emptyset$.
2) There exists only one edge between the cycle and the whisker $(w, c)$ for $w \in V(W)$ and $c \in V(C)$. Such $c$ is called a \emph{foot} of $W$. The nodes of $W$ are enumerated starting from $w$.
3) $W$ is maximal, i.e., there is no $W'$ in $G$ such that $W'$ satisfies previous properties and $V(W) \subsetneq V(W')$.
\end{definition}
\end{minipage}
\hspace{0.01\textwidth}
\begin{minipage}{0.3\linewidth}
\centering
    \includegraphics[width=\textwidth]{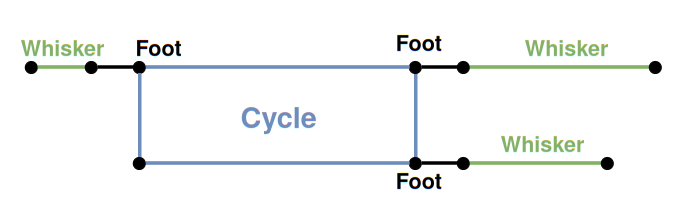}
    \captionof{figure}{Cycle and its Whiskers.}
\end{minipage}

\va{Redraw the picture with whisker and foot with the black edge.}

\vspace{-0.2cm}

\begin{definition}
\label{def:adjacent-whiskers}
Suppose we have a graph $G$ that can be correctly embedded into $Ladder_n$ by $\varphi$ and a cycle $C$ in $G$. Whiskers $W_1$ and $W_2$ of $C$ are called adjacent (or neighboring) for the embedding $\varphi$ if 
    $\forall i \leq \min(|V(W_1)|, |V(W_2)|)\ (\varphi(W_1[i])$, $\varphi(W_2[i])) \in E(Ladder_n)$.

\end{definition}

\vspace{-0.2cm}

\begin{lemma}\label{adjacent whiskers}
Suppose we have a graph $G$ that can be correctly embedded into $Ladder_n$ and there exists a maximal cycle $C$ in $G$ with at least $6$ vertices with two neighbouring whiskers $W_1$ and $W_2$ of $C$, i.e., $(\text{foot}(W_1), \text{foot}(W_2)) \in E(G)$. Then, $W_1$ and $W_2$ are adjacent in any correct embedding of $G$ into $Ladder_N$.
\end{lemma}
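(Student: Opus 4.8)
The plan is to reduce the statement to a purely structural fact about how a cycle and its attached paths can sit inside a ladder. First I would observe that, since $\varphi$ is a correct embedding, the image $\varphi(C)$ is itself a cycle in $Ladder_N$, and that every simple cycle of a ladder is a \emph{rectangle}: it occupies both vertices of every column in some contiguous range $[a,b]$, using the full top row, the full bottom row, and the two vertical edges at columns $a$ and $b$. This can be seen directly from the leftmost occupied column $a$: a degree-two analysis forces both $(1,a)$ and $(2,a)$ onto the cycle together with the vertical edge and the two rightward horizontal edges, and an induction on the column index fills the whole block. Since $|V(C)| \ge 6$, the rectangle $\varphi(C)$ spans $b-a+1 = |V(C)|/2 \ge 3$ columns, so $b - a \ge 2$.

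Next I would locate the feet. A vertex of the rectangle has a ladder-neighbour outside $\varphi(C)$ only at the four corners $(1,a),(2,a),(1,b),(2,b)$, and that outside neighbour lies in the adjacent column ($a-1$ for the left corners, $b+1$ for the right ones), whereas every interior-column vertex has all three of its neighbours inside the block. A whisker attaches to its foot through an edge to a vertex $w \notin V(C)$, whose image is therefore a ladder-neighbour of $\varphi(\mathrm{foot})$ lying outside the rectangle; hence each foot is embedded to a corner. Now I use the hypothesis $(\mathrm{foot}(W_1),\mathrm{foot}(W_2)) \in E(G)$: this edge must map to a ladder edge joining two corners. Because $b-a \ge 2$, the two corners in a common row are at distance $\ge 2$ and hence non-adjacent, so the only corner-to-corner ladder edges are the two vertical sides. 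Thus the two feet are the endpoints of a single vertical side; without loss of generality they are $(1,a)$ and $(2,a)$, and consequently the first whisker vertices are forced to $\varphi(W_1[1])=(1,a-1)$ and $\varphi(W_2[1])=(2,a-1)$.

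The heart of the argument, and the step I expect to be the main obstacle, is to show that the two whiskers then run in lockstep down the columns. Since the block $[a,b]$ is completely filled by $\varphi(C)$ and $\varphi$ is injective, both whiskers are confined to the columns strictly left of $a$. I would prove by induction on $i$ that $\varphi(W_1[i]) = (1,a-i)$ and $\varphi(W_2[i]) = (2,a-i)$ for every $i \le \min(|V(W_1)|,|V(W_2)|)$. In the inductive step, $\varphi(W_1[i+1])$ must be a ladder-neighbour of $(1,a-i)$ other than the previous whisker vertex $(1,a-i+1)$ (which for $i=1$ is the foot $(1,a)$); the remaining candidates are $(1,a-i-1)$ and $(2,a-i)$, but $(2,a-i)=\varphi(W_2[i])$ is already occupied by the partner whisker, so the move is forced straight to $(1,a-(i+1))$, and symmetrically for $W_2$. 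Hence $\varphi(W_1[i])$ and $\varphi(W_2[i])$ always lie in the same column and are adjacent in $Ladder_N$, which is exactly the definition of $W_1$ and $W_2$ being adjacent.

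Finally, I would emphasise where the $|V(C)| \ge 6$ hypothesis is essential: it is precisely what rules out horizontal corner-to-corner edges. For a $4$-cycle the two feet could sit at $(1,a)$ and $(1,a+1)$, joined by a horizontal edge, sending the whiskers in opposite directions so that they are not adjacent — the degenerate case the lemma deliberately excludes. I do not foresee trouble in the rectangle/corner bookkeeping; the only genuinely delicate point is the confinement-plus-induction step, whose correctness hinges on the partner whisker blocking the vertical turn at every level, leaving the straight-ahead move as the only option.
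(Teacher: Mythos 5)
Your proposal is correct and follows essentially the same route as the paper's proof: your ``rectangle'' characterization of cycle images is just a global restatement of the paper's level-occupancy fact (Statement~\ref{cycle occupies level}), your corner argument plays the role of the paper's step showing the two feet must share a level, and the final lockstep induction forcing $\varphi(W_1[i])$ and $\varphi(W_2[i])$ into the same column is identical to the paper's concluding argument. Both proofs also use the $|V(C)|\geq 6$ hypothesis in the same way, namely to exclude the width-one (4-cycle) degeneracy in which the feet could be joined horizontally.
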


\vspace{-0.2cm}

\noindent
\begin{minipage}{0.68\linewidth}
\begin{definition}
\label{def:frame}
Assume we have a graph $G$ and a maximal cycle $C$ of length at least $6$. The frame for $C$ is a subgraph of $G$ induced by vertices of $C$ and $\{W_1[i], W_2[i] \mid i \leq \min(|V(W_1)|, |V(W_2)|)\}$ for each pair of adjacent whiskers $W_1$ and $W_2$.
Adding all the edges $\{(W_1[i], W_2[i]) \mid i \leq \min(|V(W_1)|, |V(W_2)|)\}$ for each pair of adjacent whiskers $W_1$ and $W_2$ makes a frame \emph{completed}. 
\end{definition}
\end{minipage}
\hspace{0.01\textwidth}
\begin{minipage}{0.3\linewidth}
\centering
    \includegraphics[width=\textwidth]{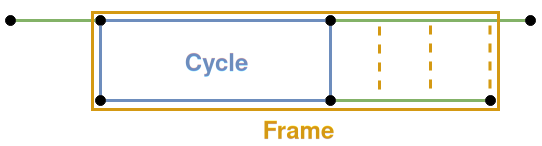}
    \captionof{figure}{Cycle, its frame, and edges (dashed) to make the frame completed}
\end{minipage}

Given a cycle $C$ of length at least six and its special nodes $L, R \in V(C)$, we construct a correct embedding of $C$ into $Ladder_N$ with $level\langle L \rangle \leq level\langle u \rangle \leq level\langle R \rangle\ \forall u \in V(C)$, while $L$ is mapped into the node $ImageL$.


We first check if it is possible to satisfy the given constraints of placing the $L$ node to the left and a $R$ node to the right. If it is indeed possible, we place $L$ to the desired place $ImageL$ and then we choose an orientation (clockwise or counterclockwise) following which we could embed the rest of the nodes, keeping in mind that $R$ must stay on the rightmost level. The pseudocode appears in Appendix Algorithm~\ref{alg:LR:cycle:embedding}.

Now, suppose that not all information, such as $R$, $L$, and $ImageL$, is provided.
We reduce this problem to the case when the missing variables are known. This subtlety might occur since there are inner edges in the cycle. In this case, we choose missing $L/R$ more precisely in order to embed an inner edge vertically. For more intuition, please see Figures \ref{fig:cycle bad} and \ref{fig:cycle good}. A dashed line denotes an inner edge.
The pseudocode appears in the Appendix (Algorithm~\ref{alg:cycle:embedding}).
\begin{figure}[h]
\centering
\begin{subfigure}{.5\textwidth}
  \centering
  \includegraphics[scale=0.15]{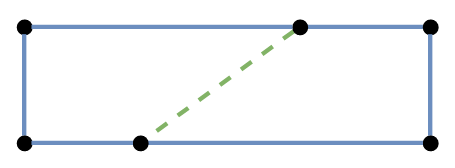}
  \caption{Incorrect cycle embedding}
  \label{fig:cycle bad}
\end{subfigure}%
\begin{subfigure}{.5\textwidth}
  \centering
  \includegraphics[scale=0.15]{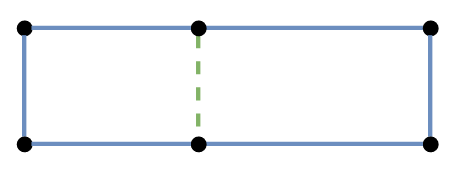}
  \caption{Correct cycle embedding}
  \label{fig:cycle good}
\end{subfigure}
\caption{Cycle embeddings.}
\label{fig:cycleembeddings}
\end{figure}


\vspace{-1.3cm}
\subsubsection{Embedding a connected component of the demand graph}
Combining the previous results, we can now explain how to embed onto $Ladder_N$ a connected component $S$ that can be embedded onto $Ladder_n$.
\begin{definition}
By the cycle-tree decomposition of a graph $G$ we mean a set of maximal cycles $\{C_1,\ldots C_n\}$ of $G$ and a set of trees $\{T_1,\ldots,T_m\}$ of $G$ such that 
\begin{itemize}
    \item $\bigcup\limits_{i \in [n]}V(C_i) \cup \bigcup\limits_{i \in [m]}V(T_i) = V(G)$
    \item $V(C_i) \cap V(C_j) = \emptyset\ \forall i \neq j$
    \item $V(T_i) \cap V(T_j) = \emptyset\ \forall i \neq j$
    \item $V(T_i) \cap V(C_j) = \emptyset\ \forall i\in [m],j\in[n]$
    \item $\forall i \neq j\ \forall u \in V(T_i)\ \forall v \in V(T_j)\ (u, v) \notin E(G)$
\end{itemize}
\end{definition}

We start with an algorithm on how to make a cycle-tree decomposition of $S$ assuming no uncompleted frames.
To obtain a cycle-tree decomposition of a graph: 1)~we find a maximal cycle; 2)~we split the graph into two parts by logically removing the cycle; 3)~we proceed recursively on those parts, and, finally, 4)~we combine the results together maintaining the correct order between cycle and two parts (first, the result for one part, then the cycle, and then the result for the second part).
Since we care about the order of the parts, we say that it is a \emph{cycle-tree decomposition chain}. 
The decomposition pseudocode appears in the Appendix Algorithm~\ref{alg:decomposition}.

We describe how to obtain a quasi-correct embedding of $S$. We preprocess $S$: 1)~we remove one edge from cycles of size four; 2)~we complete uncompleted frames with vertical edges. Then, we embed parts of $S$ from the cycle-tree decomposition chain one by one in the relevant order using the corresponding algorithm (either for a cycle or for a tree embedding) making sure parts are glued together correctly. The pseudocode appears in Appendix Algorithm~\ref{alg:component:embedding}.

\vspace{-0.2cm}
\subsection{Online quasi-embedding}
\label{sec:dynamic}

In the previous subsection, we presented an algorithm on how to quasi-embed a static graph. Now, we will explain how to operate when the requests are revealed in an online manner. The full version of the algorithm is presented in Appendix~\ref{app:dynamic}.

There are two cases: a known edge is requested or a new edge is revealed. In the first case the algorithm does nothing since we already know how to quasi-correctly embed the current graph and, thus, we already can embed into the line network with constant bandwidth. Thus, further, we will consider only the second case.

We describe how one should change the embedding of the graph after the processing of a request in an online scenario. At each moment some edges of the demand graph $Ladder_n$ are already revealed, forming connected components.
After an edge reveal we should reconfigure the target line graph. For that, instead of line reconfiguration we reconfigure our embedding to $Ladder_N$ that is then embedded to the line level by level and introduces a constant factor. So, we can consider the reconfiguration only of $Ladder_N$ and forget about the target line graph at all.
When doing the reconfiguration of an embedding we want to maintain the following invariants:
\begin{enumerate}[nosep] 
    \item The embedding of any connected component is quasi-correct.
    \item For each tree in the cycle-tree decomposition its embedding respects Septum invariant~\ref{invariant:septum}.
    \item There are no maximal cycles of length $4$.
    \item Each cycle frame is completed with all ``vertical'' edges even if they are not yet revealed.
    \item There are no conflicts with cycle nodes, i.e., each cycle node is the only node mapped to its image in the embedding to $Ladder_N$.
\end{enumerate}

\vspace{-0.1cm}
For each newly revealed edge there are two cases: either it connects two nodes from one connected component or not. We are going to discuss both of them.

\vspace{-0.2cm}
\subsubsection{Edge in one component}
The pseudocode appears in Appendix Algorithm~\ref{alg:one:component:edge}.
If the new edge is already known or it forms a maximal cycle of length four, we simply ignore it. Otherwise, it forms a cycle of length at least six, since two connected nodes are already in one component. 
We then perform the following steps:
\begin{enumerate}
    \item Get the completed frame of a (possibly) new cycle.
    \item Logically ``extract'' it from the component and embed maintaining the orientation (not twisting the core that was already embedded in some way).
    \item Attach two components appeared after an extraction back into the graph, maintaining their relative order.
\end{enumerate}

\vspace{-0.5cm}
\subsubsection{Edge between two components} The pseudocode appears in Appendix Algorithm~\ref{alg:two:components:edge}.
In order to obtain an amortization in the cost, we always ``move'' the smaller component to the bigger one. Thus, the main question here is how to glue a component to the existing embedding of another component. 
The idea is to consider several cases of where the smaller component will be connected to the bigger one. There are three possibilities:
\begin{enumerate}
    \item \textit{It connects to a cycle node.} In this case, there are again two possibilities. Either it ``points away'' from the bigger component meaning that the cycle to which we connect is the one of the ends in the cycle-tree decomposition of the bigger component. Here, we just simply embed it to the end of the cycle-tree decomposition while possibly rotating a cycle at the end. 
    \begin{center}
        \includegraphics[scale=0.25]{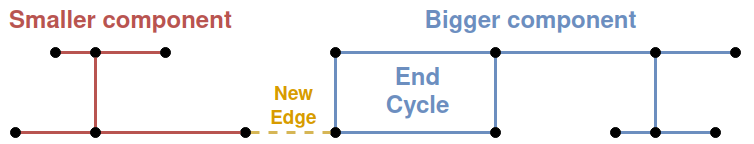}
    \end{center}
    
    Or, the smaller component should be placed somewhere between two cycles in the cycle-tree decomposition. Here, it can be shown that this small graph should be a line-graph, and we can simply add it as a whisker, forming a larger frame. 
    \begin{center}
        \includegraphics[scale=0.25]{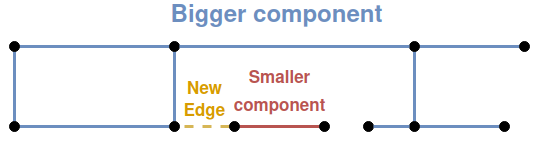}
    \end{center}
    \item \textit{It connects to a trunk core node of a tree in the cycle-tree decomposition.} It can be shown that in this case the smaller component again must be a line-graph. Thus, our only goal is to orient it and possibly two of its inner simple-graphs neighbours to maintain the Septum invariant~\ref{invariant:septum} for the corresponding tree from the decomposition. 
    \begin{center}
        \includegraphics[scale=0.25]{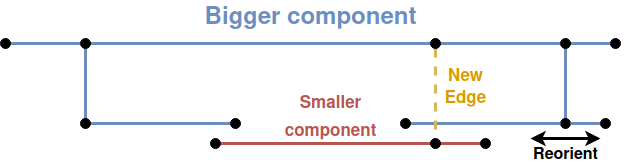}
    \end{center}
    \item \textit{It connects to an exit graph node of an end tree of the cycle-tree decomposition.} 
    In this case, we straightforwardly apply a static embedding algorithm of this tree and the smaller component from scratch. Please, note that only the exit graphs of the end tree will be moved since the trunk core and its inner graphs will remain.
\end{enumerate}

\subsection{Complexity of the online embedding}
\label{sec:cost}


Now, we calculate the cost of our online algorithm (a more detailed discussion on the cost of the algorithm appears at Appendix~\ref{app:cost}): how many swaps we should do and how much we should pay for the routing requests. Recall that we first apply the reconfiguration and, then, the routing request.

We start with considering the routing requests. Their cost is $O(1)$ since they lie pretty close on the target line network, i.e., by no more than $12$ nodes apart. This bound holds because the nodes are quasi-correctly embedded on $Ladder_N$, two adjacent nodes at $G$ are located not more than four levels apart (in the worst case, when we remove an edge of a cycle with length four) where each level of the quasi-correct embedding has at most three images of nodes of $G$. Thus, on the target line graph, if we enumerate level by level, the difference between any two adjacent nodes of $G$ is at most $12$.

Then, we consider the reconfiguration. We count the total cost of each case of the online algorithm before all the edges are revealed.

In the first case, we add an edge in one component. By that, either a new frame is created or some frame was enlarged. In both cases, only the nodes, that appear on some frame for the first time, are moved. Since, a node can be moved only once to be mapped on a frame and it is swapped at most $N=O(n)$ times to move to any position, the total cost of this type of reconfiguration is at most $O(n^2)$. Also, there are several adjustments that could be done: 1)~the ``old'' frame can rotate by one node, and 2)~possibly, we should flip the first inner-graphs of two components connected to the frame. In the first modification, each node at the frame can only be ``rotated'' once, thus, paying $O(n)$ cost in total. In the second modification, inner-graph can change orientation at most once in order to satisfy the Septum invariant  (Invariant~\ref{invariant:septum}), thus, paying $O(n^2)$ cost in total~--- each node can move by at most $N=O(n)$. \va{Anton, please, check}

In the second case, we add an edge in between two components. At first, we calculate the time spent on the move of the small component to the bigger one: each node is moved at most $O(\log n)$ times since the size of the component always grows at least two times, the number of swaps of a vertex is at most $N=O(n)$ to move to any place, thus, the total cost is $O(n^2 \log n)$. Secondly, there are two more modification types: 1)~a rotation of a cycle, and 2)~some simple-graphs can be reoriented. The cycle can be rotated only once, thus, we should pay at most $O(n)$ there. At the same time, each simple-graph can be reoriented at most once to satisfy the Septum invariant (Invariant~\ref{invariant:septum}), thus, the total cost is $O(n^2)$ for that type of a reconfiguration.

To summarize, the total cost of requests $\sigma$ is $O(n^2 \log n)$ for the whole reconfiguration plus $O(|\sigma|)$ per requests. This matches the lower bound that was obtained for the line demand graph.
The same result holds for any demand graph that is the subgraph of the ladder of size $n$.

\vspace{-0.2cm}
\begin{theorem}
The online algorithm for embedding the ladder demand graph of size $n$ on the line graph has total cost $O(n^2 \log n + |\sigma|)$ for a sequence of communication requests $\sigma$.
\end{theorem}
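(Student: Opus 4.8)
The plan is to split the total cost into two independent contributions --- the per-request routing cost and the cumulative reconfiguration cost --- and to bound each separately, finally summing them. For the routing cost I would invoke the quasi-correctness invariant maintained by the online algorithm (the first invariant of Section~\ref{sec:dynamic}) together with Lemma~\ref{quasi const cost}. Since every connected component stays quasi-correctly embedded on $Ladder_N$ at all times, any two nodes adjacent in the current demand graph lie at most four levels apart (the worst case arising from a removed edge of a length-four cycle), and each level carries at most three images. Composing the level-by-level embedding of $Ladder_N$ into the line therefore places the endpoints of every served request within distance $12$, so each of the $|\sigma|$ requests is served in $O(1)$, contributing $O(|\sigma|)$ overall.

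The harder part is bounding the reconfiguration cost by $O(n^2\log n)$, which I would establish through a charging argument over the two edge-reveal cases of Section~\ref{sec:dynamic}. In the within-component case, adding an edge either creates a new completed frame or enlarges an existing one, and I would argue that the only migrated nodes are those joining a frame for the first time. Because a node enters a frame at most once and a single migration moves it by at most $N = O(n)$ swaps, this charges $O(n^2)$ across the whole sequence. The two auxiliary adjustments here --- the one-node rotation of an old frame and the possible flip of the first inner-graphs to restore the Septum invariant (Invariant~\ref{invariant:septum}) --- I would bound by noting that each frame node is rotated at most once ($O(n)$ total) and each inner-graph is reoriented at most once ($O(n^2)$ total).

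For the between-components case I would rely on the ``move the smaller component'' rule to obtain the dominant term. Whenever a node is migrated as part of the smaller component, its component size at least doubles, so each node moves at most $O(\log n)$ times, and each such move costs at most $N=O(n)$ swaps, giving $O(n^2\log n)$. The remaining reconfigurations --- at most one rotation of the target cycle ($O(n)$) and at most one reorientation of each simple-graph to re-establish the Septum invariant ($O(n^2)$) --- are lower-order. Summing all contributions yields reconfiguration cost $O(n^2\log n)$, and adding the routing cost gives the claimed bound $O(n^2\log n + |\sigma|)$.

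I expect the main obstacle to be rigorously justifying the ``charged only once / at most $O(\log n)$ times'' claims: that a node genuinely enters a frame at most once, that it is reoriented for the Septum invariant at most once in the within-component case, and that the smaller-component accounting remains airtight as components merge and later acquire cycles. These rest on the structural lemmas about frames, whiskers, and trunk cores (Lemmas~\ref{adjacent whiskers} and~\ref{simple-graphs line-graphs}) and on the five invariants surviving every reconfiguration, so the crux is verifying that the online operations of Section~\ref{sec:dynamic} never re-touch an already-settled node except under these once-only circumstances.
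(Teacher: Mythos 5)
Your proposal is correct and follows essentially the same approach as the paper's own proof in Section~\ref{sec:cost}: the same routing bound of $12$ via quasi-correctness, and the same charging argument (a node enters a frame at most once, frames rotate by one node at most once, inner simple-graphs are reoriented for the Septum invariant at most once, and the smaller-component doubling rule gives the $O(n^2\log n)$ term). The once-only claims you flag as the crux are precisely what the paper formalizes via its ``scenario'' analysis in Appendix~\ref{app:cost}.
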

\vspace{-0.2cm}

\vspace{-0.5cm}
\section{Conclusion}
\label{sec:conclusion}
We presented methods for statically or dynamically re-embedding a ladder demand graph (or a subgraph of it) on a line, both in the offline and online case. As side results, we also presented how to embed a cycle demand graph and a meta-algorithm for a general demand graph.
Our algorithms for the cycle and the ladder cases match the lower bounds.
Our work is a first step towards a tight bound on dynamically re-embedding more generic demand graphs, such as arbitrary grids.


%
%
\bibliographystyle{splncs04}
\bibliography{references}

\clearpage
\appendix

\section{The algorithm for the Cycle}
\label{app:cycle}
We start with the most simple generalization result~--- when the demand graph is the cycle on $n$ vertices. 

{
\def\thetheorem{\ref{thm:cycle}}
\begin{theorem}
Suppose the demand graph is $C_n$. There is an algorithm such that the total cost spent on the migrations is $O(n^2 \log n)$ and each request is performed in $O(1)$.
In particular, if the number of requests is $\Omega(n^2 \log n)$ each request has $O(1)$ amortized cost.
\end{theorem}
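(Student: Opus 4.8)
The plan is to reuse the line-graph algorithm from Theorem~\ref{thm:line} almost verbatim and to add only a single special step that fires at the moment the revealed edges close into a cycle. Throughout I would maintain the invariant that after $i$ requests the revealed subgraph $G_i = (V, E_i)$ is embedded so that every pair of adjacent nodes sits at distance at most $2$ on the line, and moreover that any path-subgraph of $G_i$ is embedded edge-preservingly as a contiguous line. The invariant holds trivially for $G_0 = (V, \emptyset)$, and I would argue its maintenance by induction over the arriving requests.

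The case analysis for a new request $\sigma_i$ is the following. If $\sigma_i \in E(G_{i-1})$ is already known, do nothing; the invariant immediately bounds the serving cost by $2$. If $\sigma_i$ is new but $G_i$ is still a disjoint union of paths, then $\sigma_i$ joins two path components $L_1, L_2$ with $|V(L_1)| \le |V(L_2)|$; I would drag the smaller component $L_1$ to abut $L_2$ exactly as in Avin et al., reusing their placement so that the merged path is again embedded as a line, preserving the invariant. The one genuinely new case is when $\sigma_i$ closes the cycle: here I would globally re-embed $C_n$ using the bandwidth-$2$ enumeration
\[
i \mapsto \begin{cases} 2i-1, & i \le \lceil n/2 \rceil, \\ 2(n-i+1), & \text{otherwise}, \end{cases}
\]
under which every pair of cycle-adjacent nodes differs by at most $2$. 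Since all edges of $C_n$ are now revealed and the embedding has bandwidth $2 = \bandwidth(C_n)$, no further reconfiguration is ever required.

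For the cost I would split into serving and reconfiguration. The invariant guarantees each request is served at distance at most $2$, hence $O(1)$ per request. For merges, a node travels $O(n)$ during a single drag, and the smaller-into-larger rule forces any node to lie in the smaller component at most $O(\log n)$ times (its component at least doubles each such event), giving $O(n^2 \log n)$ total migration cost. The lone cycle-closing restructuring takes $O(n^2)$ swaps, which is absorbed into $O(n^2 \log n)$.

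The main obstacle I anticipate is not the reduction itself but the two verifications attached to the final step: confirming that the stated enumeration indeed achieves bandwidth $2$, and checking that it can be reached from the current line embedding within $O(n^2)$ swaps. Everything prior to cycle closure is a direct appeal to the line algorithm, so the bulk of the care lies in the inductive maintenance of the invariant across merges and in certifying that once the cycle is formed the embedding is terminal.
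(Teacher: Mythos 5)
Your proposal is correct and follows essentially the same approach as the paper's proof: the same invariant (adjacent nodes at distance at most $2$, path components embedded edge-preservingly), the same smaller-into-larger merge rule from Avin et al., the identical bandwidth-$2$ enumeration at cycle closure, and the same cost accounting ($O(n^2\log n)$ for merges, $O(n^2)$ absorbed for the final restructuring). No gaps to report.
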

\addtocounter{theorem}{-1}
}

\begin{proof}

The idea of the algorithm is to act as in the algorithm described in~\cite{avin2019self} for the list demand graph until revealed edges do not form a cycle. Once they do we perform a total reconfiguration enumerating nodes of $C_n$ with
\begin{align*}
    \begin{cases}
        i \rightarrow 2i - 1, \text{ if } i \leq \lceil \frac{n}{2} \rceil\\
        i \rightarrow 2(n - i + 1), \text{ otherwise}
    \end{cases}
\end{align*}
so, for each pair of adjacent nodes the difference of their numbers is at most 2. The enumeration can be seen on Figure~\ref{fig:cycle}.



More formally. Let $G_{i} = (V, E_i)$~--- the demand graph after $i$ requests, and $G_0 = (V, \emptyset)$.
We want to maintain the invariant that each $G_i$ is embedded in a way that all adjacent 
nodes are at a distance of at most $2$. Moreover, if there is a line subgraph of $G_i$ then it is embedded as a line, i.e., the embedding preserves edges. We present an algorithm that maintains this invariant by induction.
\InsertBoxR{0}{\begin{minipage}{0.45\linewidth}\centering
\begin{center}
\captionsetup{type=figure}
\includegraphics[scale=0.4]{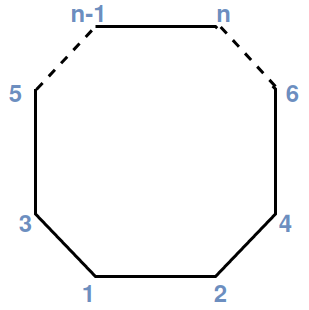}    
\end{center}
\captionof{figure}{\small Cycle enumeration with Bandwidth $2$}
\label{fig:cycle}
\end{minipage}%
}[6]

This invariant holds for $G_0$.
We assume that the invariant holds for $G_{i-1}$ and a new request $\sigma_i$ arrives.
If $\sigma_i$ is already present in $E(G_{i-1})$ then the invariant holds, we do not reconfigure, and pay at most $2$.

If now $G_i$ is a cycle we perform a total reconfiguration with the enumeration with bandwidth $2$ described above. For that we pay $O(n^2)$ that is less than $O(n^2 \log n)$, and, thus, our complexity lies inside our bounds. Note that once $G_i$ becomes a cycle we need no further reconfigurations since all the edges are known and the invariant is maintained. 

The last case is when $\sigma_i$ is a new edge and $G_i$ still consists of several connected components. We use the algorithm presented in~\cite{avin2019self}. $\sigma_i$ connects two different connected components, say $L_1$ and $L_2$ forming a new list subgraph $L$. Suppose that $|V(L_1)| \leq |V(L_2)|$. Our strategy would be to ``drag'' $L_1$ towards $L_2$ that is if $L_1 = \{u_1, \ldots , u_l\}$, $V(L_2) = \{v_1, \ldots , v_k\}$, $\sigma_i = (v_k, u_1)$. By the invariant $L_2$ is embedded with $v_p$ at $q + p$ for some $q$ and we want nodes of $L_1$ to be embedded with  $u_p \rightarrow q + k - 1 + p$.
So, we bring each node of $L_1$ to its position performing required number of swaps. 
Note, that this reconfiguration brings the embedding that supports the invariant.
Now, we analyze the cost of the algorithm processing the requests.

Due to the invariant each request is served with a cost of at most $2$. As for the reconfiguration cost: a node can move a distance $\Theta(n)$ during processing one request and it moves no more than $O(\log n)$ times: we either form a cycle or merge two components. Thus, the total cost does not exceed $O(n^2 \log n)$.
\end{proof}

\section{Full algorithm for the ladder}

\subsection{Static quasi-embedding}
\label{app:static}
\subsubsection{Tree embedding}
Assume, we are given a tree $T$ that can be embedded into $Ladder_n$. Furthermore, there are two marked nodes in the tree: one is marked \textit{right} and the \textit{left}. It is known that there is a correct embedding of $T$ with \textit{right} being the right-most node, meaning no node is embedded higher or to the same level, and \textit{left} being the left-most node.

We now describe how to obtain a quasi-correct embedding of $T$ with \textit{right} being the right-most node and \textit{left} being the left-most one and $left$ mapped to $leftImage$. Moreover, this embedding obeys the following invariant:

\begin{invariant}[Septum invariant]
For each inner simple-graph its foot and its head are embedded to the same level and no other node is embedded to that level.
\end{invariant}

We embed the $left-right$ path strictly vertically and then we orient line-graphs connected to it in the way that they do not violate septum invariant.

See Algorithm~\ref{alg:LR:tree:embedding}.
\begin{algorithm}
\caption{Left-Right tree embedding}
\begin{algorithmic}
\Procedure{RightLeftTreeEmbedding}{$T$, $left$, $right$, $leftImage$}
    \State $P \gets$ path from \textit{left} to \textit{right}
    \State $leftSide \gets side(leftImage)$
    \State $leftLevel \gets level\langle leftImage \rangle$
    \State Embed $P[i] \rightarrow level(leftLevel - 1 + i)[leftSide]$
    \State $L \gets$ line-graphs connected to $P$
    \State $Septa \gets \{level\langle foot(l) \rangle \mid l \in L\} \cup \{level\langle left \rangle, level\langle right \rangle\}$
    \For{$l \in L$}
        \State $i \gets level\langle foot(l) \rangle$
        \State Embed $head(l) \rightarrow level(i)[other(leftSide)]$
        \State Orient $l$ ensuring no nodes of $l \setminus \{head(l)\}$ are embedded to any of levels from $Septa$
    \EndFor
\EndProcedure
\end{algorithmic}
\label{alg:LR:tree:embedding}
\end{algorithm}

We now proceed with an embedding of a tree $T$ where $right$, $left$ and $leftImage$ might or might not be given. In the case the variable is not given we denote its value with $\None$.

The idea here would be to first embed the trunk core and its inner line-graphs using Left-Right tree embedding and then to embed exit-graphs strictly vertically "away" from the trunk core. That means that the hands of exit-graphs that are connected to to the right of the trunk core are embedded increasingly and hands of those exit-graphs which are connected to the left of the trunk core are embedded decreasingly. 

If a tree does not have a trunk core, that means that its structure is rather simple (in particular it has no more than two nodes of degree three), so we do not care about conflicts.

See Algorithm~\ref{alg:tree:embedding}.

\begin{breakablealgorithm}
\caption{Tree quasi-correct embedding}
\begin{algorithmic}
\Procedure{TreeQuasiCorrectEmbedding}{$T$, $left$, $right$, $leftImage$}
    \If{$leftImage = \None$}
        \State $leftImage \gets level(1)[1]$
    \EndIf
    
    \If{$(left \neq \None) \wedge (right \neq \None)$}
        \State LeftRightTreeEmbedding($T$, $left$, $right$, $leftImage$)
        \State \Return
    \ElsIf{$(left \neq \None) \wedge (right = \None)$}
        \If{$T$ has a trunk core}
            \State $u, v \gets$ ends of a trunk core
            \If{$u$ between $v$ and $left$}
                \State $trunkRight \gets v$
            \Else
                \State $trunkRight \gets u$
            \EndIf
            \State $et \gets$ exit-graphs connected to $trunkRight$
            \State $S' \gets S \setminus et$
            \State LeftRightTreeEmbedding($S'$, $left$, $trunkRight$, $leftImage$)
            \For{$e \in et$}
                \State $lvl \gets level\langle trunkRight \rangle$
                \State $side \gets side(trunkRight)$
                \State Embed $head(e) \rightarrow level(lvl + 1)[side]$
                \For{$h \in hands(e)$}
                    \For{$i \in [length(h)]$}
                        \State Embed $h[j] \rightarrow level(lvl + 1 + i)[side]$
                    \EndFor
                \EndFor
            \EndFor
        \Else
            \State $right \gets$ arbitrary node of degree 1
            \State LeftRightTreeEmbedding($T$, $left$, $right$, $leftImage$)
        \EndIf
    \ElsIf{$(left = \None) \wedge (right \neq \None)$}
        \If{$T$ has a trunk core}
            \State $u, v \gets$ ends of a trunk core
            \If{$u$ between $v$ and $right$}
                \State $trunkLeft \gets v$
            \Else
                \State $trunkLeft \gets u$
            \EndIf
            \State $eb \gets$ exit-graphs connected to $trunkLeft$
            \State $S' \gets S \setminus eb$
            \State $leftImageLevel \gets level\langle leftImage\rangle$
            \State $leftImageSide \gets side(leftImage)$
            \State $vShift \gets \max\limits_{e \in eb}\max\limits_{h \in hands(e)}length(h)$
            \State $trunkLeftImage \gets level(leftImageLevel + vShift)[leftImageSide]$
            \State LeftRightTreeEmbedding($S'$, $left$, $right$, $trunkLeftImage$)
            \For{$e \in eb$}
                \State $lvl \gets level\langle trunkLeft \rangle$
                \State $side \gets side(trunkLeft)$
                \State Embed $head(e) \rightarrow level(lvl - 1)[side]$
                \For{$h \in hands(e)$}
                    \For{$i \in [length(h)]$}
                        \State Embed $h[j] \rightarrow level(lvl - 1 - i)[side]$
                    \EndFor
                \EndFor
            \EndFor
        \Else
            \State $left \gets$ arbitrary node of degree 1
            \State LeftRightTreeEmbedding($T$, $left$, $right$, $leftImage$)
        \EndIf
    \Else
        \If{$T$ has a trunk core}
            \State $trunkRight,\ trunkLeft \gets$ ends of a trunk core
            \State $et \gets$ exit-graphs connected to $trunkRight$
            \State $eb \gets$ exit-graphs connected to $trunkLeft$
            \State $S' \gets S \setminus et \setminus eb$
            \State $leftImageLevel \gets level\langle leftImage\rangle$
            \State $leftImageSide \gets side(leftImage)$
            \State $vShift \gets \max\limits_{e \in eb}\max\limits_{h \in hands(e)}length(h)$
            \State $trunkLeftImage \gets level(leftImageLevel + vShift)[leftImageSide]$
            \State LeftRightTreeEmbedding($S'$, $left$, $right$, $trunkLeftImage$)
            \For{$e \in et$}
                \State $lvl \gets level\langle trunkRight \rangle$
                \State $side \gets side(trunkRight)$
                \State Embed $head(e) \rightarrow level(lvl + 1)[side]$
                \For{$h \in hands(e)$}
                    \For{$i \in [length(h)]$}
                        \State Embed $h[j] \rightarrow level(lvl + 1 + i)[side]$
                    \EndFor
                \EndFor
            \EndFor
            \For{$e \in eb$}
                \State $lvl \gets level\langle trunkLeft \rangle$
                \State $side \gets side(trunkLeft)$
                \State Embed $head(e) \rightarrow level(lvl - 1)[side]$
                \For{$h \in hands(e)$}
                    \For{$i \in [length(h)]$}
                        \State Embed $h[j] \rightarrow level(lvl - 1 - i)[side]$
                    \EndFor
                \EndFor
            \EndFor
        \Else
            \State $right, left \gets$ furthest nodes of degree 1
            \State $P \gets$ path from $left$ to $right$
            \State Embed $P$ strictly monotone placing $left$ to $leftImage$
            \If{there is a line-graph left}
                \State $l \gets$ left line-graph
                \State $(u, v) \gets (u, v) \in E(T)$ s.t. $u \in P, v \in l$
                \State Embed $v$ to the same level, opposite side to $u$.
                \State Embed $l$ to the opposite side to the side of $P$ preserving connectivity and correctness
            \EndIf
        \EndIf
    \EndIf
\EndProcedure
\end{algorithmic}
\label{alg:tree:embedding}
\end{breakablealgorithm}

\subsubsection{Cycle embedding}
Given a cycle $C$ of length $\geq 6$ and nodes $left, right \in V(C)$ we construct a correct embedding of $C$ into $Ladder_\infty$  with $level\langle left \rangle \leq level\langle u \rangle \leq level\langle right \rangle\ \forall u \in V(C)$ and $left$ placed to $leftImage$.

For convenience we assume that for every node $v$ there is a local consecutive numeration starting at $v$. The number of node $u \in V(C)$ in this numeration is referenced with $number_v(u)$. The node with number $i$ in local numeration of $v$ is referenced with $C_v[i]$.

We first check if it is possible to satisfy the given constraints of placing the $left$ node to left and a $right$ node to the right. If it is indeed possible, we place $left$ to desired place and then choose an orientation (clockwise or counterclockwise) following which we would embed the rest of the nodes, keeping in mind that $right$ must stay on the highest level. See Algorithm~\ref{alg:LR:cycle:embedding}.

\begin{algorithm}
\caption{Left-Right cycle embedding}
\begin{algorithmic}
\Procedure{LeftRightCycleEmbedding}{$C$, $left$, $right$, $leftImage$}
    \State $h \gets \frac{length(C)}{2}$
    \State \textbf{Ensure:} $number_{left}(right) \in \{h, h + 1, h + 2\}$
    
    \State $leftLevel \gets level\langle leftImage \rangle$
    \State $leftSide \gets side(leftImage)$
    \If{$(number_{left}(right) = h) \vee(number_{left}(right) = h + 1)$}
        \For{$i \in [h]$}
            \State Embed $C_{left}[i]\rightarrow level(leftLevel + i - 1)[leftSide]$
        \EndFor
        \For{$i \in [h]$}
            \State Embed $C_{left}[h + i] \rightarrow level(leftLevel + h - i)[other(leftSide)]$
        \EndFor
    \Else
        \State Embed $left \rightarrow leftImage$
        \For{$i \in [h]$}
            \State Embed $C_{left}[i + 1]\rightarrow level(leftLevel + i - 1)[other(leftSide)]$
        \EndFor
        \For{$i \in [h - 1]$}
            \State Embed $C_{left}[h + 1 + i]\rightarrow level(leftLevel + h - i)[leftSide]$
        \EndFor
    \EndIf
\EndProcedure
\end{algorithmic}
\label{alg:LR:cycle:embedding}
\end{algorithm}

Now, suppose that not all information, such as $right$, $left$, and $LeftImage$, is provided.
We will reduce this problem to the case when the missing variables are known. Though the subtlety might occur due to the fact that there are inner edges in the cycle. In this case we choose missing $left/right$ more precisely in order to embed inner edge vertically. See Algorithm~\ref{alg:cycle:embedding}.

\begin{algorithm}
\caption{Cycle embedding}
\begin{algorithmic}
\Procedure{CycleEmbedding}{$C$, $left$, $right$, $leftImage$}
    \State $h \gets \frac{length(C)}{2}$
    \If{$(left \neq \None) \wedge (right \neq \None)$}
        \State \textbf{Ensure:} $number_{left}(right) \in \{h, h + 1, h + 2\}$
    \EndIf

    \If{$leftImage = \None$}
        \State $leftImage \gets level(1)[1]$
    \EndIf
    
    \If{$(left = \None) \wedge (right = \None)$}
        \State $left \gets$ arbitrary node of $C$
        \If{$C$ has an inner edge}
            \State Choose $right$ out of $\{C_{left}[h], C_{left}[h + 2]\}$ to respect the inner edge
        \Else
            \State Choose $right$ out of $\{C_{left}[h], C_{left}[h + 2]\}$ arbitrary
        \EndIf
    \ElsIf{$(left \neq \None) \wedge (right = \None)$}
        \If{$C$ has an inner edge}
            \State Choose $right$ out of $\{C_{left}[h], C_{left}[h + 2]\}$ to respect the inner edge
        \Else
            \State Choose $right$ out of $\{C_{left}[h], C_{left}[h + 2]\}$ arbitrary
        \EndIf
    \ElsIf{$(left = \None) \wedge (right \neq \None)$}
        \If{$C$ has an inner edge}
            \State Choose $left$ out of $\{C_{left}[h], C_{left}[h + 2]\}$ to respect the inner edge
        \Else
            \State Choose $left$ out of $\{C_{left}[h], C_{left}[h + 2]\}$ arbitrary
        \EndIf
    \EndIf
    
    LeftRightCycleEmbedding($C$, $left$, $right$, $leftImage$)
\EndProcedure
\end{algorithmic}
\label{alg:cycle:embedding}
\end{algorithm}

\subsubsection{Component embedding}
Right now we explain on how to embed onto $Ladder_N$ a connectivity component $S$ that can be embedded onto $Ladder_n$.

We start with an algorithm on how to make a cycle-tree decomposition chain of $S$ assuming no uncompleted frames.
To obtain a cycle-tree decomposition of a graph: 1)~we find a maximal cycle; 2)~we split the graph into two parts by logically removing the cycle; 3)~we proceed recursively on those parts, and, finally, 4)~we combine the results together maintaining the correct order of the chain components. See the Algorithm~\ref{alg:decomposition}.

\begin{algorithm}
\caption{Cycle-Tree decomposition chain}
\begin{algorithmic}
\Function{CycleTreeDecompositionChain}{$S$}
    \Ensure $S$ has no uncompleted frames
    \If{$S$ is empty}
        \State \Return $[]$
    \ElsIf{$S$ is a tree}
        \State \Return $[S]$
    \Else
        \State $C \gets$ arbitrary maximal cycle in $S$
        \State $S_1, S_2 \gets$ connectivity components of $S \setminus C$
        \State $C_1\gets CycleTreeDecompositionChain(S_1)$
        \State $C_2 \gets CycleTreeDecompositionChain(S_2)$
        \If{$C_2$ is empty}
            \State \Return $[C] + C_1$
        \Else
            \If{$\exists u \in V(C_2[0]),\ v \in V(C),\ s.t.\ (u, v) \in E(S)$}
                \State \Return $C_1 + [S] + C_2$
            \Else
                \State \Return $C_2 + [S] + C_1$
            \EndIf
        \EndIf
    \EndIf
\EndFunction
\end{algorithmic}
\label{alg:decomposition}
\end{algorithm}

Now, we describe how to obtain a quasi-correct embedding of $S$. We preprocess $S$: 1)~we remove one edge from cycles of size four; 2)~we complete uncompleted frames with vertical edges. After this preprocessing, we embed parts of $S$ from the cycle-tree decomposition chain one by one in the relevant order using the corresponding algorithm (either for a cycle or for a tree embedding) making sure parts are glued together correctly. 

As before, we have additional variables $left$, $right$ and $leftImage$ which might or might not be given.

\begin{breakablealgorithm}
\caption{Connectivity component quasi-correct embedding}
\begin{algorithmic}
\Procedure{Preprocess}{$S$}
    \State $C \gets$ maximal cycles of length 4 in $S$
    \For{$c \in C$}
        \State remove arbitrary edge of $c$ from $S$
    \EndFor
    \State $F \gets$ cycle frames in $S$
    \For{$f \in F$}
        \State complete $F$
    \EndFor
\EndProcedure

\Procedure{ComponentEmbeddingLeftFixed}{$S$, $left$, $right$, $leftImage$}
    \If{$S$ is a tree}
        \State TreeQuasiCorrectEmbedding($S$, $left$, $right$, $leftImage$)
        \State \Return
    \EndIf
    
    \If{$S$ is a cycle}
        \State CycleEmbedding($S$, $left$, $right$, $leftImage$)
        \State \Return
    \EndIf
    
    \State Preprocess($S$)
    
    \State $C \gets CycleTreeDecompositionChain(S)$
    \If{$left \neq \None$}
        \State Reverse $C$ in the way that $left \in C[1]$
    \EndIf
    \If{$right \neq \None$}
        \State Reverse $C$ in the way that $right \in C[length(C)]$
    \EndIf
    
    \For{$i \in [length(C)]$}
        \If{$i = 1$}
            $u, v \gets (u, v) \in E(S)$, s.t. $(u \in V(C[1])) \wedge (v \in V(C[i + 1])$
            \If{$C[1]$ is a tree}
                \State $cur \gets C[1] \cup (u, v)$
                \State TreeQuasiCorrectEmbedding($cur$, $left$, $v$, $leftImage$)
            \Else
                \State CycleEmbedding($C[1]$, $left$, $u$, $leftImage$)
            \EndIf
        \ElsIf{$i = length(C)$}
            \State $u, v \gets (u, v) \in E(S)$, s.t. $(u \in V(C[i - 1])) \wedge (v \in V(C[i])$
            \State $leftLevel \gets level\langle u \rangle + 1$
            \State $leftSide \gets side(u)$
            \State $localLeftImage \gets level(leftLevel)[leftSide]$
            \If{$C[i]$ is a cycle}
                \State CycleEmbedding($C[i]$, $v$, $right$, $localLeftImage$)
            \Else
                \State TreeQuasiCorrectEmbedding($C[i]$, $v$, $right$, $localLeftImage$)
            \EndIf
        \Else
            \State $u_1, v_1 \gets (u, v) \in E(S)$, s.t. $(u \in V(C[i - 1])) \wedge (v \in V(C[i])$
            \State $leftLevel \gets level\langle u_1 \rangle + 1$
            \State $leftSide \gets side(u_1)$
            \State $localLeftImage \gets level(leftLevel)[leftSide]$
            
            \State $u_2, v_2 \gets (u, v) \in E(S)$, s.t. $(u \in V(C[i])) \wedge (v \in V(C[i + 1])$
            \If{$C[i]$ is a cycle}
                \State CycleEmbedding($C[i]$, $v_1$, $u_2$, $localLeftImage$)
            \Else
                \State $cur \gets C[i] \cup (u_2, v_2)$
                \State TreeQuasiCorrcetEmbedding($C[i]$, $v_1$, $v_2$, $localLeftImage$)
            \EndIf
        \EndIf
    \EndFor
\EndProcedure
\end{algorithmic}
\label{alg:component:embedding}
\end{breakablealgorithm}

We finally notice that having a procedure to embed a component with a fixed $leftImage$ it is easy to obtain a procedure which embeds with $rightImage$ fixed. We simply apply the "$left$" procedure and then flip the result.

\begin{algorithm}
\caption{Component embedding right fixed}
\begin{algorithmic}
\Procedure{ComponentEmbeddingRightFixed}{$S$, $left$, $right$, $rightImage$}
    \State ComponentEmbeddingLeftFixed($S$, $right$, $left$, $rightImage$)
    \State Flip the image of $S$ over horizontal axis maintaining the position of $right$
\EndProcedure
\end{algorithmic}
\end{algorithm}

\subsection{Dynamic algorithm}
\label{app:dynamic}
We describe how one should change the embedding of the graph after the processing of a request in an online scenario. At each moment we have some edges of a $Ladder_n$ already revealed forming connectivity components.
After an edge reveal we should reconfigure the target line graph. For that, instead of line reconfiguration we reconfigure our embedding to $Ladder_N$ that is then embedded to the line line by line and introduce some constant factor. So, we can consider the reconfiguration only of $Ladder_N$ and forget about the target line graph at all.
When doing the reconfiguration of an embedding we want to maintain the following invariants:
\begin{enumerate} 
    \item The embedding of any connectivity component is quasi-correct.
    \item For each tree in the cycle-tree decomposition its embedding respects Septum invariant~\ref{invariant:septum}.
    \item There are no maximal cycles of length $4$.
    \item Each cycle frame is completed with all ``vertical'' edges even if they are not yet revealed.
    \item There are no conflicts with cycle nodes, i.e., two nodes of a cycle do not map to same node of $Ladder_N$.
\end{enumerate}

For each newly revealed edge there are two cases: either it connects two nodes from one connectivity component or not. We are going to discuss both of them.

\subsubsection{Edge in one component}

If the new edge is already known or it forms a maximal cycle of length four, we simply ignore it. Otherwise, it forms a cycle of length at least six, since two connected nodes are already in one component. 

We then perform the following steps:
\begin{enumerate}
    \item Get the completed frame of a (possibly) new cycle.
    \item Logically ``extract'' it from the component and embed maintaining the orientation (not twisting the core that was already embedded in some way).
    \item Attach two components appeared after an extraction back into the graph, maintaining their relative order.
\end{enumerate}

\begin{algorithm}[H]
\caption{Process Edge In One Component}
\begin{algorithmic}
\Procedure{ProcessEdgeOneComponent}{$S$, $(u, v)$}
\If{Edge $(u, v)$ already exists}
    \State \Return
\EndIf
\If{Edge $(u, v)$ forms a maximal cycle of length 4}
    \State \Return
\EndIf

\State $C \gets$ maximal cycle containing $u, v$
\State $F \gets$ completed frame of $C$
\State $S_1, S_2 \gets$ connectivity components of $S \setminus F$
\State $u_1, v_1 \gets u, v:\ u \in V(F), v \in V(S_1), (u, v) \in E(S)$
\State $u_2, v_2 \gets u, v:\ u \in V(F), v \in V(S_2), (u, v) \in E(S)$
\If{$level\langle u_1 \rangle > level\langle u_2 \rangle$}
    \State $Swap(S_1, S_2),\ Swap(u_1, u_2),\ Swap(v_1, v_2)$
\EndIf
\State CycleEmbedding($F$, $u_1$, $u_2$, $\None$)
\State ComponentEmbeddingTopFixed($S_1$, $\None$, $u_1$, $image(u_1)$)
\State ComponentEmbeddingBotFixed($S_2$, $u_2$, $\None$, $image(u_2)$)
\EndProcedure
\end{algorithmic}
\label{alg:one:component:edge}
\end{algorithm}

\subsubsection{Edge between two components}

In order to obtain an amortization in the cost, we always ``move'' the smaller component to the bigger one. Thus, the main question here is how to glue a component to the existing embedding of another component. 

The idea is to consider several cases of where the smaller component will be connected to the bigger one. There are three possibilities:
\begin{enumerate}
    \item \textit{It connects to a cycle node.} In this case there are again two possibilities. Either it ``points away'' from the bigger component meaning that the cycle to which we connect is the one of the ends in the cycle-tree decomposition of the bigger component. Here, we just simply embed it to the end of the cycle-tree decomposition while possibly rotating a cycle at the end. 
    \begin{center}
        \includegraphics[scale=0.4]{images/end-cycle.png}
    \end{center}
    
    Or, the smaller component should be placed somewhere between two cycles in the cycle-tree decomposition. Here, it can be shown that this small graph should be a line-graph, and we can simply add it as a whisker, forming a larger frame. 
    \begin{center}
        \includegraphics[scale=0.4]{images/between-cycles.png}
    \end{center}
    \item \textit{It connects to a trunk core node of a tree in the cycle-tree decomposition.} It can be shown that in this case the smaller component again must be a line-graph. Thus, our only goal is to orient it and possibly two of its inner simple-graphs neighbours to maintain the Septum invariant~\ref{invariant:septum} for the corresponding tree from the decomposition. 
    \begin{center}
        \includegraphics[scale=0.4]{images/to-trunk-core.png}
    \end{center}
    \item \textit{It connects to an exit graph node of an end tree of the cycle-tree decomposition.} 
    In this case, we straightforwardly apply a static embedding algorithm of this tree and the smaller component from scratch. Please, note that only the exit graphs of the end tree will be moved since the trunk core and its inner graphs will remain.
\end{enumerate}

\begin{breakablealgorithm}
\caption{Process edge between two components}
\begin{algorithmic}
\Procedure{AddInnerWhisker}{$S$, $C$, $W$, $(u, v)$}
    \State $S \gets S \cup W \cup (u, v)$
    \State $F \gets$ completed frame of $C$
    \State $S_1, S_2 \gets$ connectivity components of $S \setminus F$
    \If{$S_1$ is embedded above $S_2$}
        \State $Swap(S_1, S_2)$
    \EndIf
    \State $s_1, t_1 \gets s, t:\ s \in V(F), t \in V(S_1), (s, t) \in E(S)$
    \State $s_2, t_2 \gets s, t:\ s \in V(F), t \in V(S_2), (s, t) \in E(S)$
    
    \State CycleEmbedding($F$, $s_1$, $s_2$, $\None$)
    \State ComponentEmbeddingTopFixed($S_1 \cup (s_1, t_1)$, $\None$, $s_1$, $image(s_1)$)
    \State ComponentEmbeddingBotFixed($S_2 \cup (s_2, t_2)$, $s_2$, $\None$, $image(s_2)$)
\EndProcedure

\Procedure{ProcessEdgeTwoComponents}{$S_1$, $S_2$, $(u, v)$}
    \State \textbf{Ensure:} $u \in V(S_1), v \in V(S_2)$
    \If{$V(S_1) < V(S_2)$}
        \State $Swap(S_1, S_2),\ Swap(u, v)$
    \EndIf
    
    \State $DC_1 \gets CycleTreeDecomposition(S_1)$
    \State Reverse $DC_1$ in a way $DC[i]$ is embeded under $DC[i + 1]\ \forall i$
    \State $A, i \gets DC_1[i], i:\ u \in V(DC_1[i])$
    
    \If{$A$ is a cycle}
        \If{$length(DC_1) = 1$}
            \If{$A$ has an inner edge}
                \If{$u$ is a top node}
                    \State $bot \gets$ arbitrary bottom node of $A$
                    \State ComponentEmbeddingBotFixed($S_1 \cup S_2 \cup (u, v)$, $bot$, $\None$, $\None$)
                \Else
                    \State $top \gets$ arbitrary top node of $A$
                    \State ComponentEmbeddingTopFixed($S_1 \cup S_2 \cup (u, v)$, $\None$, $top$, $\None$)
                \EndIf
            \Else
                \State ComponentEmbeddingBotFixed($S_1 \cup S_2 \cup (u, v)$, $\None$, $\None$, $\None$)
            \EndIf
        \ElsIf{$i = 1$}
            \State $p, q \gets p, q:\ p \in V(DC_1[i]), q \in V(DC_1[i + 1]), (p, q) \in E(S_1)$
            \If{$(u, p) \in E(S_1)$}
                AddInnerWhisker($S_1$, $A$, $S_2$, $(u, v)$)
            \Else
                \If{$u$ is not a bottom node of $A$}
                    \State Flip $A$ over diagonal containing $p$
                \EndIf
                \State ComponnetEmbeddingTopFixed($S_2$, $\None$, $u$, $image(u)$)
            \EndIf
        \ElsIf{$i = length(DC_1)$}
            \State $p, q \gets p, q:\ p \in V(DC_1[i]), q \in V(DC_1[i - 1]), (p, q) \in E(S_1)$
            \If{$(u, p) \in E(S_1)$}
                AddInnerWhisker($S_1$, $A$, $S_2$, $(u, v)$)
            \Else
                \If{$u$ is not a top node of $A$}
                    \State Flip $A$ over diagonal containing $p$
                \EndIf
                \State ComponnetEmbeddingBotFixed($S_2$, $u$, $\None$, $image(u)$)
            \EndIf
        \Else
            AddInnerWhisker($S_1$, $A$, $S_2$, $(u, v)$)
        \EndIf
    \EndIf
    
    \If{$A$ is a tree}
        \If{$u \in$ extended trunk core of $A$}
            \State Embed $v \rightarrow opposite(u)$
            \State $l_1, l_2 \gets$ $u$ neighbouring inner simple-graphs
            \State Orient $S_2, l_1, l_2$ to maintain Septum invariant in $A$
        \ElsIf{$u \in$ inner simple-graph}
            \State $S_1 \gets S_1 \cup S_2 \cup (u, v)$
            \State $l \gets$ inner simple graph containing $u$
            \State Orient $l$ to maintain Septum invariant in $A$
        \ElsIf{$u \in$ exit-graph}
            \If{$i = 1$}
                \If{$length(DC_1) = 1$}
                    \State $p \gets$ arbitrary highest node of $S_1$
                    \State $q \gets$ additional temporary node
                \Else
                    \State $p, q \gets p, q:\ p \in V(DC_1[i]), q \in V(DC_1[i + 1]), (p, q) \in E(S_1)$
                \EndIf
                \State ComponentEmbeddingTopFixed($A \cup S_2 \cup (p, q)$, $\None$, $q$, $image(q)$)
            \ElsIf{$i = length(DC_1)$}
                \State $p, q \gets p, q:\ p \in V(DC_1[i]), q \in V(DC_1[i - 1]), (p, q) \in E(S_1)$
                \State ComponentEmbeddingBotFixed($A \cup S_2 \cup (p, q)$, $q$, $\None$, $image(q)$)
            \EndIf
        \EndIf
    \EndIf

\EndProcedure
\end{algorithmic}
\label{alg:two:components:edge}
\end{breakablealgorithm}

\section{Proofs and Analysis}
\label{app:proofs}
\subsection{Strategy}\label{strategy}
At the very beginning there are no requests and we don't know any request-edges. Requests come one at a time, possibly, revealing new edges. Known edges form connectivity components which are all subgraphs of the request graph. Our strategy would be to maintain such enumeration $\sigma$ of vertices that for each connectivity component $S$
\begin{align}
\max\limits_{(u, v) \in E(S)} |\sigma(u) - \sigma(v)| \leq 12
\end{align}
We call this property of an enumeration the \textit{proximity property}. 

So, if we receive the request which was already known, we do nothing since the property persists. But if the new edge comes, we might perform a re-enumeration $\sigma$ on the vertices to maintain the property.

\subsection{Bandwidth of subgraphs}
\label{ap:bandwidth}

%


\begin{definition}
Consider two connected graphs $S$ and $G$. The correct embedding of $S$ into $G$ is a mapping $\varphi: V(S) \rightarrow V(G)$ such that:
\begin{itemize}
    \item $\varphi$ is injective
    \item $(u, v) \in E(S) \rightarrow (\varphi(u), \varphi(v)) \in E(G)$
\end{itemize}

If $\varphi$ is not injective, i.e. there are nodes $u, v$, s.t. $\varphi(u) = \varphi(v)$, we say that there is a conflict between $u$ and $v$.
\end{definition}

{
\def\thelemma{\ref{lem:embedding enumeration}}
\begin{lemma}
For each subgraph $S$ of a graph $G$, $\bandwidth(S) \leq \bandwidth(G)$.
\end{lemma}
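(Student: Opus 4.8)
The plan is to take an optimal embedding of the larger graph $G$ and show that its restriction to the vertices of $S$ already witnesses a small bandwidth for $S$, up to a harmless ``compression'' of the line. Concretely, let $n = |V(G)|$ and fix an embedding $c \in C_{G \to L_n}$ attaining the minimum, so that $\max_{(u,v) \in E(G)} |c(u) - c(v)| = \bandwidth(G)$. Since $S$ is a subgraph, $V(S) \subseteq V(G)$ and $E(S) \subseteq E(G)$, so restricting $c$ to $V(S)$ gives an injection $c' := c|_{V(S)}$ of $V(S)$ into the integer positions $\{1, \ldots, n\}$.

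The one subtlety is that an embedding of $S$ must land in $L_k$ with $k = |V(S)|$, whereas $c'$ uses positions in $\{1,\ldots,n\}$. I would handle this with an order-preserving compression. Writing the image of $c'$ as $p_1 < p_2 < \cdots < p_k$, define $\pi(p_i) = i$ and set $\hat c = \pi \circ c' \in C_{S \to L_k}$. The key (and only nontrivial) observation is that $\pi$ is non-expansive: for $i < j$ we have $|\pi(p_i) - \pi(p_j)| = j - i \le p_j - p_i = |p_i - p_j|$, because the $p_i$ are strictly increasing integers and hence spaced at least one apart. Thus collapsing the gaps never increases the distance between two surviving images.

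With this in hand the conclusion is immediate: for every edge $(u,v) \in E(S)$ we also have $(u,v) \in E(G)$, so
\[
|\hat c(u) - \hat c(v)| \le |c'(u) - c'(v)| = |c(u) - c(v)| \le \bandwidth(G).
\]
Taking the maximum over $E(S)$ shows the embedding $\hat c$ has bandwidth at most $\bandwidth(G)$, and hence $\bandwidth(S) \le \bandwidth(G)$. I expect the only point that needs care --- the ``main obstacle,'' such as it is --- to be justifying the compression step cleanly, i.e.\ making precise that restricting the domain and then re-indexing the occupied positions cannot stretch any surviving edge. Everything else is a direct consequence of the inclusions $V(S) \subseteq V(G)$, $E(S) \subseteq E(G)$ and the definition of $\bandwidth$ as a minimum over all embeddings; note in particular that the argument also confirms the definition is insensitive to embedding into a line longer than $|V(S)|$, since one can always compress down without cost.
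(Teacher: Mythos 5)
Your proof is correct and is essentially identical to the paper's own argument: the paper also restricts an optimal embedding of $G$ to the vertices of $S$ and then compresses the occupied positions via their rank (the paper's $ord_U$ function is exactly your $\pi$), using the same non-expansiveness observation $|ord_U(u) - ord_U(v)| \le |u - v|$ to conclude. The only cosmetic difference is that the paper phrases the subgraph relation through an injective edge-preserving map $\varphi: V(S) \to V(G)$ rather than literal set inclusions, which changes nothing in the argument.
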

\addtocounter{lemma}{-1}
}
\begin{proof}
Let $\varphi$  be a correct embedding of $S$ into $G$. And let $\sigma$ be the enumeration on $G$ with which the $bandwidth(G)$ is achieved.

Let $U$ be the finite set of unique natural numbers. For $v \in U$ we define $ord_U(v) = |\{u \mid u \in U, u \leq v\}|$.

We now define the enumeration $\sigma_S$ of $S$ as follows: 
\begin{align*}
    U = \{\sigma(\varphi(v)) \mid v \in S\}\\
    \sigma_S(v) = ord_U(\sigma(\varphi(v)))
\end{align*}

We state that $\max\limits_{(u,v) \in E(S)}|\sigma_S(u) - \sigma_S(v)| \leq bandwidth(G)$. This follows from two facts:
\begin{itemize}
    \item For every $(u, v) \in E(S)$ 
    \begin{align*}
    |\sigma(\varphi(u)) - \sigma(\varphi(v))| \leq bandwidth(G) 
    \end{align*}
    since $(\varphi(u), \varphi(v)) \in E(G)$
    \item If U is a set of unique natural numbers than for every $u, v \in U$ 
    \begin{align*}
    |ord_U(u) - ord_U(v)| \leq |u - v|
    \end{align*}
\end{itemize}

Then, for each edge $(u, v) \in E(S)$ we get the following inequalities:
\begin{align*}
|\sigma_S(u) - \sigma_S(v)| = |ord_U(\sigma(\varphi(u))) - ord_U(\sigma(\varphi(v)))|\\ \leq |\sigma(\varphi(u)) - \sigma(\varphi(v))| \leq bandwidth(G)
\end{align*}
\end{proof}

We know that all the graphs that appear during the requests processing (revealing the edges) are subgraphs of $Ladder_n$. Thus, by Lemma~\ref{lem:embedding enumeration} we conclude that their $bandwidth \leq 2$. And we can use the embedding function from this Lemma to enumerate each subgraph $S$ of $Ladder_n$ with $\sigma_S$.

\begin{remark}
We do not need to worry about the top and bottom bounds of $Ladder_n$ when performing an embedding. In fact, we can perform an embedding of $S$ into the $Ladder_\infty$ and, since $S$ is connected and the embedding preserves connectivity, the whole image of $S$ will be within some $Ladder_m$ (for $m \geq n$) which is enough to obtain a requested $bandwidth \leq 2$. 
\end{remark}

\subsection{Connectivity component structure}
\label{ap:conncomp}

Requests come with time possibly revealing new edges of a request graph and forming connectivity components which are subgraphs of the request graph.

One connectivity component can be decomposed into cycles and trees. Let us now provide some statements about tree and cycle embedding.

\vspace{0.2cm}

\noindent{\large\textbf{Tree embedding}}

{
\def\thedefinition{\ref{def:trunk}}
\begin{definition}
Consider some correct embedding $\varphi$ of a tree $T$ into $Ladder_n$. Let $r = \arg\max\limits_{v \in V(T)} level\langle \varphi(v)\rangle$ be the ``rightmost'' node of the embedding and\\
$l = \arg\min\limits_{v \in V(T)} level\langle \varphi(v)\rangle $ be the ``leftmost'' node of the embedding. The trunk of $T$ is a path in $T$ connecting $l$ and $r$. The trunk of a tree $T$ for the embedding $\varphi$ is denoted with $trunk_\varphi(T)$.
\end{definition}
\addtocounter{definition}{-1}
}

{
\def\thedefinition{\ref{def:occupied}}
\begin{definition}
Let $T$ be a tree and $\varphi$ be its correct embedding into $Ladder_n$.
The level $i$ of $Ladder_n$ is called \emph{occupied} if there is a vertex $v \in V(T):\ \varphi(v) \in level_{Ladder_n}(i)$.
\end{definition}
\addtocounter{definition}{-1}
}

{
\def\thestatement{\ref{trunk each level}}
\begin{statement}
For every occupied level $i$ there is $v \in trunk_\varphi(T)$ such that $v \in level(i)$.  
\end{statement}
\addtocounter{statement}{-1}
}
\begin{proof}
By the definition of the trunk, an image goes from the minimal occupied level to the maximal. It cannot skip a level since the trunk is connected and the correct embedding preserves connectivity.
\end{proof}

The trunk of a tree in an embedding is an useful concept to define since the following holds for it.

{
\def\thelemma{\ref{lem:left line-graphs}}
\begin{lemma}
Let $T$ be a tree correctly embedded into $Ladder_n$ by some embedding $\varphi$. Then, all the connected components in $T \setminus trunk_\varphi(T)$ are line-graphs. 
\end{lemma}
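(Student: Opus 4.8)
The plan is to exploit the tight counting structure of the ladder: each level of $Ladder_n$ contains exactly two nodes, so since $\varphi$ is an embedding (hence injective), every level receives the image of \emph{at most two} vertices of $T$. Combining this with Statement~\ref{trunk each level}, which guarantees that every occupied level already carries a vertex of $trunk_\varphi(T)$, I would immediately conclude that each level contains the image of \emph{at most one} vertex of $T \setminus trunk_\varphi(T)$. This ``one non-trunk vertex per level'' bound is the engine of the whole argument.

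Next I would fix an arbitrary connected component $K$ of $T \setminus trunk_\varphi(T)$. Because the vertices of $K$ are all non-trunk vertices, the observation above says that $\varphi$ maps at most one vertex of $K$ into any single level. Now I would argue that no edge of $K$ can be a rung (vertical) edge of the ladder: a rung joins $level(i)[1]$ and $level(i)[2]$, i.e.\ two nodes on the same level $i$, so using it would force two vertices of $K$ onto level $i$, contradicting the per-level bound. Hence every edge of $K$ is mapped to a horizontal edge of the ladder, of the form $\big(l_s[i], l_s[i+1]\big)$.

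Since horizontal edges preserve the side of the ladder and $K$ is connected, the whole image $\varphi(K)$ lies on a single side $l_s$, that is, inside one of the two line-graphs forming the ladder. For a vertex $v \in K$ with $\varphi(v) = l_s[i]$, its neighbours in $K$ can only map to the horizontal ladder-neighbours $l_s[i-1]$ and $l_s[i+1]$, so $v$ has degree at most two \emph{within} $K$. A tree in which every vertex has degree at most two is a path, i.e.\ a line-graph, which is exactly the claim. (The edge joining $K$ to the trunk does not enter this count, since its trunk endpoint is not a vertex of $K$.)

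The only genuinely delicate point is the passage from the per-level counting bound to the exclusion of rung edges, and then to the bound on the internal degree of each vertex of $K$; everything else is bookkeeping on the two edge-types of the ladder. I would therefore present the counting step carefully — in particular spelling out that every vertex of $T$ sits on an occupied level so that Statement~\ref{trunk each level} applies to it — and treat the ``max degree two implies path'' conclusion as routine.
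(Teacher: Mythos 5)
Your proof is correct and rests on exactly the same two ingredients as the paper's: Statement~\ref{trunk each level} (the trunk meets every occupied level) plus the fact that a level has only two slots, so each level holds at most one non-trunk vertex. The paper phrases this as a contradiction (a degree-three vertex in a component would force two same-level non-trunk vertices), while you run the same counting directly to exclude rung edges and bound internal degrees by two, but the argument is essentially identical.
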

\addtocounter{lemma}{-1}
}
\begin{proof}
Suppose that it is not true and then there should exist a subgraph $S$ of $T$ such that $V(S) \cap V(trunk_\varphi(T)) = \emptyset$ and $S$ contains a node of degree three. Since there is a node of degree three in $S$ we can state that there are two nodes of $S$, say $u$ and $v$ with the same level ($level\langle\varphi(v)\rangle = level\langle\varphi(u)\rangle$). But the image of the tree trunk passes through all occupied levels of the grid by Statement \ref{trunk each level}. Hence, either $u$ or $v$ $\in trunk_\varphi(T)$ which contradicts the assumption.
\end{proof}

The bad thing about the trunk is that it depends on the embedding. And there can be several correct embeddings of the same tree giving different trunks. So, we introduce the concept of a \textit{trunk core} which alleviates this issue. But at first, we prove some technical statements.

\begin{statement}\label{pass through three node}
For the tree $T$, disregarding the correct embedding $\varphi$, the $trunk_\varphi(T)$ must pass through a node of degree three if it has no neighbours of degree three. If there are two adjacent nodes with degree three, the trunk must pass through at least one of them. 
\end{statement}
\begin{proof}
First, consider the case of a node with no neighbours of degree three. Let's call it $a$. To prove by contradiction we assume that the trunk does not pass through $a$. Let's call $a$-s neighbours $b, c$ and $d$. W.l.o.g assume that
\begin{align}
    \varphi(a) = level(i)[1]\\
    \varphi(b) = level(i - 1)[1]\\
    \varphi(c) = level(i)[2]\\
    \varphi(d) = level(i + 1)[1]
\end{align}

\va{picture}

Since the trunk does not pass through $a$ and by Statement~\ref{trunk each level} it passes through the level $i$ it should pass through $c$. If $c$ has degree one, the trunk contains one node from level $i$, and does not contain any node from $i + 1$, thus, this trunk cannot contain the topmost node.  If $c$ has degree two, we say that its second neighbour is mapped to $level(i - 1)[2]$. The case when it is mapped to $level(i + 1)[2]$ is symmetric. But then trunk does not pass through the $i + 1$ level which contradicts the Statement~\ref{trunk each level}.

Now, coming to the case with two adjacent nodes of degree three, we have two adjacent nodes $a$ and $b$ of degree three. And let $c, d$ be the rest neighbours of $a$ and $e, f$ be the rest neighbours of $b$. If $a$ and $b$ are embedded to the same level, then by Statement~\ref{trunk each level} the trunk passes through at least one of them. Suppose now that $a$ and $b$ are on different levels, say
\begin{align}
    \varphi(a) = level(i)[1]\\
    \varphi(b) = level(i + 1)[1]\\
    \varphi(c) = level(i)[2]\\
    \varphi(d) = level(i - 1)[2]\\
    \varphi(e) = level(i + 1)[2]\\
    \varphi(f) = level(i + 2)[1]
\end{align}

Since the edge $(a, b)$ is a bridge between two connected components of a tree and the trunk contains nodes form both components the trunk should pass through the edge $(a, b)$, so it passes through both $a$ and $b$.

\va{pictures}
\end{proof}

{
\def\thelemma{\ref{trunk certain nodes}}
\begin{lemma}
For the tree $T$ for each node $v$ of degree three (except for maximum two of them) we can verify in polynomial time if for any correct embedding $\varphi$ $trunk_\varphi(T)$ passes through $v$ or not. \end{lemma}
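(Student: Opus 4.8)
The plan is to reduce the question to a simple count: for each degree-three node, how many of its three hanging subtrees contain another degree-three node. I claim this count determines trunk-membership for all but at most two extremal degree-three nodes. First I would recall the forcing tools already available. Statement~\ref{pass through three node} shows that a degree-three node with no degree-three neighbour lies on the trunk of every correct embedding, while Lemma~\ref{lem:left line-graphs} shows that every connected component of $T \setminus trunk_\varphi(T)$ is a line-graph, hence has maximum degree two. The latter is the crucial leverage: if a degree-three node $v$ were missing from some trunk, it would sit inside a line-graph component, so at least two of its three incident edges must stay inside that component and at most one may lead to the trunk. In particular $v$ can be avoided by a trunk only if it is adjacent to a trunk node and two of its three subtrees are themselves embeddable as line-graphs.

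Next I would fix, for each degree-three node $v$, the three components $A,B,C$ obtained by deleting $v$ from $T$, and prove the following dichotomy. If at least two of $A,B,C$ contain a degree-three node, then $v$ lies on the trunk of every correct embedding: each such subtree contains, by Statement~\ref{pass through three node}, a node forced onto the trunk, and since the trunk is a simple path that by Statement~\ref{trunk each level} meets every occupied level, a path containing forced nodes in two distinct subtrees of $v$ must pass through $v$ itself, because in a tree the unique path between those two nodes runs through $v$. Conversely, if at most one of $A,B,C$ contains a degree-three node, then the other two subtrees carry no branching and can be oriented as line-graphs hanging off $v$; one can then exhibit both a correct embedding whose trunk avoids $v$ and one whose trunk uses $v$, so such a $v$ is genuinely undecided. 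These undecided nodes are exactly the endpoints of the minimal subtree $T_D$ spanning all degree-three nodes.

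The decision procedure is then immediate and polynomial: compute the set $D$ of degree-three vertices, form the minimal spanning subtree $T_D$, and output ``on every trunk'' for each degree-three node other than the endpoints of $T_D$. It remains to bound the number of undecided nodes by two, and this is where I expect the main obstacle to lie. I would argue that $T_D$ is a \emph{path}: if $T_D$ had a vertex of degree at least three, that vertex would have three subtrees each containing a degree-three node, forcing a simple-path trunk to enter three branches at once, which is impossible; and a closer geometric analysis of the width-two ladder — in the style of the case analysis in the proof of Statement~\ref{pass through three node}, using that each occupied level holds only two grid nodes and that the bandwidth is $2$ — rules out embedding such a branching configuration at all. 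Since a path has exactly two endpoints, at most two degree-three nodes remain undecided, which is precisely the claimed bound. The delicate points throughout are the converse (avoidability) direction and the path-structure claim, both of which rest on turning the abstract ``line-graph component'' conclusion of Lemma~\ref{lem:left line-graphs} into a concrete realizable embedding on the ladder.
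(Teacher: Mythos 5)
Your forcing direction is essentially sound (after a small repair), but the second half of your argument contains a genuine error that breaks the whole count. The repair first: Statement~\ref{pass through three node} does \emph{not} force a trunk node inside every subtree that contains a degree-three node. If the only degree-three node $a$ of a component $A$ of $T\setminus v$ is adjacent to $v$ (a ``paired'' node in the paper's terminology), the statement only guarantees that $a$ \emph{or} $v$ is on the trunk. Your dichotomy's first half survives because in that sub-case $v$ itself ends up on the trunk, which is the desired conclusion; but this same phenomenon kills both your converse and your structural claim about $T_D$.

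Concretely, the claim ``$T_D$ is a path'' is false for ladder-embeddable trees. Take the following subtree of the ladder: the row-one path on columns $i-5,\dots,i+5$; a row-two path on columns $i-2,\dots,i+2$ attached to row one only by the rung $\bigl((1,i),(2,i)\bigr)$; and two extra leaves $(2,i-3)$ and $(2,i+3)$ attached to $(1,i-3)$ and $(1,i+3)$ respectively. This tree is already drawn inside the ladder, so it is embeddable, and its degree-three nodes are $w=(1,i)$, $a_3=(2,i)$, $a_1=(1,i-3)$, $a_2=(1,i+3)$; hence $T_D$ is a spider centered at $w$ with three leaves $a_1,a_2,a_3$ --- not a path. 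Your argument that such a configuration ``forces the trunk to enter three branches at once'' fails precisely because of the repaired forcing: the branch containing $a_3$ never needs to contain a trunk node, since the pair condition for $(w,a_3)$ is satisfied by $w$. Indeed, $a_1$ and $a_2$ are single, hence on every trunk, hence so is the whole $a_1$--$a_2$ path through $w$; since the trunk is a simple path it cannot additionally use the edge $(w,a_3)$, so $a_3$ lies on \emph{no} trunk. This also refutes your converse: $a_3$ has at most one branching subtree, yet it is not ``genuinely undecided'' --- it is decidably never on the trunk. Your procedure would declare all three leaves $a_1,a_2,a_3$ of $T_D$ undecided (exceeding the allowed two), even though each of them is decidable. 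The paper avoids this trap by arguing through support nodes: the partner of an \emph{inner} trunk-core node is decidably off every trunk (the trunk, being a path, already uses two edges at that inner node), and the only genuine uncertainty concerns the partners of the two \emph{end} nodes of the trunk core, which is where the bound of two exceptions comes from.
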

\addtocounter{lemma}{-1}
}
\begin{proof}
\va{Possibly, rewrite the proof.}
We call a pair of adjacent nodes of degree three \emph{``paired''} nodes. We call a node of degree three with no neighbours of degree three \emph{``single''}.

If the tree contains not more than two nodes of degree three, the statement is trivial. So, we suppose that there exist at least three nodes of degree three.

The trunk passes through the single nodes by Statement~\ref{pass through three node}. Thus we are interested in paired nodes. Consider such pair. Let's call its nodes $a$ and $b$. By the Statement~\ref{pass through three node} we know that either $a$ or $b$ is in the trunk. 

By the assumption there exist either another single node or other paired nodes. If there is a single node, let's call it $c$, we know that it is in the trunk. $c$ is reachable from $a$ and $b$ and since we have tree either $a$ is on path from $b$ to $c$ or $b$ is on path from $a$ to $c$. W.l.o.g. assume $b$ is on a path from $a$ to $c$. But this implies that $b$ is in the trunk, because if not, $a$ is, and, thus, there are two paths from $a$ to $c$~--- the trunk and the one containing $b$. Thus, we have a cycle, which is impossible since we have a tree.

If there are no single nodes, there are paired nodes. We denote them with $u$ and $v$. $u$ and $v$ are reachable from $a$, thus, w.l.o.g. we can assume that $u$ is on the path from $a$ to $v$. If now $b$ is on the path from $a$ to $u$, we have the following: $a \rightarrow b \rightsquigarrow u \rightarrow v$. By Statement~\ref{pass through three node} we know that the trunk must pass through either $u$ or $v$. Denote the one the trunk passes through with $c$. We can reduce this case to the previous one, if we take any $c = u$ or $c = v$. Applying the same reasoning we deduce that $b$ must be in the trunk.

\emph{Support nodes} are the nodes of two types: either it is a single node, or it is a node that is located on the path between two other nodes with degree three. This Lemma shows that the support nodes appear in the trunk of every correct embedding.

We make a path $P$ through support nodes. For any inner node of this path which is paired there is no chance for its pair to be in a trunk if it is not in $P$ already, because the trunk is a path. So, the only uncertainty remains about at most one node in the pairs of end nodes.
\end{proof}

\begin{definition}
Path $P$ constructed in Lemma~\ref{trunk certain nodes} is called \emph{trunk core}. We denote this path for a tree $T$ as $trunkCore(T)$. Note that it can be embedded into $Ladder_n$.
\end{definition}

\begin{definition}
The embedding $\varphi$ of a line-graph $l$ on the grid is called monotone if the nodes $\varphi(l[i])$ and $\varphi(l[j])$ are on the same level of the grid only when they are adjacent on $T$. 
\end{definition}

\begin{lemma}\label{monotone end-nodes}
If a line graph is embedded preserving edges into $Ladder_n$ with no self-intersections non-monotonically then one of the end-points shares a level with a node of a path it is not adjacent with.
\end{lemma}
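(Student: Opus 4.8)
The plan is to argue by an extremal (minimal-span) choice followed by a forcing argument. First I would reformulate the hypothesis: since the embedding preserves edges and has no self-intersections, every step along $l$ is either \emph{horizontal} (the level changes by one, the side is kept) or \emph{vertical} (the level is kept, the side flips), and each level of $Ladder_n$ carries exactly the two grid nodes $level(i)[1]$ and $level(i)[2]$. Consequently, non-monotonicity means precisely that some level $i$ carries two path nodes $l[a]$ and $l[b]$ with $a<b$ and $b\ge a+2$; since they must occupy the two distinct nodes $level(i)[1]$ and $level(i)[2]$, they sit on opposite sides of the ladder.

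Among all such coincidences I would pick one with $b-a$ minimal. Then none of $l[a+1],\dots,l[b-1]$ lies on level $i$ (both nodes of that level are already taken), so the sub-path between them is a detour that leaves level $i$ to one side; without loss of generality it stays at levels $\ge i+1$ and returns, so that $l[a+1]$ and $l[b-1]$ both lie on level $i+1$. By minimality this detour contains no coincidence of its own, hence it is monotone; a monotone detour whose first and last nodes share level $i+1$ must have them adjacent, and a short case check rules out $b=a+2$ (that would force $l[a]=l[b]$, contradicting injectivity). This pins the detour to a single vertical move, so the four nodes $l[a],l[a+1],l[a+2],l[a+3]$ form a U-turn occupying all four grid nodes of levels $i$ and $i+1$, with $l[a]$ and $l[a+3]=l[b]$ being the two nodes of level $i$ on opposite sides.

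The heart of the proof is the forcing that follows. Because levels $i$ and $i+1$ are now completely filled, the neighbour of $l[a]$ preceding the U-turn (if $a>1$) can only be the node of level $i-1$ on $l[a]$'s side, and symmetrically the successor of $l[a+3]$ (if $a+3<m$) can only be the opposite-side node of level $i-1$. Iterating, the two ends of the path are squeezed into marching in lockstep down to strictly smaller levels on opposite sides: I would show by induction on $t$ that, as long as they exist, $l[a-t]$ and $l[a+3+t]$ occupy the two nodes of level $i-t$, the point being that whenever a tail reaches a level, the opposite-side node of that level is already occupied by the other tail, leaving the next-lower node as the only free continuation. Each tail can therefore stop only at a genuine endpoint of $l$, so one tail terminates at $l[1]$ and the other at $l[m]$.

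To finish, let $l[1]$ sit on level $i-\alpha$ and $l[m]$ on level $i-\beta$, and assume without loss of generality $\alpha\le\beta$. Since the opposite tail runs down at least to level $i-\beta\le i-\alpha$, the node $l[a+3+\alpha]$ occupies the other grid node of level $i-\alpha$; thus the endpoint $l[1]$ shares its level with $l[a+3+\alpha]$, whose index exceeds $1$ by at least three, i.e. with a node it is not adjacent to. The degenerate cases $\alpha=0$ (where $a=1$) and $\beta=0$ (where $b=m$) are covered by the very same formula. I expect the main obstacle to be making the lockstep-marching step fully rigorous — establishing by induction that the two tails stay level-synchronized, so the opposite side is always already occupied as a tail passes a level — together with the careful minimality argument that collapses the innermost detour to a single U-turn.
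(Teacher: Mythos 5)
Your proof is correct and follows essentially the same route as the paper's: after normalizing the coincidence (the paper takes the smallest first index and keeps an arbitrary-length detour, you take a minimal-gap coincidence and collapse it to a U-turn), both arguments force the two tails of the path to march in lockstep down opposite sides of the ladder until the shorter tail hits an endpoint, which then shares a level with a non-adjacent node of the other tail. One small polish: define $\alpha$ and $\beta$ as the tail lengths $a-1$ and $m-a-3$ rather than as the levels of $l[1]$ and $l[m]$ (the latter are only pinned down by the lockstep induction for $t \leq \min(\alpha,\beta)$, since once one tail ends the other is no longer forced); with that reading your concluding formula is exactly right.
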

\begin{proof}
Denote a line graph $l$. Let's say $i$ is the smallest index such that $l[i]$ shares level with some other non-adjacent node $l[j]$, $|i - j| > 1$. W.l.o.g. let's assume that $l[i]$ is embedded to $level(k)[1]$. Since $i$ was chosen the smallest $j > i$. Let us assume that $l[i - 1]$ is embedded to $level(k - 1)[1]$. Then, since $l[j]$ is embedded to $level(k)[2]$, $l[i + 1]$ is embedded into $level(k + 1)[1]$. We also state that $l[j - 1]$ is embedded to $level(k + 1)[2]$, since if it is embedded into $level(k - 1)[2]$, the path should go from $l[i + 1]$ to $l[j - 1]$ (note that $i + 1 < j - 1$) without passing through level $k$ which is impossible. So we have the following embeddings:
\begin{align}
    l[i] \rightarrow level(k)[1]\\
    l[i - 1] \rightarrow level(k - 1)[1]\\
    l[j] \rightarrow level(k)[2]\\
    l[j + 1] \rightarrow level(k - 1)[2]
\end{align}

It is easy to see that $l[j + 2]$ has no other options but to be embedded to $level(k - 2)[2]$. But then $l[i - 3]$ should be embedded to $level(k - 3)[1]$ and so on $l[j + t] \rightarrow level(k - t)[2]$ and $l[i - t] \rightarrow level(k - t)[2]$ in general. We now take $t = \min(i - 1, length(p) - j)$ so either $l[i - t]$ or $l[j + t]$ is an end nodes and they both exist. They share level, so the lemma is proved. 
\end{proof}

\begin{lemma}\label{trunk core monotone}
The trunk core of a tree $T$ is always embedded in the monotone manner. 
\end{lemma}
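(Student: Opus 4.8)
The plan is to argue by contradiction and invoke Lemma~\ref{monotone end-nodes}. If the trunk core has at most one vertex it is monotone for trivial reasons, so I would assume it has at least two. The trunk core is a line-graph, and since it is a subpath of the tree $T$ embedded by the correct (hence injective) embedding $\varphi$, it is automatically embedded into $Ladder_n$ preserving edges and without self-intersections. Supposing the embedding of the trunk core were non-monotone, Lemma~\ref{monotone end-nodes} would then supply one of its two endpoints, say $u$, that shares a level with a trunk-core node $w$ not adjacent to $u$ in $T$. After possibly swapping the two sides of the grid I would write $\varphi(u) = level(i)[1]$ and $\varphi(w) = level(i)[2]$.

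The decisive structural step is to determine the degree of the endpoint $u$. I would observe that the endpoints of the trunk core are exactly the two extreme support nodes, and that every degree-two support node lies strictly between two degree-three nodes on the path (directly from the definition of support nodes used in Lemma~\ref{trunk certain nodes}); such a node is therefore interior and cannot be an endpoint. Hence both endpoints, and in particular $u$, have degree three in $T$. Since a correct embedding maps every edge of $T$ to an edge of the grid, all three neighbours of $u$ must occupy the three grid-neighbours of $\varphi(u)=level(i)[1]$, namely $level(i-1)[1]$, $level(i+1)[1]$, and $level(i)[2]$. In particular the grid node $level(i)[2]$ must be the image of some neighbour of $u$.

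This is precisely where the contradiction appears: $\varphi(w)=level(i)[2]$ as well, so by injectivity of $\varphi$ the node $w$ would coincide with a neighbour of $u$, contradicting the conclusion of Lemma~\ref{monotone end-nodes} that $w$ is non-adjacent to $u$. I would conclude that the trunk-core embedding cannot be non-monotone. The main obstacle I anticipate is not the final grid-adjacency clash, which is immediate, but the structural claim that both endpoints of the trunk core have degree three; once that is pinned down, a degree-three endpoint forces both grid cells of its own level to be filled by its own neighbours, which is exactly incompatible with a non-adjacent trunk-core node sharing that level. A secondary point worth checking carefully is that the degree-three endpoint indeed has all three grid-neighbours available (so that $level(i)[2]$ is genuinely forced to be a neighbour's image), which follows from the mere existence of a correct embedding of the degree-three node, as guaranteed by the embeddability of $T$ into $Ladder_n$.
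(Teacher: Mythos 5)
Your proof is correct and takes essentially the same route as the paper's: the paper likewise argues that the trunk core's end nodes have degree three, that such a node cannot share a level with any other (non-adjacent) trunk node without producing a conflict, and then concludes via Lemma~\ref{monotone end-nodes}. Your write-up is in fact more explicit than the paper's two-sentence proof, particularly in justifying that the trunk-core endpoints must have degree three and that a degree-three endpoint forces both grid cells of its level to be occupied by its own neighbours.
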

\begin{proof}
Trunk core connects nodes of degree three which cannot be embedded with any other nodes of the trunk to the same level since then either a cycle appears or we obtain a conflict. Thus, by the Lemma~\ref{monotone end-nodes} trunk core must be embedded monotonically. 
\end{proof}

From now on we assume that every mentioned tree can be embedded into the $Ladder_n$. 

{
\def\thedef{\ref{def:simple-graphs}}
\begin{definition}
Let $T$ be a tree. All the connectivity components in $T\,\setminus\,trunkCore(T)$ are called \emph{simple-graphs} of tree $T$.
\end{definition}
\addtocounter{lemma}{-1}
}

{
\def\thelemma{\ref{simple-graphs line-graphs}}
\begin{lemma}
Simple-graphs of a tree $T$ are line-graphs.
\end{lemma}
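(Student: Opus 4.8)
The plan is to show that every simple-graph $S$, being a connected subtree of $T$, has maximum degree at most two; for a tree this is equivalent to being a line-graph. Since $\trunkCore{T}$ is a connected path, removing it from the tree $T$ leaves each simple-graph $S$ attached by a single edge (the leg), and in particular $V(S)\cap V(\trunkCore{T})=\emptyset$, so no vertex of $S$ lies on the trunk core. Hence it suffices to rule out a vertex $w\in S$ with $\deg_S(w)\ge 3$. (This is the trunk-core analogue of Lemma~\ref{lem:left line-graphs}, which already gives the corresponding statement for the embedding-dependent trunk; the point is to make it independent of the embedding.)

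First I would use that $T$ embeds into $Ladder_n$, all of whose vertices have degree at most three, so every vertex of $T$ has degree at most three. Consequently, if $\deg_S(w)\ge 3$ then $\deg_T(w)=3$ and all three neighbours of $w$ already lie in $S$ (none on the trunk core). The strategy is then to derive a contradiction by showing that $w$, or one of its neighbours, must be a \emph{support node} and therefore belong to $\trunkCore{T}$, contradicting $V(S)\cap V(\trunkCore{T})=\emptyset$.

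I would split into two cases according to whether $w$ has a neighbour of degree three. If $w$ has no degree-three neighbour, then by Statement~\ref{pass through three node} the trunk of every correct embedding passes through $w$, so $w$ is a support node of the first type and lies on $\trunkCore{T}$, a contradiction. If instead $w$ has a degree-three neighbour $w'$ (which then also lies in $S$), let $y$ be whichever of $w,w'$ is closer to the trunk core along $T$ and $x$ the other; since they are adjacent, the path from the trunk core into $S$ reaches $x$ only through $y$. Choosing any degree-three vertex $c$ of $\trunkCore{T}$ (its endpoints are degree-three support nodes, so one exists whenever the core is nonempty), the unique $x$–$c$ path enters the trunk core through the leg and therefore passes through $y$ via a neighbour other than the edge $(y,x)$. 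Thus $y$ is strictly between the two degree-three vertices $x$ and $c$, making $y$ a support node of the second type, so $y\in V(\trunkCore{T})$; but $y\in V(S)$, a contradiction. In either case $\deg_S(w)\le 2$, so $S$ is a line-graph.

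The main obstacle is the second case: one must argue the orientation of the adjacent degree-three pair $x,y$ relative to the trunk core correctly and guarantee a degree-three anchor $c$ inside $\trunkCore{T}$, so that the node closer to the core is genuinely interior to a path between two degree-three vertices, matching the second-type support-node definition underlying Lemma~\ref{trunk certain nodes}. I would finally dispatch the degenerate base case separately: if $T$ has no trunk core then, as noted after Lemma~\ref{trunk certain nodes}, it has at most two vertices of degree three and is handled directly as a simple tree, so the claim is immediate there.
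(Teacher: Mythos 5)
Your proof is correct and takes essentially the same route as the paper: the paper's (very terse) argument rests on the claim that every degree-three node of $T$ is either in the trunk core or adjacent to it, which is precisely the contrapositive of what you establish through your single/paired case analysis and support-node reasoning. Your version merely fills in the justification the paper leaves implicit (the existence of a degree-three anchor in the trunk core) and explicitly flags the degenerate empty-trunk-core case, which the paper also excludes by treating such trees separately.
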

\addtocounter{lemma}{-1}
}
\begin{proof}
Note that all the nodes of degree three in $T$ are either in the trunk core or they are adjacent to the trunk core. Hence after removing the nodes of the trunk core no nodes of degree three are left and, thus, all the graphs left are line-graphs.
\end{proof}

{
\def\thedefinition{\ref{def:hands}}
\begin{definition}
The edge between a simple-graph and the trunk core is called a \emph{leg}.

The end of a leg in the simple-graph is called a \emph{head} of the simple-graph.

The end of a leg in the trunk core is called a \emph{foot} of the simple-graph.

If you remove the head of a simple-graph and it falls apart into two connected components, such simple-graph is called \emph{two-handed} and those parts are called its \emph{hands}. Otherwise, the graph is called \emph{one-handed}, and the sole remaining component is called a \emph{hand}. If there are no nodes in the simple-graph but just a head it is called zero-handed.

\begin{center}
\includegraphics[scale=0.3]{images/simple-graph.png}
\end{center}
\end{definition}
\addtocounter{definition}{-1}
}

{
\def\thedefinition{\ref{def:exit}}
\begin{definition}
A simple-graph connected to the end nodes of the trunk core is called \emph{exit-graph}.
\end{definition}
\addtocounter{definition}{-1}
}

{
\def\thedefinition{\ref{def:inner}}
\begin{definition}
A simple-graphs connected to the inner nodes of the trunk core is called \emph{inner-graph}.
\end{definition}
\addtocounter{definition}{-1}
}

Please note that the next definition is about a much larger ladder $Ladder_N$ rather than $Ladder_n$. $N$ should be approximately equal to $2 \cdot n$.

{
\def\thedefinition{\ref{def:quasi-correct}}
\begin{definition}
An embedding $\varphi: V(G) \rightarrow V(Ladder_N)$ of a graph $G$ into $Ladder_N$ is called \emph{quasi-correct} if:
\begin{itemize}
    \item $(u, v) \in E(G) \Rightarrow (\varphi(u), \varphi(v)) \in E(Ladder_N)$, i.e., images of adjacent vertices in $G$ are adjacent in the grid.
    \item There are no more than \textbf{three} nodes mapped into each level of $Ladder_N$, i.e., the two grid nodes on each level are the images of no more than three nodes.
\end{itemize}
\end{definition}
\addtocounter{definition}{-1}
}

We might think of a quasi-correct embedding as an embedding into levels of the grid with no more than three nodes embedded to the same level. We then can compose this embedding with an embedding of a grid into the line which is the enumeration level by level. More formally if a node $u$ is embedded to the level $i$ and a node $v$ is embedded to the level $j$ and $i < j$ then the resulting number of $u$ on the line is smaller then the number of $v$, but if two nodes are embedded to the same level, we give no guarantee. 

{
\def\thelemma{\ref{quasi const cost}}
\begin{lemma}
For any graph mapped into the ladder graph by the quasi-correct embedding as described above can be mapped onto the line level by level with the property that any pair of adjacent nodes are embedded at the distance of at most five.
\end{lemma}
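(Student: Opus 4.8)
The plan is to analyze directly the line enumeration obtained by reading the occupied levels of $Ladder_N$ in increasing order. Recall that under this level-by-level map, every node placed on level $i$ receives a smaller line position than every node placed on a level $j > i$, while nodes sharing a level are ordered arbitrarily among themselves. Writing $c_i$ for the number of vertices of $G$ whose image lies on level $i$, the quasi-correctness hypothesis gives $c_i \le 3$ for every $i$, and this bound of three images per level is the only structural fact the argument needs.

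First I would fix an edge $(u,v) \in E(G)$ and invoke quasi-correctness: since $(\varphi(u),\varphi(v)) \in E(Ladder_N)$, the two images are adjacent in the grid, so their levels differ by at most one. This splits the argument into two cases, according to whether $level\langle\varphi(u)\rangle = level\langle\varphi(v)\rangle$ or the two levels are consecutive. In both cases I would bound the number of line positions separating the endpoints purely by counting occupants of the relevant levels.

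In the same-level case, both $u$ and $v$ sit among the at most $c_i \le 3$ nodes occupying that single level, which form a block of consecutive line positions; hence their line distance is at most $c_i - 1 \le 2$. In the consecutive-level case, say levels $i$ and $i+1$, the $c_i$ nodes of level $i$ occupy a block of consecutive positions immediately followed by the block of $c_{i+1}$ positions of level $i+1$ (there is no integer level strictly between them, so omitting empty levels does not insert anything in between). The largest possible separation is realized when $u$ takes the first position of the level-$i$ block and $v$ the last position of the level-$(i+1)$ block, or vice versa, giving a line distance of at most $c_i + c_{i+1} - 1 \le 3 + 3 - 1 = 5$.

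Combining the two cases shows that every edge of $G$ has its endpoints at line distance at most $5$, which is the claim. There is essentially no obstacle beyond this bookkeeping: the only quantitative input is the three-images-per-level bound, and the constant $5$ is exactly $3 + 3 - 1$, i.e. the worst case in which two fully loaded consecutive levels are read back to back. The mild subtlety worth stating explicitly is that, for a horizontal edge, both incident grid levels are necessarily occupied and integer-consecutive, so their position blocks are genuinely adjacent on the line and no extra nodes can be interposed.
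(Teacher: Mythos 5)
Your proof is correct and follows essentially the same route as the paper's: the paper likewise observes that adjacent vertices land on levels differing by at most one and that at most three nodes occupy each level, so at most four nodes can separate the two images on the line, giving distance at most five. Your version merely spells out the same counting with an explicit case split (same level versus consecutive levels) and the block-adjacency bookkeeping that the paper leaves implicit.
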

\addtocounter{lemma}{-1}
}
\begin{proof} 
Since two adjacent vertices are embedded to the levels with a number difference of at most 1, we can state that there are no more than 4 nodes between them in the line, since there are no more than 3 nodes per level.  
\end{proof}

\subsubsection{Tree embedding strategy}
\label{tree embedding strategy}

\noindent We start with the discussion on how to embed a tree with $|V(trunkCore(T))| \leq 1$.
Such tree can have $\leq 3$ nodes of degree three, since, otherwise, there are at least four nodes of degree three and the trunk core has at least two nodes:
\begin{itemize}
    \item $\geq 2$ single nodes. In this case, they are both in the trunk core.
    \item at least one single node and at least one paired nodes. In this case, one node from a pair and a single node are in the trunk core.
    \item at least two disjoint paired nodes. In this case, for each pair we know for certain the member who is in the trunk, thus we again have at least two nodes in the trunk core.
\end{itemize}

Further, we analyse the cases depending on the number of nodes of degree three.
We need the following technical Lemma.
\begin{lemma}\label{three of three}
If there is a tree with three nodes of degree three $a$, $b$, and $c$ and there are edges $(a, b)$ and $(b, c)$, then the third neighbour of $b$ is of degree one and for any correct embedding $a$, $b$, and $c$ are embedded to the different levels. 
\end{lemma}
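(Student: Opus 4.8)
The plan is to fix an arbitrary correct embedding $\varphi$ of $T$ into the ladder (one exists by our standing assumption) and argue entirely from local grid-adjacency and injectivity. Recall that two grid-adjacent nodes lie either on the same level (joined by a rung) or on consecutive levels (joined by a side edge), and that every grid node has at most three neighbours. By the remark in Section~\ref{ap:bandwidth} we may work in $Ladder_\infty$, so every grid node referenced below exists; note also that a degree-three vertex cannot sit at an extremal occupied level, so all the referenced cells around $a$, $b$, $c$ are genuinely present.

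First I would prove the level-distinctness claim by ruling out, for any fixed $\varphi$, each way two of $a,b,c$ could share a level. If $a$ and $c$ shared a level, they would occupy $(1,i)$ and $(2,i)$; but these two nodes have no common grid-neighbour, so $\varphi(b)$ could not be adjacent to both $\varphi(a)$ and $\varphi(c)$, contradicting the edges $(a,b),(b,c)$. If instead $a$ and $b$ shared a level (the case $b,c$ being symmetric), then $\varphi(a)=(1,i)$, $\varphi(b)=(2,i)$, forcing $a$'s two remaining neighbours onto $(1,i-1),(1,i+1)$ and $b$'s remaining neighbours $c,d$ onto $(2,i-1),(2,i+1)$; wherever $c$ lands, its degree-three requirement forces one of its neighbours onto $(1,i-1)$ or $(1,i+1)$, which is already occupied by a neighbour of $a$, violating injectivity. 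Hence $a,b,c$ lie on three distinct levels.

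From distinct levels I would then read off the forced local picture: since $(a,b)$ and $(b,c)$ are now side edges, $a$ and $c$ sit one level below and one level above $b$ on the same side as $b$, say $\varphi(a)=(s,x-1)$, $\varphi(b)=(s,x)$, $\varphi(c)=(s,x+1)$, where $\bar s$ denotes the opposite side. The only remaining free grid-neighbour of $(s,x)$ is $(\bar s,x)$, so $b$'s third neighbour $d$ satisfies $\varphi(d)=(\bar s,x)$. Moreover, because $a$ and $c$ have degree three, their non-$b$ neighbours are forced to fill exactly $(\bar s,x-1),(s,x-2)$ and $(\bar s,x+1),(s,x+2)$, respectively. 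To finish, I would assume $\deg(d)\ge 2$ and derive a contradiction: $d$'s second neighbour $e$ must occupy a free grid-neighbour of $(\bar s,x)$, i.e.\ $(\bar s,x-1)$ or $(\bar s,x+1)$, both of which are already occupied; and $e$ cannot coincide with the node already sitting there, since that would close a $4$-cycle through $b$, contradicting that $T$ is a tree. This injectivity clash forces $\deg(d)=1$.

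I expect the main obstacle to be the bookkeeping in the level-distinctness case analysis: keeping track of exactly which grid cells become forced-occupied once a degree-three vertex is placed, and correctly exploiting the symmetries (top/bottom side, left/right direction) so that the ``WLOG'' reductions are genuinely valid. Everything else is a short finite check once the forced occupancies around $b$ are tabulated.
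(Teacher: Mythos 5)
Your proposal is correct and takes essentially the same route as the paper: fix an arbitrary correct embedding, rule out two of $a$, $b$, $c$ sharing a level by forced occupancy of the cells around a degree-three vertex, deduce the consecutive same-side placement of $a$, $b$, $c$, and then trap $b$'s third neighbour $d$ at the opposite cell with no free adjacent cells. One small patch is needed in your second case: ``violating injectivity'' only follows once you also rule out that $c$'s neighbour coincides with $a$'s neighbour at $(1,i\pm1)$ --- a coincidence would instead make it a common neighbour of $a$ and $c$ and close a $4$-cycle through $b$, contradicting that $T$ is a tree; the paper states this explicitly (``$a$ and $c$ can't share more than one neighbour''), and you already use exactly this $4$-cycle argument in your final step for $d$ (where, incidentally, you are more careful than the paper's own wording), so the fix is immediate.
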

\begin{proof}
Consider a correct embedding $\varphi$. Say $\varphi(b) = level(i)[1]$. If now $\varphi(a) = level(i)[2]$, both $level(i + 1)[2]$ and $level(i - 1)[2]$ are occupied by neighbours of $a$, so no matter where we embed $c$, say to $level(i + 1)[1]$ there would be only one spare slot, $level(i + 2)[1]$ in this case, for two neighbours of $c$. Recall that we have a tree so $a$ and $c$ can't share more then one neighbour. \va{picture}

So the only possible embedding up to the symmetry is
\begin{align}
    \begin{cases}
        \varphi(b) = level(i)[1]\\
        \varphi(a) = level(i - 1)[1]\\
        \varphi(c) = level(i + 1)[1]\\
    \end{cases}
\end{align}

In this case $level(i \pm 1)[2]$ are occupied by neighbours of $a$ and $c$, so the third neighbour of $b$ cannot have any neighbours except for $b$ since there is no place for them.

\va{picture}
\end{proof}

Now, we consider the possible cases for the amount of nodes with degree three.

\begin{itemize}
    \item There are no nodes of degree three. In this case our tree is just a line-graph $l$ and we embed it the following way: 
    \begin{align}
        \varphi: V(l) \rightarrow Ladder_n\\
        \varphi(l[i]) = level(i)[1]
    \end{align}
    Remember that right now we allow to choose any $n$, arbitrary large. 
    \item There is one node of degree three. We can think of it as two line graphs $l_1$ and $l_2$ with additional edge $(l_1[i], l_2[1])$ for some $i$. We embed the tree in the following way:
    \begin{align}
        \varphi: V(l_1) \cup V(l_2) \rightarrow Ladder_n\\
        \varphi(l_1[j]) = level(j)[1]\\
        \varphi(l_2[j]) = level(i + j - 1)[2]
    \end{align}
    \item There are two nodes of degree three. Since $|V(trunkCore(T)| \leq 1$, we conclude that those two nodes are paired, since otherwise they would be single nodes and therefore be in the trunk core by Lemma~\ref{trunk certain nodes}. So, in this case we can present $T$ as two line-graphs $l_1$ and $l_2$ with additional edge $(l_1[i], l_2[j])$ for some $i$ and $j$. We embed $T$ in the following way:
    \begin{align}
        \varphi: V(l_1) \cup V(l_2) \rightarrow Ladder_n\\
        \varphi(l_1[k]) = level(k)[1]\\
        \varphi(l_2[k]) = level(i + k - j)[2]
    \end{align}
    \item There are three nodes of degree three. \va{Why? We proved only for four nodes...} Since $|V(trunkCore(T)| \leq 1$, we conclude that there is no single node, otherwise, it is in the trunk core and one of the other two is also in the trunk core, contradicting the assumption. 
    
    So, with our three nodes of degree three, say $a$, $b$ and $c$ we must have edges $(a, b)$ and $(b, c)$. By the Lemma~\ref{three of three} none of $a,b,c$ can be embedded into the same level, and the third neighbour of $b$ is of degree one. Denote the line-graphs connected to $a$ with $l^1_{a}$ and $l^2_{a}$ (they are line-graphs since we have only three nodes of degree three) and the line-graphs connected to $c$ with $l^1_c$ and $l^2_c$. Denote the third neighbour of $b$ with $d$. We embed as follows:
    \begin{align}
        \varphi(b) = level(2)[1]\\
        \varphi(d) = level(2)[2]\\
        \varphi(a) = level(1)[1]\\
        \varphi(c) = level(3)[1]\\
        \varphi(l^1_a[i]) = level(1 - i)[1]\\
        \varphi(l^2_a[i]) = level(2 - i)[2]\\
        \varphi(l^1_c[i]) = level(3 + i)[1]\\
        \varphi(l^2_c[i]) = level(2 + i)[2]
    \end{align}
    \item There are no other cases, since we showed that if there are four nodes with degree three then the size of the trunk core should be bigger than one.
\end{itemize}

Now, we discuss how to embed a more generic tree with $|V(trunkCore(T))| \geq 2$ into the grid. We call our embedding as $\tilde{\varphi}$.
\begin{enumerate}
    \item $\tilde{\varphi}(trunkCore(T)[i]) = level(i)[1]$
    \item Suppose $l$ is a simple-graph connected to the inner node with number $i$ of the trunk core by its $j$-th node, so the leg of $l$ is $(trunkCore[i], l[j])$. We embed $l[j]$ to the opposite of $trunkCore[i]$, i.e. $\tilde{\varphi}(l[j]) = level(i)[2]$.
    
    We also want to reserve nodes $level(|V(trunkCore(T))|)[2]$ and $level(1)[2]$ for exit-graphs, so we say we embed phantom nodes there for algorithm not to use them on Step 3.
    
    \item We now want to embed hands of simple-graphs connected to the inner trunk core nodes. Suppose we have such simple graph $l$ and we've embedded its head to $level(i)[2]$ on Step 2.
    
    If $l$ is zero-handed, it is already embedded on Step 2.
    
    If $l$ is two-handed, denote its hands with $h_1$ and $h_2$ and choose the one of embeddings from 
    \begin{align}
        \left[ \begin{array}{c}
            \begin{cases}
                \tilde{\varphi}(h_1[j]) = level(i + j)[2]\\
                \tilde{\varphi}(h_2[j]) = level(i - j)[2]\\
            \end{cases}\\
            \begin{cases}
                \tilde{\varphi}(h_1[j]) = level(i - j)[2]\\
                \tilde{\varphi}(h_2[j]) = level(i + j)[2]\\
            \end{cases} 
        \end{array}
        \right.
    \end{align}
    which does not map nodes from $V(h_1) \cup V(h_2)$ to the place nodes were mapped to on step 2.
    
    If $l$ is one-handed, denote its hand with $h$ and consider two cases:
    \begin{itemize}
        \item 
        \begin{align}
        \begin{cases}
            trunkCore(T)[i + 1]\ \parbox[t]{.6\textwidth}{is an inner node and it is a foot of another\newline one-handed or zero-handed simple-graph $l_2$\newline with hand (possibly empty) $h_2$}\\
            m(h[j]) = level(i - j)[2]\ \parbox[t]{.6\textwidth}{ maps some nodes of $h$ to the place\newline where nodes were placed on step 2}
        \end{cases}\\
        \end{align}
        
        In this case we define $\tilde{\varphi}$ for $l$ and $l_2$ at a time the following way:
        \begin{align}
            \begin{cases}
                \tilde{\varphi}(h[j]) = level(i + j)[2]\\
                \tilde{\varphi}(h_2[j]) = level(i + 1 - j)[2]
            \end{cases}
        \end{align}
        
        \item The symmetric case is when
        \begin{align}
        \begin{cases}
            trunkCore(T)[i - 1]\ \parbox[t]{.6\textwidth}{is an inner node and it is a foot of another\newline one-handed or zero-handed simple-graph $l_2$\newline with hand (possibly empty) $h_2$}\\
            m(h[j]) = level(i + j)[2]\ \parbox[t]{.6\textwidth}{ maps some nodes of $h$ to the place\newline where nodes were placed on step 2}
        \end{cases}\\
        \end{align}
        
        In this case we define $\tilde{\varphi}$ for $l$ and $l_2$ at a time the following way:
        \begin{align}
            \begin{cases}
                \tilde{\varphi}(h[j]) = level(i - j)[2]\\
                \tilde{\varphi}(h_2[j]) = level(i - 1 + j)[2]
            \end{cases}
        \end{align}
        
        \item If the previous two cases don't come true we act pretty much the similar as we did for two-handed simple-graph, namely denote hand of $l$ with $h$ and choose one of the following definitions of $\tilde{\varphi}$ which  doesn't map nodes of $h$ to the places already used on step 2:
        \begin{align}
        \left[ \begin{array}{c}
            \tilde{\varphi}(h[j]) = level(i + j)[2]\\
            \tilde{\varphi}(h[j]) = level(i - j)[2]
        \end{array}
        \right.
    \end{align}
    \end{itemize}
    
    \item The last case is to consider an exit-graph.
    
    Denote the end-node of the trunk core, to which the exit-graph is connected by $i$. $i$ is either $|V(trunkCore(T))|$ or $1$.
    
    \begin{itemize}
        \item If $i = |V(trunkCore(T))|$. There are two line-graphs connected to $i$, say $l_1$ and $l_2$. Note that they can't both be two-handed, since that means we have three nodes of degree three in a row, the middle one is the end node of the trunk core, but then the middle one by Lemma \ref{three of three} must have the third neighbour of degree one which is not the case since it is a trunk core node and trunk core nodes are all of degree $\geq 2$. 
        
        So let's assume that $l_1$ is one-handed. We embed it with 
        \begin{align}
            \tilde{\varphi}(l_1[j]) = level(i + j)[1]
        \end{align} 
        If $l_2$ is one-handed we embed it with 
        \begin{align}
            \tilde{\varphi}(l_2[j]) = level(i + j - 1)[2]
        \end{align}
        
        If $l_2$ instead has two hands $h_1$ and $h_2$ and it connects to $i$ with the node $l_2[j]$. Then we define 
        \begin{align}
            \tilde{\varphi}(l_2[j]) = level(i)[2]\\
            \tilde{\varphi}(h_1[k]) = level(i + k)[2]\\
            \tilde{\varphi}(h_2[k]) = level(i + k)[2]
        \end{align}
        
        \item If $i = 1$ we do everything symmetrically. Remember we don't care if we go out $Ladder_n$ top or bottom borders, if it happens, we can just enlarge our grid to $Ladder_m$ for some large enough $m$ to accommodate the image.
    \end{itemize}
\end{enumerate}
    
\begin{definition}
The definition of $\varphi$ on the hand(s) of the simple-graph connected to the inner trunk core node is called the orientation of that simple-graph.
\end{definition}

\begin{definition}
We say that two inner simple-graphs are neighbours if there are no other simple-graphs connected to the trunk core in between their foots.
\end{definition}

\begin{lemma}
The resulting embedding of this strategy exists and it is quasi-correct.
\end{lemma}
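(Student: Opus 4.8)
The plan is to prove the two claims separately: first that the construction is well defined (every orientation choice in Steps~2--4 can actually be made), and second that the resulting $\tilde\varphi$ satisfies the two conditions of Definition~\ref{def:quasi-correct}, namely edge preservation and the bound of three tree-nodes per level. I would begin by dispatching the base cases $|V(trunkCore(T))|\le 1$, where $\tilde\varphi$ is given by the explicit closed-form formulas; here existence is immediate, and one only checks directly from the formulas that consecutive line-graph nodes land on consecutive levels of one side and that each level receives at most the trunk node on side $[1]$ and at most two line-graph nodes on side $[2]$. Lemma~\ref{three of three} is exactly what rules out the degenerate three-in-a-row configuration in the last base case, so the explicit formula is never asked to place conflicting neighbours. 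The remainder of the argument concerns $|V(trunkCore(T))|\ge 2$.

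Edge preservation I would verify step by step and expect it to be routine. Step~1 places the trunk core as $level(i)[1]$, so consecutive core nodes occupy consecutive levels of the same side and are grid-adjacent; this is legitimate because by Lemma~\ref{trunk core monotone} the trunk core is embedded monotonically, and by Statement~\ref{trunk each level} it meets every occupied level. Each leg $(trunkCore[i],l[j])$ maps to the vertical pair $(level(i)[1],level(i)[2])$ in Step~2. In Steps~3--4 every hand, and every exit-graph (which by Lemma~\ref{simple-graphs line-graphs} is a line-graph), is laid out along one side with the level increasing or decreasing by one per node, so consecutive hand nodes are grid-adjacent and the first hand node is adjacent to its head. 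Hence every tree edge becomes a grid edge.

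The heart of the proof, and the step I expect to be the main obstacle, is simultaneously establishing (i) that a conflict-free orientation exists at each choice in Step~3 and (ii) that no level ever carries more than three tree-nodes. These two facts are intertwined: the generic orientation rules only forbid a hand from landing on a Step~2 head position, distinct hands are permitted to share a level, and the coordinated one-handed rules even allow a single controlled hand--head overlap, so the three-per-level bound must be argued globally rather than read off locally. My strategy is to fix once and for all a correct embedding of $T$ into $Ladder_n$ (one exists by hypothesis), which places at most two nodes per level, and to use it to control the combinatorics of feet and hand lengths between consecutive feet along the trunk core. Concretely, on any level side $[1]$ carries only the single trunk node (or, beyond the ends of the core, a single exit-graph node), so it suffices to bound side $[2]$ by two; side $[2]$ of a level can receive only the head footed there together with hand nodes of the simple-graphs whose feet are the nearest occupied levels on either side, and comparing against the fixed correct embedding bounds this contribution, giving the overall bound of three.

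For existence of the orientations I would argue by cases on the handedness of the inner simple-graph. For a two-handed graph the two offered orientations send the hands in opposite directions, and I would show that a Step~2 head can obstruct at most one direction within the reachable range, using the fixed correct embedding to bound how far a hand reaches before the next foot, so at least one orientation is head-free and the generic rule applies. The delicate case is a one-handed graph whose only hand is blocked, in the direction forced by a neighbouring head, under both offered orientations. Here the construction's two symmetric coordinated rules apply: they deliberately let the blocked hand overlap the obstructing head on a single level while simultaneously re-orienting that neighbour's (one- or zero-handed) hand out of the way, so that the two hands interleave and every affected level still carries exactly three nodes, namely the trunk node on side $[1]$ together with a head and one hand node on side $[2]$. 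I would show that these coordinated rules trigger precisely when no lone orientation is head-free, certify that after coordination the count never exceeds three, and argue, by processing feet in trunk-core order while maintaining the invariant that all already-placed hands respect the bound, that re-orienting one pair does not cascade into a conflict at a later foot. Packaging this non-cascading property cleanly is the part I expect to require the most care.
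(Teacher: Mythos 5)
Your proposal is correct and follows essentially the same route as the paper's proof: both fix a correct embedding $\varphi$ of $T$ into $Ladder_n$ and use it to show that for two-handed inner simple-graphs at least one orientation is conflict-free, that the coordinated re-orientation rules for one-handed graphs trigger exactly when both lone orientations are blocked (the paper's proof-by-contradiction reducing to its cases (1)/(2)), and that the three-per-level bound follows from a counting argument in which only adjacent-footed one-handed graphs can overlap heads (the paper's $a,b,c$ argument via Lemma~\ref{three of three}). The only differences are organizational: where you propose a sequential, invariant-maintaining sweep along the trunk core to rule out cascading, the paper argues globally by contradiction using its ``inductive argument'' for forced placements, but the underlying ideas coincide.
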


Before diving into prove let us discuss what does the Lemma give to us. There are three key points about the described quasi-correct embedding.

First of all, we should emphasise that such embedding can be efficiently computed.

Not only that, but it also can be recomputed easily while remaining quasi-correctness when the new vertices come, which is relevant to the online scenario.

And last but not least, recall that each two adjacent nodes are embedded at the distance at most five (see Lemma \ref{quasi const cost}) so we are not worried about serving the same request many times. 

\begin{proof}
The lemma is obvious for the trees with $|V(trunkCore(T))| \leq 1$, since we can do a correct embedding.

Denote the resulting embedding with $\tilde{\varphi}$. We know that a correct embedding exists, denote it $\varphi$.

\begin{itemize}
    \item The described embedding of the trunk core meets no constraints, so it always exists.

    \item Let $top := |V(trunkCore(T))|$.
    
    The described embedding of the exit-graphs does not have any constraints, so it exists. Let us now focus on the exit-graphs connected to $trunkCore(T)[top]$. For each node $u$ of those exit-graphs it is true that $\tilde{\varphi}(u)$ is on the $level(top)$ or higher. There are no more than three nodes of exit-graphs per level. Simple-graphs connected to the inner trunk core nodes are not allowed to pass through $level(top)$, so since they are connected, no nodes from simple graphs are embedded into levels $\geq top$. There are no nodes of the trunk core higher than $top$ and on $level(top)$ there are only one node from our exit-graphs. The exit graphs connected to the $trunkCore(T)[1]$ are all embedded to the levels $\leq 1$, so they can't interfere with the exit-graphs connected to the $trunkCore(T)[top]$. Thus we conclude that nodes of exit-graphs connected to the $trunkCore(T)[top]$ do not violate quasi-correctness since there are no more than three nodes on their levels. The same for the exit-graphs connected to the $trunkCore(T)[1]$.

    \item Now to the two-handed inner simple-graphs. The leg of each such simple-graph for any correct embedding must be embedded horizontally, i.e. $\varphi(foot)$ and $\varphi(head)$ must be on the same level. This is since we know that the trunk core image is monotone by Lemma \ref{trunk core monotone} and it can't be if $\varphi(foot)$ and $\varphi(head)$ are on the different levels:

    Say $\varphi(foot) = level(i)[1]$ and $\varphi(head) = level(i + 1)[1]$. Then $level(i + 1)[2]$ and $level(i + 2)[1]$ are occupied by $head$ neighbours since it is of degree three. The $foot$ is also of degree three because it is an inner trunk core node with a simple-graph connected to it. Thus $level(i - 1)[1]$ and $level(i)[2]$ are occupied with its neighbours, trunk core nodes. But the node mapped to $level(i)[2]$ can't be the end node of the trunk core since then it is of degree three and the node mapped to $level(i + 1)[2]$ is its neighbour thus we obtain a cycle $\varphi^{-1}(\{level(i)[1], level(i + 1)[1], level(i + 1)[2], level(i)[2]\})$. So there is another trunk core node after it and it is inevitably mapped to $level(i - 1)[2]$ violating monotone property of the trunk core embedding.
    
    \va{picture}

    Now denote the hands of our two-handed graph with $h_1$ and $h_2$ and let's say that $\varphi(foot) = level(i)[1],\ \varphi(head) = level(i)[2]$ and $\varphi(h_1[1]) = level(i + 1)[2]$. Then $\varphi(h_1[2])$ must be $level(i + 2)[2]$ since $level(i + 1)[1]$ is occupied by the $foot$ trunk core neighbour $next$ (remember $foot$ is inner). If now $next$ is of degree three we obtain a conflict or a cycle, since $next's$ neighbour occupies $level(i + 2)[2]$. If not, $next$ is an inner trunk core node and we continue with the $level(i + 3)[2]$ for the $h_1[3]$ and $level(i + 3)[1]$ for the next trunk core node of $next$. So we do until we ran out of $h_1$ nodes. We now say that there are no nodes of degree three in 
    \begin{align}
        \{trunkCore(T)[i + j] \mid j \in [|V(h_1)|]\}
    \end{align}
    since if there is $j$ such that $trunkCore(T)[i + j]$ is of degree three, we obtain a conflict between the third neighbour of $trunkCore(T)[i + j]$ and $h_1[j]$. That means that $\tilde{\varphi}(h_1[j]) = level(i + j)$ will not place a node to the slot already occupied on step 2 of the strategy. The same for $h_2$. We call this line of reasoning the \textit{inductive argument}.
    
    \va{picture}
    
    So we proved that for each two-handed simple-graph connected to the inner node of a trunk core one of its orientations will not face conflicts with a neighbours of a trunk core nodes of degree three. Or in other words two-handed inner graphs can't violate the existence of the described embedding.
    
    \item We've shown that the quasi-correctness can't be violated on the levels $\geq top$ and $\leq 1$. So now we need to proof that it is not violated in between. 
    
    To violate the quasi-correctness we need to obtain at least four nodes per level. Since on each level between $top$ and $1$ there is a node from the trunk core and there are no nodes from exit-graphs we conclude that there must be at least three nodes of an inner-simple graphs. And note also that they must be from the different simple-graphs since we don't embed more than one node from one inner simple-graph per level. Denote those simple-graphs with $a, b, c$. Their foots are somehow ordered in the trunk core, say $foot(b)$ is between $foot(a)$ and $foot(c)$. Since simple-graphs hands conflict at some node, we conclude that either hand of $a$ crosses the $head(b)$ or a hand of $c$ crosses the $head(b)$, otherwise $a$ and $c$ just don't share nodes. W.l.o.g. hand of $a$ crosses $head(b)$. But this is only possible when $b$ and $a$ are one-handed graphs with adjacent foots and in this case their hands are oriented contrary and they only have two conflicts: $hand(a)[1]$ is embedded to the same node as $head(b)$ and $hand(b)[1]$ is embedded to the same node as $head(a)$. So $c$ can possibly participate in that conflict only if $c$ is a one-handed graph with a foot adjacent to $foot(b)$. That is because by our strategy two-handed simple-graphs do not cross other simple-graphs heads at all and the one-handed do only if their foots are adjacent. Our goal now is to show that in such setting $c$ can be oriented the other direction to avoid conflict with $b$.
    
    Note that we have three nodes of degree three and edges\\
    $(foot(a), foot(b))$, $(foot(b), foot(c))$. This is exactly the statement of Lemma \ref{three of three}, so we conclude that we have the following structure up to symmetry:
    \begin{align}
        \varphi(foot(b)) = level(i)[1]\\
        \varphi(foot(a)) = level(i - 1)[1]\\
        \varphi(foot(c)) = level(i + 1)[1]\\
        \varphi(head(b)) = level(i)[2]\\
    \end{align}
    and we know that $b$ in fact consists of one node.
    
    We still have two possibilities for $head(c)$, namely $level(i + 1)[2]$ or $level(i + 2)[1]$.
    
    If $\varphi(head(c)) = level(i + 1)[2]$, then $\varphi(c[2]) = level(i + 2)[2]$ and by the inductive argument applied to the $hand(c)$ there are no nodes of degree three in 
    \begin{align} 
    \{trunkCore(T)[i + 1 + j] \mid j \in [|V(hand(c))|]\}
    \end{align}
    so 
    \begin{align} 
    \tilde{\varphi}(hand(c)[j]) = level(i + 1 + j)[2]
    \end{align}
    won't place nodes of $c$ to the slots already occupied on step 2 of the strategy.
    
    \va{picture}
    
    If on the other hand $\varphi(head(c)) = level(i + 2)[1]$ then\\
    $\varphi(trunkCore(i + 2))$ must be $level(i + 1)[2]$. Thus $trunkCore(i + 2)$ can't be the end of the trunk core since then it is of degree three but $level(i)[2]$ is occupied by $head(b)$. So we say that $trunkCore(i + 3)$ exists and $\varphi(trunkCore(i + 3)) = level(i + 2)[2]$ and it is also not the end node since it can't be of degree three since $level(i + 2)[1]$ is occupied by the assumption by the $head(c)$. We now apply the inductive argument obtaining that there are no nodes of degree three in 
    \begin{align} 
    \{trunkCore(T)[i + 2 + j] \mid j \in [|V(c)|]\}
    \end{align}
    We also showed that $trunkCore(T)[i + 2]$ is not of degree three, so we state that $\tilde{\varphi}(hand(c)[j]) = level(i + 1 + j)$ won't place nodes of $c$ to the slots already occupied on step 2 of the strategy.
    
    \va{picture}
    
    This completes the proof of quasi-correctness of the embedding.
    
    \item So the last thing to show is that the described embedding exists for one-handed inner graphs.
    
    Suppose we have a one-handed inner graph $l$ with hand $h$ connected to the $i$-th node of the trunk core. Suppose also that w.l.o.g. $\varphi(foot(l)) = level(i)[1]$. The only constrainted case in our strategy is when the following doesn't hold:\\ 
    \begin{align}
    \left[\begin{array}{c c}
    \begin{cases}
        trunkCore(T)[i + 1]\ \parbox[t]{.4\textwidth}{is an inner node and it is a foot of another one-handed or zero-handed simple-graph $l_2$ with hand (possibly empty) $h_2$}\\
        m(h[j]) = level(i - j)[2]\ \parbox[t]{.4\textwidth}{ maps some nodes of $h$ to the place where nodes were placed on step 2}
    \end{cases} & (1)\\
    \begin{cases}
        trunkCore(T)[i - 1]\ \parbox[t]{.4\textwidth}{is an inner node and it is a foot of another one-handed or zero-handed simple-graph $l_2$ with hand (possibly empty) $h_2$}\\
        m(h[j]) = level(i + j)[2]\ \parbox[t]{.4\textwidth}{ maps some nodes of $h$ to the place where nodes were placed on step 2}
    \end{cases} & (2)\\
    \end{array}\right.
    \end{align}
    
    For the proof by contradiction assume now that both $\tilde{\varphi}(h[j]) = level(i + j)[2]$ and $\tilde{\varphi}(h[j]) = level(i - j)[2]$ map the nodes of $h$ to the places already used on step 2. By the inductive argument that means that there exist such $j_1, j_2 \leq |V(h)|$ that $trunkCore(i + j_1)$ and $trunkCore(i - j_2)$ are of degree three. But that means that $\varphi(head(l)) \neq level(i)[2]$ since in that case by the inductive argument $\varphi(h[j])$ must be either $level(i - j)[2]$ or $level(i + j)[2]$ but in the first case we obtain a conflict with a neighbour of $trunkCore(i + j_1)$ and in the second case we obtain a conflict with $trunkCore(i - j_2)$. 
    
    So, $\varphi(head(l))$ is either $level(i + 1)[1]$ or $level(i - 1)[1]$. Let's consider the case $level(i + 1)[1]$, the second is totally symmetric. 
    
    The trunk core nodes adjacent to $foot(l)$ are $trunkCore(T)[i - 1]$ and $trunkCore(T)[i + 1]$ they are mapped by $\varphi$ to $level(i - 1)[1]$ and $level(i)[2]$. We consider the case where $\varphi(trunkCore(T)[i - 1]) = level(i - 1)[1]$ and $\varphi(trunkCore(T)[i + 1]) = level(i)[2]$ and we show that in this case (1) holds. Symmetrically  $\varphi(trunkCore(T)[i - 1]) = level(i)[2]$ and $\varphi(trunkCore(T)[i + 1]) = level(i - 1)[1]$ will lead to (2).
    
    We now prove that $trunkCore(T)[i + 1]$ can't be the end node of the trunk core.  
    
    In Lemma \ref{trunk certain nodes} we make a path through the support nodes. If\\
    $trunkCore(T)[i + 1]$ is not a support node, it can't be the end node of the trunk core since the trunk core connects to support nodes. It is neither a single node, since it a has a neighbour $trunkCore(T)[i]$ of degree three. So, since it is in the trunk core, we deduce that it is of degree three and there are nodes $a$ and $c$ of degree three s.t. there is a path $a \rightarrow trunkCore(T)[i + 1] \rightsquigarrow c$. If $a$ is different from $trunkCore(T)[i]$ then we have three consecutive nodes $trunkCore(T)[i]$, $trunkCore(T)[i + 1]$ and $a$ of degree three, so by the Lemma \ref{three of three} $\varphi(trunkCore(T)[i])$ is on the same side as $\varphi(trunkCore(T)[i + 1])$, which contradicts the assumption of $\varphi(trunkCore(T)[i]) = level(i)[1]$ and $\varphi(trunkCore(T)[i + 1]) = level(i)[2]$. So there is a node $c$ of degree three which is not adjacent to $trunkCore(T)[i + 1]$ and there is a path $trunkCore(T)[i] \rightarrow trunkCore(T)[i + 1] \rightsquigarrow c$. But then either $c$ or its pair (if it is paired) is in the trunk core meaning that the trunk core passes through $trunkCore(T)[i + 1]$ so it is inner.
    
    Thus the $trunkCore(T)[i + 2]$ exists and it has no other options but to be embedded to $level(i + 1)[2]$ since if it is embedded to $level(i - 1)[2]$ it is embedded to the same level as $trunkCore(T)[i - 1]$ and it violates the trunk core monotone property stated by Lemma \ref{trunk core monotone}. So we are now able to apply the inductive argument deducing that there are no nodes of degree three in $\{trunkCore(T)[i + 1 + j] \mid j \in [|V(l)|]\}$. Recall that by our proof by contradiction assumption we have that mapping $m:$ $m(h[j]) = level(i + j)[2]$ maps some nodes of $h$ to the place where nodes were placed on step 2 meaning that there is a node of degree three in $\{trunkCore(T)[i + j] \mid j \in [|V(h)|]\}$. But this implies that the node $trunkCore(T)[i + 1]$ is of degree three, it is inner, so we have an inner simple-graph connected to it, moreover this simple graph is a one-handed graph since otherwise we have three nodes of degree three: $trunkCore(T)[i]$, $trunkCore(T)[i + 1]$ and $head$ of that simple-graph, implying by Lemma \ref{three of three} that $trunkCore(T)[i]$ and $trunkCore(T)[i + 1]$ are mapped to the same side of the grid which as we know is not the case.
    
    So we have that $m:$ $m(h[j]) = level(i + j)[2]$ maps some nodes of $h$ to the place where nodes were placed on step 2 and that there is a one-handed inner simple-graph connected tot he $trunkCore(T)[i + 1]$ which is exactly the case (1).
    
\end{itemize}

\end{proof}

\subsubsection{Embedding with Cycles}

{
\def\thedefinition{\ref{def:maximal-cycle}}
\begin{definition}
A \emph{maximal} cycle $C$ of a graph $G$ is a cycle in $G$ that cannot be enlarged, i.e., there is no other cycle $C'$ in $G$ such that $V(C) \subsetneq V(C')$.
\end{definition}
\addtocounter{definition}{-1}
}

{
\def\thedefinition{\ref{def:whiskers}}
\begin{definition}
Consider a graph $G$ and a maximal cycle $C$ of $G$. The whisker $W$ of $C$ is a line graph inside $G$ such that:
\begin{itemize}
    \item $V(W) \neq \emptyset$, and $V(W) \cap V(C) = \emptyset$.
    \item There exists only one edge between the cycle and the whisker $(w, c)$ for $w \in V(W)$ and $c \in V(C)$. Such $c$ is called a \emph{foot} of $W$. The nodes of $W$ are enumerated starting from $w$.
    \item $W$ is maximal, i.e., there is no $W'$ in $G$ such that $W'$ satisfies previous properties and $V(W) \subsetneq V(W')$.
\end{itemize}
\begin{center}
    \includegraphics[scale=0.3]{images/whiskers.png}
\end{center}
\end{definition}
\addtocounter{definition}{-1}
}

{
\def\thedefinition{\ref{def:adjacent-whiskers}}
\begin{definition}
Suppose we have a graph $G$ that can be correctly embedded into $Ladder_n$ by $\varphi$ and a cycle $C$ in $G$. Whiskers $W_1$ and $W_2$ of $C$ are called adjacent for the embedding $\varphi$ if \begin{align*}
    \forall i \in [\min(|V(W_1)|, |V(W_2)|]\ (\varphi(W_1[i]), \varphi(W_2[i])) \in E(Ladder_n)
\end{align*}
\end{definition}
\addtocounter{definition}{-1}
}

\begin{statement}\label{cycle occupies level}
For any correct embedding of a cycle $C$ into $Ladder_n$ each level of $Ladder_n$ is either occupied with two nodes of $C$ or not occupied at all.
\end{statement}
\begin{proof}
Suppose the contradictory, and there exists a correct embedding $\varphi$ of $C$ such that there is only one node of $C$, say $a$, on some level $i$, i.e., $level(i)[1]$. $a$ has two neighbours in $C$, which we call $b$ and $c$. W.l.o.g. we say that $\varphi(b) = level(i - 1)[1]$ and $\varphi(c) = level(i + 1)$. We define $next_{ab}(x)$ for the node $x \in V(C) \setminus \{a\}$ as the next node in $C$ for $x$ in the direction $ab$. It is easy to see that if $level\langle\varphi(x)\rangle > i$ then $level\langle\varphi(next_{ab}(x))\rangle > i$ since it cannot be less than $i$ and due to the connectivity of the cycle image and it cannot be equal to $i$ since then $next_{ab}(x) = a$ and then $x = c$ but $level\langle\varphi(c)\rangle = i - 1$. $level\langle \varphi(b) \rangle = i + 1 \rightarrow level\langle \varphi(next_{ab}(b)) \rangle > i \rightarrow level\langle \varphi(next_{ab}(next_{ab}(b))) \rangle > i \rightarrow \ldots \rightarrow level\langle \varphi(c) \rangle > i$ which is a contradiction.
\end{proof}

{
\def\thelemma{\ref{adjacent whiskers}}
\begin{lemma}
Suppose we have a graph $G$ that can be correctly embedded into $Ladder_n$ and there exists a maximal cycle $C$ in $G$ with at least $6$ vertices with two neighbouring whiskers $W_1$ and $W_2$ of $C$, i.e., $(\text{foot}(W_1), \text{foot}(W_2)) \in E(G)$. Then, $W_1$ and $W_2$ are adjacent in any correct embedding of $G$ into $Ladder_N$.
\end{lemma}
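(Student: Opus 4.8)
The plan is to fix an arbitrary correct embedding $\varphi$ of $G$ into $Ladder_N$ and pin down the rigid shape of the cycle's image, from which the positions of the two whiskers become forced. First I would analyze $\varphi(C)$: by Statement~\ref{cycle occupies level}, every level of $Ladder_N$ carries either $0$ or exactly $2$ nodes of $C$, and since $C$ is connected and $\varphi$ preserves adjacency its image cannot skip a level, so the occupied levels form a contiguous block $[a,b]$. Because $|V(C)| \geq 6$ we have $b \geq a+2$, hence at least one interior level exists.

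The key observation I would establish next is that a whisker can only attach at one of the four corner nodes $level(a)[1]$, $level(a)[2]$, $level(b)[1]$, $level(b)[2]$. Indeed, take an interior cycle node, say $\varphi^{-1}(level(i)[1])$ with $a < i < b$; its three grid-neighbours $level(i-1)[1]$, $level(i+1)[1]$, $level(i)[2]$ are all images of cycle nodes, since each occupied level carries both of its sides. As $\varphi$ is injective, no whisker node, which is not a cycle node, can be placed adjacent to an interior node, so no foot is interior. At a corner such as $level(a)[1]$, the unique grid-neighbour that is \emph{not} a cycle node is $level(a-1)[1]$ (symmetrically $level(b+1)[s]$ at the right corners).

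Now I would locate the two feet. As $W_1$ and $W_2$ are neighbouring, $(\text{foot}(W_1),\text{foot}(W_2)) \in E(G)$, so their images are grid-adjacent corners. The only grid-adjacent pairs among the four corners are the rungs $\{level(a)[1], level(a)[2]\}$ and $\{level(b)[1], level(b)[2]\}$, since every other pair of corners differs in level by $b-a \geq 2$ and hence is not adjacent. Without loss of generality the feet are $level(a)[1]$ and $level(a)[2]$; the level-$b$ case is identical, with the whiskers running to the right. Note this is exactly where the hypothesis $|V(C)|\geq 6$ is used: for a $4$-cycle the two top corners would be adjacent but their whiskers would point in opposite directions.

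Finally I would prove by induction that $\varphi(W_1[i]) = level(a-i)[1]$ and $\varphi(W_2[i]) = level(a-i)[2]$ for every $i \leq \min(|V(W_1)|,|V(W_2)|)$, which is precisely adjacency since these are the two ends of the rung at level $a-i$. For the base case, $W_1[1]$ is a grid-neighbour of its foot $level(a)[1]$ other than the cycle nodes $level(a+1)[1]$ and $level(a)[2]$, forcing $W_1[1]=level(a-1)[1]$, and symmetrically $W_2[1]=level(a-1)[2]$. For the step, $W_1[i+1]$ is a grid-neighbour of $level(a-i)[1]$ distinct from the previous whisker node $level(a-i+1)[1]$ and, by injectivity, from $W_2[i]=level(a-i)[2]$, so the only remaining choice is $level(a-i-1)[1]$; symmetrically for $W_2[i+1]$. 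The mutual blocking of the two whiskers, available precisely because both indices stay $\leq \min(|V(W_1)|,|V(W_2)|)$, is what keeps each whisker running straight outward. The main obstacle is the structural step: extracting the two-per-level rectangular shape and the corner-only attachment of whiskers; once Statement~\ref{cycle occupies level} supplies two cycle nodes per occupied level, injectivity forces everything else.
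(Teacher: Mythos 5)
Your proof is correct and follows essentially the same route as the paper's: both rest on Statement~\ref{cycle occupies level} (every occupied level carries exactly two cycle nodes), use $|V(C)|\geq 6$ to force the two feet onto a shared boundary level and the first whisker nodes onto the next level out, and then propagate the parallel placement level by level by forced adjacency and injectivity. Your "feet attach only at corner nodes" observation is just a cleaner packaging of the paper's two proof-by-contradiction steps (feet on the same level, then first whisker nodes on the same level), so the arguments are equivalent in substance.
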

\addtocounter{lemma}{-1}
}
\begin{proof} 
\label{proof:adjacent whiskers}
At first, we show that for every correct embedding $foot(W_1)$ and $foot(W_2)$ are embedded to the same level of the grid. Suppose not. So there exists a correct embedding $\varphi$ of $G$ s.t. $\varphi(foot(W_1)) = level(i)[1]$ and $\varphi(foot(W_2)) = level(i - 1)[1]$. By the Statement \ref{cycle occupies level} $level(i)[2]$ and $level(i - 1)[2]$ are also occupied with nodes from cycle. So we deduce that $\varphi(W_1[1]) = level(i + 1)[1]$ and $\varphi(W_2[1]) = level(i - 2)[1]$. But by Statement~\ref{cycle occupies level} it means that there are no nodes of $C$ mapped to the levels $i + 1$ and $i - 2$ and so due to connectivity of the cycle image there are no more nodes of the cycle, but then there are only four nodes in $C$.

Now we want to show that $W_1[1]$ and $W_2[1]$ are embedded to the same level of the grid for any correct embedding of $G$. Suppose not. So there exists a correct embedding $\varphi$ of $G$ s.t.
\begin{align}
    \varphi(foot(W_1)) = level(i)[1]\\
    \varphi(foot(W_2)) = level(i)[2]\\
    \varphi(W_1[1]) = level(i + 1)[1]\\
    \varphi(W_2[1]) = level(i - 1)[2]
\end{align}

But this by Statement \ref{cycle occupies level} implies that there are no nodes of $C$ mapped to levels $i + 1$ and $i - 1$ and thus there are no more nodes of $C$ at all due to the connectivity of the cycle image. Contradiction, since there are at least $6$ nodes in $C$, not $2$.

So for every correct mapping $\varphi$ of $G$ we know that up to symmetry it does the following:
\begin{align}
    \varphi(foot(W_1)) = level(i)[1]\\
    \varphi(foot(W_2)) = level(i)[2]\\
    \varphi(W_1[1]) = level(i + 1)[1]\\
    \varphi(W_2[1]) = level(i + 1)[2]
\end{align}

So there is no other option for $W_1[2]$ and $W_2[2]$ but to be embedded to $level(i + 2)[1]$ and $level(i + 2)[2]$ respectively and so until we reach the end of either $W_1$ or $W_2$. In other words for any correct embedding $\varphi$  
\begin{align}
\forall i \in [\min(|V(W_1)|, |V(W_2)|]\ (\varphi(W_1[i]), \varphi(W_2[i])) \in E(Ladder_n)
\end{align}

\end{proof}

\begin{remark}
Due to Lemma~\ref{adjacent whiskers} if the cycle is of length $\geq 6$ we can forget about an embedding while talking about adjacent whiskers.
\end{remark}

{
\def\thedefinition{\ref{def:frame}}
\begin{definition}
Assume we have a graph $G$ and a maximal cycle $C$ of length at least $6$. The frame for $C$ is a subgraph of $G$ induced by vertices of $C$ and $\{W_1[i], W_2[i] \mid i \in [\min(|V(W_1)|, |V(W_2)|)]\}$ for each pair of adjacent whiskers $W_1$ and $W_2$.
Adding all the edges $\{(W_1[i], W_2[i]) \mid i \in [\min(|V(W_1)|, |V(W_2)|)]\}$ for each pair of adjacent whiskers $W_1$ and $W_2$ makes frame \emph{completed}. 

\begin{figure}
\centering
    \includegraphics[scale=0.3]{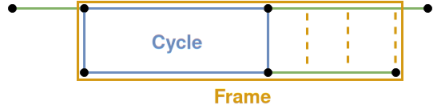}
    \caption{Cycle, its frame, and edges (dashed) to make the frame completed}
\end{figure}
\end{definition}
\addtocounter{definition}{-1}
}

\begin{lemma}\label{frame end nodes}
If we have a cycle of length at least $6$ in a graph which is a subgraph of the request graph then its end nodes of the frame are adjacent in the request graph.
\end{lemma}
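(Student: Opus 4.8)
The plan is to reduce the statement to Lemma~\ref{adjacent whiskers}, exploiting the fact that the demand graph $G$ is literally a subgraph of the request graph $Ladder_n$. Because of this, the inclusion map $\iota\colon V(G)\hookrightarrow V(Ladder_n)\subseteq V(Ladder_N)$ is itself a correct embedding (it is injective and preserves edges, since every edge of $G$ is an edge of $Ladder_n$), and it is the identity on vertex names. So all the structural lemmas proved for arbitrary correct embeddings apply to $\iota$, and adjacency of two vertices in the image of $\iota$ is the same thing as adjacency of those vertices in the request graph.

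First I would pin down what the \emph{end nodes} of the frame are. By Definition~\ref{def:frame} the frame of a maximal cycle $C$ of length $\ge 6$ consists of $C$ together with, for each pair of adjacent whiskers $W_1,W_2$, the parallel prefixes $\{W_1[i],W_2[i] \mid i \le m\}$ where $m=\min(|V(W_1)|,|V(W_2)|)$. Using Statement~\ref{cycle occupies level}, under $\iota$ the cycle fills a contiguous band of levels with exactly two nodes per level, so an interior cycle node has no free outward grid-neighbour and every whisker must grow away from one of the two extreme (cap) levels of $C$. Hence the frame occupies one contiguous band of levels, and at each of its two ends the outermost two frame nodes are either (a) the pair $W_1[m],W_2[m]$ terminating an adjacent whisker pair, or (b) the two cap nodes of $C$ at that extreme level when that end carries no adjacent whisker pair. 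These are the end nodes, and the claim is that each such pair is an edge of $Ladder_n$.

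In case (b) the two cap nodes are joined by an edge of $C\subseteq E(G)\subseteq E(Ladder_n)$, so there is nothing to prove. In case (a) the two whiskers forming the end are adjacent in the sense of Definition~\ref{def:adjacent-whiskers}, i.e. $(\mathrm{foot}(W_1),\mathrm{foot}(W_2))\in E(G)$, so Lemma~\ref{adjacent whiskers} applies and yields $(\iota(W_1[i]),\iota(W_2[i]))\in E(Ladder_n)$ for every $i\le m$. Taking $i=m$ and recalling that $\iota$ is the identity on vertex names gives $(W_1[m],W_2[m])\in E(Ladder_n)$, i.e. the end nodes are adjacent in the request graph. The same argument applies verbatim at the other end of the frame.

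The only point requiring care, and the one I expect to be the main obstacle, is the identification of the end nodes at the boundary index $i=m$: the shorter whisker terminates there while the longer one continues, so the frame's outermost node is the true endpoint of one whisker but merely an interior node of the other. One must observe that the $\min$ in Definition~\ref{def:frame} places the frame boundary exactly at level index $m$ (it is \emph{not} bounded by the longer whisker's leaf), so the relevant pair is $(W_1[m],W_2[m])$ with $m$ inside the range $i\le m$ for which Lemma~\ref{adjacent whiskers} guarantees a ladder edge. No separate computation beyond this observation is needed.
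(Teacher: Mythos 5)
Your proposal is correct and follows essentially the same route as the paper: the paper's proof is precisely the observation that, since the revealed graph is a subgraph of the request graph, the inclusion is itself a correct embedding, so Lemma~\ref{adjacent whiskers} (which holds for \emph{every} correct embedding) applies to it and forces the frame's end nodes to be adjacent in $Ladder_n$. Your write-up merely adds the explicit case analysis of what the end nodes are (whisker-pair termini versus cycle cap nodes), which the paper leaves implicit.
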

\begin{proof}
This is because by Lemma \ref{adjacent whiskers} they are adjacent for every embedding and in particular for the original embedding of the cycle into the request graph.
\end{proof}

\begin{remark}
Due to Lemma~\ref{frame end nodes} we can ``extend'' each maximal cycle to the ends of its frame, so, we we do not have any adjacent whiskers, i.e., one whisker is embedded fully.
\end{remark}

\begin{lemma}\label{two nodes connected to frame}
Assume we have a graph $G$ which can be embedded into $Ladder_n$ and a maximal cycle $C$ of length at least $6$ of $G$ has no adjacent whiskers. \va{Is it after improving up to the ends.} \ap{Not exactly. It is more about the structure of "improved" cycle, not the reason why it is improved}Then, there are at most two nodes connected to $C$ (i.e. $(v, c) \in E(G)\ , such\ that \ v \in V(G) \setminus V(C) \wedge c \in V(C)$).

Moreover, these two connecting nodes are not adjacent.
\end{lemma}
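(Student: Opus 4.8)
The plan is to first fix a correct embedding $\varphi$ of $G$ into $Ladder_N$ (one exists by hypothesis) and pin down the rigid shape it forces on $C$. By Statement~\ref{cycle occupies level}, $C$ occupies a contiguous block of levels $[a,b]$, filling both slots $level(i)[1]$ and $level(i)[2]$ of each level $a\le i\le b$, and $|V(C)|\ge 6$ gives $b-a\ge 2$. I claim $C$ is the \emph{rectangle} on these levels: its edges are exactly the two rungs $(level(a)[1],level(a)[2])$ and $(level(b)[1],level(b)[2])$ together with all the vertical side edges. I would prove this by a top-down induction using the degree-$2$ constraint of a cycle: at level $a$ both slots are forced to take the rung and their single downward edge, while an internal rung $(level(i)[1],level(i)[2])$ with $a<i<b$ would leave both level-$i$ slots with no downward edge, disconnecting the nonempty part of $C$ below level $i$ from the part above — impossible for a single cycle.

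The rectangle shape makes the attachment points explicit. Each internal side vertex $level(i)[s]$ with $a<i<b$ has all three of its ladder-neighbours inside $C$, so by injectivity of $\varphi$ it has no neighbour in $V(G)\setminus V(C)$. Only the four corners have a free ladder-direction, and exactly one each: the top corners point toward $level(a-1)[1],level(a-1)[2]$ and the bottom corners toward $level(b+1)[1],level(b+1)[2]$. Thus every vertex of $V(G)\setminus V(C)$ adjacent to $C$ must occupy one of these four slots, so there are at most four, with at most one per corner.

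Next I would rule out two attachments at the same end. Suppose both top corners $x=level(a)[1]$ and $y=level(a)[2]$ had outside neighbours $v_1\mapsto level(a-1)[1]$ and $v_2\mapsto level(a-1)[2]$. If $(v_1,v_2)\in E(G)$, then replacing the rung $(x,y)$ of $C$ by the edges $(x,v_1),(v_1,v_2),(v_2,y)$ gives a cycle on $V(C)\cup\{v_1,v_2\}$ strictly larger than $C$, contradicting maximality. Otherwise $v_1,v_2$ head two whiskers $W_1,W_2$ with feet $x,y$; since $(x,y)\in E(C)\subseteq E(G)$ and $|V(C)|\ge 6$, Lemma~\ref{adjacent whiskers} forces $W_1,W_2$ to be adjacent whiskers, contradicting the hypothesis. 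Hence at most one top corner and, symmetrically, at most one bottom corner carries an attachment, giving at most two connected vertices. These survivors lie at levels $a-1$ and $b+1$, which differ by at least $4$ since $b-a\ge 2$; as every edge of $G$ maps to a ladder edge joining equal or consecutive levels, they cannot be adjacent, which is the ``moreover'' clause.

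The main obstacle is the rigidity induction of the first paragraph that fixes $C$ as a rectangle, and in the third paragraph confirming that a corner attachment is genuinely a whisker so that Lemma~\ref{adjacent whiskers} applies: a would-be branch at $v_1$ is blocked because its only free ladder-neighbour besides its upward slot is $v_2$'s slot, and using it would once more enlarge $C$, contradicting maximality.
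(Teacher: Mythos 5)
Your proof is correct and follows essentially the same route as the paper's: Statement~\ref{cycle occupies level} confines any outside attachment to the two levels just beyond the cycle's fully occupied block of levels, and two attachments at the same end are eliminated by the same dichotomy (an edge between them would enlarge the cycle, contradicting maximality; otherwise they head whiskers that are adjacent, contradicting the hypothesis). Your extra rectangle-rigidity induction and the explicit verification of the ``moreover'' clause are sound but go beyond the paper, which obtains the slot restriction directly from full level occupancy plus injectivity, applies the pigeonhole principle to three or more attachments, and leaves the non-adjacency of the two surviving attachment points implicit.
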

\begin{proof}
Consider a correct embedding $\varphi$ of $G$ into $Ladder_n$. The cycle occupies level from $i$ to $j$, $i < j$ (it can't make a gap due to the connectivity of the image and by the Lemma \ref{cycle occupies level} it occupies the whole level). So the possible places for $v$ are $level(i - 1) \cup level(j + 1)$.

For the proof by contradiction assume that there are at least three nodes connected to $C$. Then by the pigeon hole principle there are two of them on the same level, say $v_1$ and $v_2$. There can't be an edge between $v_1$ and $v_2$ since then the cycle can be extended by adding $v_1$ and $v_2$ and thus is not maximal. But if there is no edge between $v_1$ and $v_2$ they form whiskers and those whiskers are adjacent. Contradiction.
\end{proof}

{\large \textbf{Trees and cycles}}

\begin{definition}
\va{Again, here is maximal cycle is the frame, right?} \ap{Not really. I mean the maximal in the sense of inclusion. Meaning it is not contained in any other cycle.} 
By the cycle-tree decomposition of a graph $G$ we mean a set of maximal cycles $\{C_1,\ldots C_n\}$ of $G$ and a set of trees $\{T_1,\ldots,T_m\}$ of $G$ such that 
\begin{itemize}
    \item $\bigcup\limits_{i \in [n]}V(C_i) \cup \bigcup\limits_{i \in [m]}V(T_i) = V(G)$
    \item $V(C_i) \cap V(C_j) = \emptyset\ \forall i \neq j$
    \item $V(T_i) \cap V(T_j) = \emptyset\ \forall i \neq j$
    \item $V(T_i) \cap V(C_j) = \emptyset\ \forall i\in [m],j\in[n]$
    \item $\forall i \neq j\ \forall u \in V(T_i)\ \forall v \in V(T_j)\ (u, v) \notin E(G)$
\end{itemize}
\end{definition}

\begin{lemma}\label{cycle tree trunk}
Assume we have a graph $G$ which can be embedded into $Ladder_n$. Suppose that there is a cycle $C$ and a tree $T$ from cycle-tree decomposition of $G$, such that $C$ and $T$ are connected by an edge $(c, t) \in E(G)$, where $t \in V(T)$ and $c \in V(C)$. Then, for any correct embedding $\varphi$ $t \in trunk_\varphi(T)$.
\end{lemma}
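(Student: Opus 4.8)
The plan is to exploit the rigid structure that a cycle is forced to have in any correct embedding, together with the fact that correct embeddings preserve adjacency (and hence connectivity). First I would fix an arbitrary correct embedding $\varphi$ of $G$ into $Ladder_n$ and examine the image of $C$. By Statement~\ref{cycle occupies level}, the cycle occupies a contiguous block of levels, say from level $i$ to level $j$ with $i<j$, and on each of these levels \emph{both} grid slots are images of vertices of $C$.

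Next I would locate $c$. Since $(c,t)\in E(G)$ and $\varphi$ is correct, $\varphi(c)$ and $\varphi(t)$ are grid-adjacent, while $t\notin V(C)$ forces $\varphi(t)$ to be none of the cycle slots. If $\varphi(c)$ lay on an inner level $k$ with $i<k<j$, then all three grid-neighbours of $\varphi(c)$ — the other slot of level $k$ and the same-side slots of levels $k-1$ and $k+1$ — are already occupied by $C$, leaving no room for $\varphi(t)$, a contradiction. Hence $c$ must sit on an extreme level of the block, say level $j$ (the case of level $i$ is symmetric), and the unique free grid-neighbour of $\varphi(c)$ is the slot directly beyond it, so $\varphi(t)$ lies on level $j+1$.

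Then I would confine the entire tree to one side of the cycle. Because $T$ is connected and $\varphi$ preserves adjacency, the image of any path in $T$ changes level by at most one at each step. A path in $T$ from $t$ to a hypothetical vertex embedded on a level $\le j$ would therefore have to place some vertex of $T$ on level $j$; but both slots of level $j$ already belong to $C$, violating injectivity. Thus every vertex of $T$ is embedded on a level $\ge j+1$, and $t$ attains the minimum level $j+1$ of the embedding of $T$ (in the symmetric case where $c$ sits on level $i$, the same argument confines $T$ to levels $\le i-1$ and makes $t$ attain the maximum level).

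Finally, by Definition~\ref{def:trunk} the trunk is the path in $T$ joining a leftmost vertex $l=\arg\min_{v}level\langle\varphi(v)\rangle$ and a rightmost vertex $r=\arg\max_{v}level\langle\varphi(v)\rangle$. Since $t$ realises the extremal (minimum, resp. maximum) level of $\varphi$ restricted to $T$, it is a legitimate choice of such an extremal endpoint, so $t$ lies on $trunk_\varphi(T)$; when two vertices of $T$ share that extreme level, $t$ is taken as the extremal representative, consistently with the definition. The main obstacle is the confinement step of the third paragraph: everything rests on the fact that a maximal cycle saturates every level it touches (Statement~\ref{cycle occupies level}), which simultaneously pins $c$ to an end of the cycle's block and walls $T$ off on a single side. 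Once this is established, the extremality of $t$, and hence its membership in the trunk, are immediate.
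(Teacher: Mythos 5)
Your proof is correct and follows essentially the same route as the paper's: both use Statement~\ref{cycle occupies level} to observe that the cycle fully saturates the level of $\varphi(c)$, forcing $\varphi(t)$ onto an adjacent level, and then use connectivity of $T$'s image to wall $T$ off on one side of the cycle, so that $t$ realises the extremal level of $T$ and hence lies on $trunk_\varphi(T)$. Your additional step pinning $c$ to an end level of the cycle's block is a harmless elaboration the paper skips, and both arguments share the same implicit convention of choosing $t$ as the extremal representative when the extreme level is not uniquely attained.
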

\begin{proof}
By Lemma \ref{cycle occupies level} there is another node of $C$ on the $level\langle \varphi(c) \rangle$. Let's say this level has number $i$. If $\varphi(t) \in level(i + 1)$ then we know that no nodes of $T$ can be embedded to the $level(i)$ (and thus, due to the connectivity of $T$-s image, below it) so $t$ is the bottom most node and thus it is in the trunk. Symmetrically, it is the top most node of $T$ if $\varphi(t) \in level(i - 1)$.
\end{proof}

\begin{definition}
We call a node $t$ from Lemma \ref{cycle tree trunk} an \emph{end-node} of a tree $T$, and a node $c$ a \emph{foot} of a $T$.
\end{definition}

\begin{remark}
If the tree $T$ is connected to a cycle, then $trunkCore(T)$ can be extended to an end-node of $T$.

We call the path connecting the end-node of the trunk core and an end-node of a tree an extension of the trunk core.

We call a trunk core with two of its possible extensions an extended trunk core.

The exit-graphs are now simple-graphs connected to the end-nodes of an extended trunk core.

Note that the end-nodes of the tree might not exist while the end-nodes of the trunk core are just the end-nodes of the path.
\end{remark}

We now define how to embed a tree $T$ from a cycle-tree decomposition. 

We include possible foots of a tree with their neighbours in that tree, making them the end-nodes of the trunk core. We then apply strategy~\ref{tree embedding strategy} to the obtained tree. \va{Very strange reference to the strategy.}

\begin{definition}
We say that such an embedding of a tree respects the strategy~\ref{tree embedding strategy}.
\end{definition}

\subsection{Dynamic algorithm}
Now, we talk about how we update the embedding with respect to new requests.

In our strategy of edge processing, if an already known edge is requested we do nothing since the requested nodes are already at the distance at most $12$, because by the assumption the enumeration preserves the proximity property (see the strategy plan \ref{strategy}).\va{Never told about $12$.}

But if we obtain a new edge, our enumeration may no longer maintain the proximity property \va{what property?}, so we perform a re-enumeration. \ap{Made a fix, but not sure if it is good enough yet...}

There are two possible cases for the new edge. It may be within the connectivity component or it may connect two different connectivity components. We analyse these cases separately.

We want to maintain the following five invariants:
\begin{enumerate} \label{embedding invariants}
    \item The embedding of any connectivity component is quasi-correct.
    \item For each tree in the cycle-tree decomposition the embedding of that tree matches the strategy \ref{tree embedding strategy}
    \item We do not have maximal cycles of length $4$ \va{Hmmm... Interesting...}
    \item Each maximal cycle does not have adjacent whiskers
    \item There are no conflicts with cycle nodes \va{What is conflict?} \ap{Extended the definition of a correct embedding with a definition of a conflict}
\end{enumerate}

\subsubsection{New edge within one connectivity component}

Assume we have a connectivity component $S$ with at least one cycle (call it $C_S$) and a quasi-correct embedding $\varphi'$ of $S$ preserving all the invariants from \ref{embedding invariants}. \va{Should make the Invariant "theorem"}

Assume the new edge connects nodes $u$ and $v$. Since $u$ and $v$ are already in a one connectivity component we conclude that there is now a cycle $C'$ containing $u$ and $v$. We consider a maximal cycle $C$ containing $C'$.

We call a graph $S$ with an edge $(u, v)$ as $S_+$.

If $C$ is of length $4$, for every two nodes $a$ and $b$ of $C$ $|level \langle \varphi'(a) \rangle - level \langle \varphi'(b) \rangle| \leq 3$ since the distance in $S$ between $a$ and $b$ is at most $3$ and $\varphi'$ preserves connectivity. Thus, since there are no more than $3$ nodes per level, we conclude that the difference between numbers of $a$ and $b$ is at most $12$, so, the proximity property is maintained and we do nothing. \va{And we do not add this edge to a graph?} \ap{No, we just ignore it.}

If $C$ is now of length at least $6$ (note that if a cycle can be embedded into $Ladder_n$ its length must be even) we consider its frame $F$ stating that it is in fact a cycle by Lemma~\ref{frame end nodes} and that it has at most two nodes connected to it by Lemma~\ref{two nodes connected to frame}.

\begin{lemma}\label{component above cycle}
Consider a graph $G$ embedded into $Ladder_n$ with some embedding $\varphi$ that respects invariants \ref{embedding invariants}. Consider a frame of a maximal cycle $C$ \va{I changed here cycle to frame.}\ap{Fare enough. We should come up with some elegant solution for this cycle-frame mess.} in $G$ embedded by $\varphi$ into levels from $i$ to $j$ ($i < j$). Then $\{v \mid v \in V(G),\ level \langle \varphi(v) \rangle > j\}$ form a connectivity component. 
\end{lemma}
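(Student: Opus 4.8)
The plan is to show that the frame $F$ acts as a horizontal barrier: it fills both grid nodes on every level it occupies, so the vertex set $A=\{v\in V(G): level\langle\varphi(v)\rangle>j\}$ can reach the rest of $G$ through at most a single edge, which will force $A$ to be connected.

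First I would establish the barrier property. The frame $F$ is embedded into levels $i$ through $j$. By Statement~\ref{cycle occupies level} every occupied level of the underlying maximal cycle carries exactly two cycle nodes, filling both grid nodes of that level; together with Lemma~\ref{adjacent whiskers}, which places each pair of adjacent whiskers vertically side by side, this shows that on each level in $[i,j]$ both grid nodes are images of frame vertices. The invariant forbidding conflicts on cycle nodes (the fifth of the embedding invariants~\ref{embedding invariants}) then guarantees that no further vertex of $G$ is mapped onto those grid nodes. Hence every vertex of $V(G)\setminus V(F)$ lies strictly below level $i$ or strictly above level $j$; write $B=\{v: level\langle\varphi(v)\rangle<i\}$ and keep $A$ as above.

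Next I would count the edges crossing the top boundary. Since $\varphi$ preserves adjacency into $Ladder_N$, any edge of $G$ joins two vertices whose levels differ by at most one. Vertices of $A$ have level at least $j+1$ and vertices of $B$ at most $i-1$, so their levels differ by at least two; therefore there is no edge between $A$ and $B$. Any edge between $A$ and $F$ must join a vertex on level $j+1$ to a frame vertex on level $j$, so its $A$-endpoint is a vertex outside the frame adjacent to a frame vertex, i.e. a connecting node in the sense of Lemma~\ref{two nodes connected to frame}. Applying that lemma to the frame viewed as a maximal cycle with no adjacent whiskers (legitimate by Lemma~\ref{frame end nodes}) yields at most two connecting nodes, and the same argument in its proof shows no two connecting nodes can share a level; hence at most one connecting node sits on level $j+1$. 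Moreover that node is joined to $F$ by exactly one edge, since a second edge to the frame would produce a cycle strictly larger than the maximal cycle $C$, contradicting maximality. Thus at most one edge leaves $A$, and it goes to $F$.

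Finally I would conclude using connectivity of $G$. As $G$ is a connectivity component it is connected, and $F\neq\emptyset$. Suppose $G[A]$ had two or more connected components. Each such component sends no edge to $B$ and no edge to another component of $A$, so to stay attached to $G$ it must send an edge to $F$; but only one edge joins $A$ to $F$, so at most one component of $G[A]$ touches the frame, and any further component would be isolated from $V(G)\setminus A$, contradicting connectivity of $G$. Hence $G[A]$ is connected, i.e. $A$ forms exactly one connectivity component after deleting the frame (and if $A=\emptyset$ the claim holds vacuously). The main obstacle is the edge-count step: one must combine the maximality of the cycle, the no-conflict invariant, and the internals of Lemma~\ref{two nodes connected to frame} (the ``not on the same level'' part of its proof) to be certain that at most one edge crosses the upper boundary; the remaining arguments are routine level-and-adjacency bookkeeping.
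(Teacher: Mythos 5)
Your proof is correct and takes essentially the same approach as the paper's: both arguments rest on the barrier property of the frame together with invariant~4 and Lemma~\ref{two nodes connected to frame}, which yield a unique gateway node above the frame; the paper then reroutes any path that dips below level $j$ through that node twice and shortcuts it, while you reach the same conclusion by counting the components of the region above. One remark: your side-claim that a second edge from the connecting node to the frame would contradict maximality of $C$ is shaky (a cycle through that node and an arc of the frame need not contain all of $V(C)$), but the claim is superfluous---uniqueness of the connecting \emph{node} already completes your counting argument, and edge-uniqueness follows from the ladder geometry in any case.
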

\begin{proof}
By the invariant 4~\ref{embedding invariants} and Lemma~\ref{two nodes connected to frame} there is only one node $u$ connected to $C$ with $level \langle \varphi(u) \rangle > j$. So, if some path from $a$ to $b$ ($level \langle \varphi(a) \rangle > j$, $level \langle \varphi(b) \rangle > j$) goes through a node $v$ with $level \langle \varphi(v) \rangle \leq j$ it must pass through $u$ twice, meaning we can replace $a \rightsquigarrow u \rightsquigarrow v \rightsquigarrow u \rightsquigarrow b$ with $a \rightsquigarrow u \rightsquigarrow b$.
\end{proof}

We say that a group of nodes is on the same side from a cycle if they are in a one connectivity component when the cycle is removed.

\begin{lemma}\label{line-graph end to cycle}
If after the removal of \va{a cycle?} $F$ \va{F is frame, right?}, a connectivity component $S_i$ $i \in \{1, 2\}$ is a line-graph and it connects to $F$ via its end-node then all of its nodes belong to an exit-graph in $S$.  \va{Please, restate the lemma. Do not understand at all...} \ap{The bad thing is it is stated correctly... And I've expressed what I wanted to. I'm sorry. We should probably discuss it speaking}
\end{lemma}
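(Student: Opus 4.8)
The plan is to locate $S_i$ inside the cycle-tree decomposition of $S$ and to show that, being a line-graph attached to the frame at one of its own endpoints, it contributes no vertex to any trunk core or inner-graph, so that every one of its vertices necessarily lies in an exit-graph.

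First I would pin down how $S_i$ meets $F$. Since we have already extended the maximal cycle $C$ to the ends of its frame (Lemma~\ref{frame end nodes}), $F$ has no adjacent whiskers, so Lemma~\ref{two nodes connected to frame} applies and tells us that $F$ has at most two neighbours outside it, one on each side. As $S_i$ occupies a single side, it is joined to $F$ by exactly one edge $(t,c)$ with $t \in V(S_i)$ and $c \in V(F)$, and by hypothesis $t$ is an endpoint of the line-graph $S_i$. Because $S_i$ contains no cycle, it is not broken apart by the removal of any maximal cycle of $S$, so it reappears intact as a single tree $T = S_i$ of the decomposition, attached to $C$ through $t$; by the definition following Lemma~\ref{cycle tree trunk}, $t$ is precisely the end-node of $T$ along which its trunk core is extended.

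The core of the argument is then that $T$ has an empty trunk core. A line-graph has no vertex of degree three, hence no support node, so by Lemma~\ref{trunk certain nodes} $trunkCore(T) = \emptyset$. With no trunk core there is no \emph{inner} trunk-core node at all, so by Definitions~\ref{def:inner} and~\ref{def:exit} no simple-graph of $T$ (Definition~\ref{def:simple-graphs}) can be an inner-graph: the whole of $S_i$ hangs off the single end-node $t$ and therefore constitutes exit-graph material. The endpoint hypothesis is what makes this clean --- attaching $S_i$ through an interior vertex would make that vertex a branching point splitting $S_i$ into a two-handed structure straddling $t$, whereas an endpoint attachment leaves a single one-handed exit-graph pointing monotonically away from the frame. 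Hence every vertex of $S_i$ belongs to an exit-graph of $S$.

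The hard part will be the careful bookkeeping of the degenerate case: once $trunkCore(T)$ collapses to the empty path, I must make precise what the ``extended trunk core'' and ``exit-graph'' mean (in particular how the attachment vertex $t$ and its neighbour are accounted for when the trunk core is extended to an end-node) and verify that the endpoint hypothesis is exactly the condition guaranteeing that $S_i$ is absorbed as a single exit-graph rather than split into inner hands. Everything else reduces to the structural facts already established for frames and trunk cores.
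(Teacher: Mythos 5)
There is a genuine gap: your argument analyzes the wrong decomposition. The lemma is a statement about $S$, the component \emph{before} the new edge is added, and its content is where the nodes of $S_i$ sit inside the cycle-tree decomposition \emph{of $S$}. Your key step asserts that $S_i$ ``reappears intact as a single tree $T = S_i$'' of that decomposition, so that $trunkCore(T)=\emptyset$ and everything in it is exit-graph material. But the attachment vertex $c \in V(F)$ need not lie on any cycle of $S$: the maximal cycle whose frame is $F$ exists only in $S_+$, since it uses the newly revealed edge, and the frame can consist entirely of vertices that were tree vertices in $S$ (e.g., when the new edge closes a cycle far away from the old cycle $C_S$). Consequently, the tree of $S$'s decomposition containing the nodes of $S_i$ is in general strictly larger than $S_i$: it also contains $c$ and whatever part of $F$ is cycle-free in $S$, it can have vertices of degree three, and --- as the paper's proof notes, since $S$ has a cycle --- it has a non-empty extended trunk core. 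A priori the nodes of $S_i$ could lie on that extended trunk core or in its inner simple-graphs; ruling this out is the entire point of the lemma, and your reasoning, which only uses that $S_i$ viewed in isolation is a line-graph, never touches it. (Even in the special case where $c$ does lie on a cycle of $S$, so that $T=S_i$ holds, the ``bookkeeping'' you defer --- what exit-graphs mean when the trunk core is empty --- is exactly where the remaining work lies, since simple-graphs and exit-graphs are defined relative to the (extended) trunk core.)

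For comparison, the paper's proof works by elimination inside that larger tree, transferring structure between $S$ and $S_+$: (i) cycle membership survives edge addition, and the nodes of $S_i$ lie on no cycle of $S_+$, hence on no cycle of $S$; (ii) membership in an extended trunk core forces a vertex to have degree three or to have two edge-disjoint paths to degree-three vertices, properties that cannot be destroyed by adding an edge and that fail in $S_+$ for the nodes of $S_i$; (iii) an inner simple-graph node shares a level with a trunk-core node in \emph{every} correct embedding, whereas $S_+$ admits a correct embedding in which each node of $S_i$ is alone on its level, and restricting that embedding to $S$ yields the same for $S$. If you want to salvage your approach, you need arguments of this monotonicity/restriction type applied to the tree of $S$'s decomposition that actually contains $S_i$; the structural facts you invoke about $F$ (via Lemma~\ref{two nodes connected to frame}) are fine but only settle how $S_i$ attaches to $F$ in $S_+$, not where its nodes lived in $S$.
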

\begin{proof}
Since by our assumption $S$ has a cycle, we state that each tree has a non-empty extended trunk core and, thus, each node of the component has only four options where to belong. It is either in a cycle, an extended trunk core, an inner-graph, or an exit-graph. So, we now prove that the first three do not happen here:
\begin{itemize}
    \item The node remains on a cycle when adding a new edge and the nodes from $S_i$ are not on the cycle in $S_+$, thus, they were not in $S$. \va{Don't get it. It in the connected component which is a line?} \ap{Yah, this bullet is pretty trivial after all. It couldn't have been on a cycle, since it is not on a cycle now. That's the logic.}
    \item Each node in the extended trunk core is either of degree three or has two edge-disjoint paths to nodes of degree three. Those properties can't disappear when adding a new edge and none of them hold in $S_+$ for nodes in $l$ thus did not them in $S$. So the nodes from $l$ were not in the extended trunk core.
    \item As shown in \ref{tree embedding strategy} for every correct embedding nodes of a simple-graphs always have a node embedded to the same level, namely a node from the trunk core. But $S_+$ has an embedding with all the nodes from $S_i$ being single on their level, and since every embedding for $S_+$ induces an embedding for $S$ the same holds for them in $S$ thus none of them is a node of an inner-graph in $S$.
\end{itemize}
\end{proof}

\begin{lemma}\label{line-graph inner to cycle}
If a connectivity component $l$ left when removing $F$ is a line-graph and it connects to $F$ via its inner node $u$ then all of its nodes belong to an exit-graph in $S$ except possibly $u$.
\end{lemma}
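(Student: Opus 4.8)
The plan is to reuse the four-way classification and the elimination argument from the proof of Lemma~\ref{line-graph end to cycle}, adjusting it to the fact that the attachment vertex $u$ now has degree three. Write $l$ as the path $a_p-\cdots-a_1-u-b_1-\cdots-b_q$ and let $c\in V(F)$ be the frame vertex with $(u,c)\in E(S)$, so that $u$ has degree two inside $l$ and degree three in $S$. Fix an arbitrary $w\in V(l)\setminus\{u\}$ and, by symmetry, assume $w=a_k$ lies on the arm $a_p-\cdots-a_1-u$. Every vertex of $S$ lies in exactly one class of the cycle-tree decomposition — a cycle, the extended trunk core, an inner-graph, or an exit-graph — so it suffices to rule out the first three for $w$.

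First I would rule out cycles: $l$ is a line-graph attached to the remainder of $S_+$ only through the single edge $(u,c)$, so $u$ is a cut vertex and any simple cycle through $w$ would have to visit $u$ twice; since $l$ is itself acyclic, $w$ lies on no cycle of $S_+$, hence none of $S$. Next I would exclude the extended trunk core. As in Lemma~\ref{line-graph end to cycle}, a trunk-core vertex is either of degree three or admits two edge-disjoint paths to degree-three vertices, and these properties are monotone under edge additions, so it suffices to test their failure in $S_+$. But $w=a_k$ has degree at most two and, walking along its arm away from $u$, dead-ends at the degree-one vertex $a_p$; hence $w$ has at most one edge-disjoint path to a degree-three vertex (the one through $u$) and satisfies neither condition. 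Thus no vertex of $V(l)\setminus\{u\}$ is a trunk-core vertex. This in fact pins the extended trunk core of the tree carrying $l$ down to the single vertex $u$, which is forced into every trunk by Lemma~\ref{cycle tree trunk}, while the two arms are interchangeable continuations of that trunk and so never both belong to it.

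Finally I would eliminate the inner-graph class with the embedding argument of Lemma~\ref{line-graph end to cycle}: an inner-graph vertex shares its level with a trunk-core vertex in \emph{every} correct embedding, as established in the tree embedding strategy of Section~\ref{tree embedding strategy}. I would construct a correct embedding of $S_+$ that places the cycle of $F$ on the left, maps $u$ to some level $i$, and runs the two arms monotonically rightward on opposite sides, so that $a_k\mapsto level(i+k)[1]$, $b_k\mapsto level(i+k)[2]$, and no other vertex is sent to level $i+k$. Restricting this embedding to $S$ keeps $a_k$ at level $i+k$, whose only occupants are $a_k$ and $b_k$, and by the previous paragraph neither of these is a trunk-core vertex. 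Hence $w$ shares its level with no trunk-core vertex in this embedding, so $w$ is not an inner-graph vertex of $S$. Having discarded cycles, the extended trunk core, and inner-graphs, $w$ must sit in an exit-graph; as $w\neq u$ was arbitrary, every vertex of $l$ except possibly $u$ lies in an exit-graph.

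The step I expect to be most delicate is the last one. In the end-node case one simply embeds the single line so that every vertex is solitary on its level; here the two arms unavoidably occupy the same levels, so that clean separation is lost and the argument must instead lean on the trunk-core exclusion of the second step to certify that the level-mate $b_k$ is harmless. Verifying that one concrete monotone embedding of $S_+$ simultaneously keeps all of $l\setminus\{u\}$ off every trunk-core level and restricts to a genuinely correct embedding of $S$ is where the care is needed.
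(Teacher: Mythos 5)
Your overall route is the paper's: the same four-way classification (cycle, extended trunk core, inner-graph, exit-graph), the same first two exclusions, and the same idea of killing the inner-graph class by exhibiting a correct embedding whose level-mates avoid trunk-core vertices. The paper states its proof exactly as two adjustments to Lemma~\ref{line-graph end to cycle}, which is what you attempt.

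However, your third step has a genuine error: the embedding you construct does not exist. In $Ladder_N$ no vertex is adjacent to both $level(i+1)[1]$ and $level(i+1)[2]$ (their neighbour sets are disjoint), so $u$ cannot be adjacent to both $a_1\mapsto level(i+1)[1]$ and $b_1\mapsto level(i+1)[2]$. The geometry is in fact forced the other way: by Statement~\ref{cycle occupies level} the frame fills whole levels, so $c$ occupies the frame's extreme level on the same side $s$ as $u$, and the two free neighbours of $u$'s image are $level(i+1)[s]$ and $level(i)[\bar{s}]$; hence in \emph{every} correct embedding of $S_+$ the first vertex of one arm is forced onto $u$'s own level. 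This is exactly where your certification breaks: that vertex's only level-mate is $u$, and $u$ is the one vertex the lemma cannot place off the trunk core, so the claim that "neither of these is a trunk-core vertex" fails for it. The repair is small but necessary: use the offset embedding ($a_k\mapsto level(i+k)[s]$, $b_k\mapsto level(i+k-1)[\bar{s}]$) to exclude every vertex of $l\setminus\{u\}$ except $b_1$, and then the mirrored embedding (arms swapped) to exclude $b_1$, whose level-mate becomes $a_2\in V(l)\setminus\{u\}$, already known not to be a trunk-core node; alternatively, sharpen the inner-graph property to "shares a level with an \emph{inner} trunk-core node," as the paper's wording does, and argue that $u$ cannot be an inner trunk-core node of $S$. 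Your closing paragraph flags precisely this delicacy but leaves it unresolved, and with the impossible coordinates the argument does not go through as written.
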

\begin{proof}
The proof is almost the same as in \ref{line-graph end to cycle} but with to adjustments:
\begin{itemize}
    \item The second bullet does not hold for $u$
    \item $l$ has an embedding where a node from $l$ is single on its level or with another node of $l$ and since nodes of $l$ are not the inner trunk core nodes that means that they are not nodes of inner simple-graphs.
\end{itemize}
\end{proof}

By Lemma \ref{component above cycle} there are at most two connectivity components left when removing $F$. Let's call them $S_1$ and $S_2$. We now describe how we embed $F$, $S_1$ and $S_2$.

So, imagine that we formed $F$ and it has node $f$ of degree three. We first discuss the case of $S_1$ being a line-graph connected with $f$ with its end-node. By Lemma \ref{line-graph end to cycle} we deduce that $S_1$ was a part of an exit-graph in $S$ and thus it was embedded strictly monotonically. In other words, (let's set an enumeration of $S_1$ such that $(S_1[1], f) \in E(S_+)$) $level\langle \varphi(S_1[1]) \rangle < level\langle \varphi(l[2]) \rangle < \ldots$. We then embed $F$ in the way that $f$ is embedded higher then any other node of $F$, say to $level(i)[1]$. And we embed $S_1[j]$ to $level(i + j)[1]$. If the levels of nodes of $S_1$ were decreasing we act the same way but embedding $f$ lower then other nodes of $F$ and embedding $S_1$ in decreasing order.

What if now $S_1$ is a line-graph connected to $F$ via its inner node. By the Lemma \ref{line-graph inner to cycle} we know that its hands (call them $h_1$ and $h_2$) were exit-graphs and thus were embedded monotonically, assume increasingly numerating from head. Assume head of $S_1$ was connected to $f \in V(F)$. We then embed $F$ in a way that $f$ is embedded higher then any other node of $F$ say to $level(i)[1]$. We then embed 
\begin{align}
    head(S_1) \rightarrow level(i + 1)[1]\\
    h_1[j] \rightarrow level(i + 1 + j)[1]\\
    h_2[j] \rightarrow level(i + j)[2]
\end{align}

We act symmetrically if the order on $h_i$ was decreasing.

We now want to show that we cannot face incompatibility namely that we have two line-graphs connected (no matter via their inner or end nodes) to $F$ both having increasing (decreasing) order on their hands in $S$. By Lemma \ref{line-graph end to cycle} those line-graphs were both whiskers in $S$, so we denote them $W_1$ and $W_2$. If they both had the same order that means they were in the same tree. To see this we first prove that there are at most two trees from the cycle-tree decomposition of the component can have exit-graphs. Consider the embedding satisfying invariants \ref{embedding invariants}. It induces an order on the maximal cycle of the component, since maximal cycles do not intersect and there for can be enumerated from bottom to top for example. If the tree is embedded between two consecutive cycles (say, $C_1$ and $C_2$) it must be connected to both of them. This is because they are connected with some path connecting nodes $c_1$ and $c_2$ of cycles. We consider such path that contains just two nodes from cycles, it is straightforward to see that such path can be obtained if we have an arbitrary one. This path (except $c_1$ and $c_2$) belongs to some tree $T$ in a cycle-tree decomposition. If now some tree different from $T$ (say, $T_2$) from a cycle-tree decomposition is embedded between cycles, since the component is connected it has two options: either to be connected to $T$ or to one of the cycles. It can't be connected to $T$ by the definition of the cycle-tree decomposition. Neither it can be connected to $C_1$ or $C_2$ since by the Lemma \ref{component above cycle} if that cycle is removed since $T$ and $T_2$ are on the one side of that cycle they are in the one connectivity component. But this implies that $T_2$ is connected to other cycle which is impossible due to the Lemma \ref{component above cycle}. 

The only way for a tree to have exit-graphs is to be embedded below the lowest cycle or above the highest. But exit-graphs in the highest tree are embedded increasingly and the whiskers in the lowest tree are embedded decreasingly. 

This means that$W_1$ and $W_2$ are in the same tree in $S$, or in other words, they are both on the same side from $C_S$. But now this cycle is contained in $F$ (since we only have $F, W_1, W_2$) and since $W_1$ and $W_2$ by Lemma~\ref{two nodes connected to frame} are now on the opposite sides of $F$ they are on the opposite sides to the $C_S$, which is impossible since then removing $C_S$ leaves $W_1$ and  $W_2$ in the different connectivity components which was not the case in $S$.

\begin{lemma}\label{tree to cycle}
Assume that $S_i$ $i \in \{1, 2\}$ connects to $F$ via a node $u$, $u \in V(T)$ where $T$ is a tree from the cycle-tree decomposition of $S_i$ and $S_i$ has a node of degree three. Then all the nodes from an extended trunk core of $T$ are embedded monotonically in $\varphi'$.  
\end{lemma}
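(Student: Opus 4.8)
The plan is to deduce the monotonicity of the extended trunk core of $T$ from the monotonicity of a path that the algorithm has \emph{already} embedded, rather than re-derive it from scratch. Write $(a,b)$ for the newly revealed edge (the one whose addition closes the maximal cycle with completed frame $F$), and let $T_S$ be the tree of the cycle-tree decomposition of the whole component $S$, embedded by $\varphi'$, whose vertex set contains the vertices of $T$. Such a $T_S$ is well defined: the vertices of $T$ lie on no cycle of $S_i$, and since the only cycle present in $S_+$ but absent in $S$ is the one being closed (its edge $(a,b)$ still missing in $S$), the vertices of $T$ lie on no cycle of $S$ either; hence the connected graph $T$ sits inside a single tree $T_S$ of the decomposition of $S$. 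The whole argument then reduces to the containment: extended trunk core of $T$ $\subseteq$ extended trunk core of $T_S$.

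To establish this containment I would argue purely structurally, tracking vertices of degree three (recall no degree exceeds three, the host being a ladder, so degrees are preserved when passing from $S_i$ up to $S$). Every vertex of the extended trunk core of $T$ is either a support node of $T$ or lies on an extension towards an end-node adjacent to a cycle. A support node of $T$ (a degree-three vertex, or a vertex between two degree-three vertices) stays a support node in $S$, hence lies in the trunk core of $T_S$. For the extension reaching $u$: the vertex $c\in V(F)$ joined to $u$ lies on the cycle, and in $S$ its two cycle-neighbours together with the edge $(u,c)$ (which is not the new edge, hence already present) make $c$ a degree-three vertex of $S$; thus $u$ and every vertex of the extension between the trunk core of $T$ and $u$ lie on a path between two degree-three vertices of $S$, i.e. are support nodes of $S$. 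Any second extension of $T$, reaching another cycle $C'$ of $S_i$, is handled identically, $C'$ being a genuine cycle already in $S$ and Lemma~\ref{cycle tree trunk} placing its end-node in the trunk. Hence all these vertices belong to the extended trunk core of $T_S$.

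Finally I would invoke the maintained invariants to close the argument. By invariant~(2) of the list~\ref{embedding invariants}, $\varphi'$ restricted to $T_S$ matches the strategy of Section~\ref{tree embedding strategy}, which places the extended trunk core at the cells $level(i)[1]$ for consecutive $i$; this is a strictly monotone placement (consistent with Lemma~\ref{trunk core monotone}). Monotonicity only forbids two non-adjacent vertices from sharing a level, a condition inherited by any subset of vertices, so the restriction of $\varphi'$ to the extended trunk core of $T$, being a connected sub-path of the monotonically embedded extended trunk core of $T_S$, is again monotone, which is exactly the assertion.

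I expect the structural containment to be the main obstacle, and within it the boundary case where the connecting vertex $c$ coincides with an endpoint $a$ or $b$ of the not-yet-closed cycle: there $c$ has degree two in $S$, so one must instead follow the path $P_C=C\setminus\{(a,b)\}$ to a nearby degree-three vertex (for instance the second attachment vertex of $C$, which exists and yields a degree-three vertex by Lemma~\ref{two nodes connected to frame}, since $C$ has length at least six) to witness $u$ as a support node of $S$. As a fallback purely at the embedding level, one can assume the extended trunk core of $T$ is non-monotone and apply Lemma~\ref{monotone end-nodes}: the shared level must involve one of its two end-nodes, and both are pinned---one by its adjacency to the cycle via Lemma~\ref{cycle tree trunk}, the other by the trunk-core monotonicity of Lemma~\ref{trunk core monotone}---yielding a contradiction.
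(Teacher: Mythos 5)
Your key structural claim---that the extended trunk core of $T$ is contained in the extended trunk core of $T_S$---is false in general, and this is exactly where the paper's own proof has to do something different. The paper proves only the weaker statement that the extension $ext_1$ of $T$'s trunk core towards $F$ was, in $S$, either part of the extended trunk core of $T_S$ \emph{or part of an exit-graph of $T_S$} (it rules out ``cycle node'' and ``inner simple-graph node'' as possible former roles), and then closes the argument by observing that the embedding strategy of~\ref{tree embedding strategy} also places exit-graphs monotonically, continuing the direction of the trunk core. The exit-graph case is not vacuous, and your argument breaks precisely there. Concretely, let a hand of an exit-graph of $T_S$ snake in the demand ladder (right along row one, down to row two, right, and back up to row one), and let the new edge close a $6$-cycle entirely inside that hand. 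Then the attachment node $c\in V(F)$ adjacent to $u$ is an endpoint of the newly revealed edge, so $c$ has degree two in $S$; moreover this cycle has only \emph{one} attachment node, so your repair of walking along $C\setminus\{(a,b)\}$ to ``the second attachment vertex'' is unavailable---Lemma~\ref{two nodes connected to frame} guarantees \emph{at most} two attachment nodes, not exactly two. In this situation $u$, the head, and the hand nodes between them are not support nodes of $S$ at all (on the far side of $u$ the hand simply dead-ends with no node of degree three), so they lie in an exit-graph of $T_S$, outside its extended trunk core: the containment you rely on fails, even though the lemma's conclusion still holds, because $\varphi'$ embedded that exit-graph monotonically.

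Your two fallbacks do not patch this. The degree-three argument for $c$ can also fail when the frame edge at $c$ is one of the ``completion'' edges of invariant 4 of the list~\ref{embedding invariants}, which need not be revealed in $S$. And the purely embedding-level fallback cannot pin the $u$-end of the path: Lemma~\ref{cycle tree trunk} needs a cycle adjacent to $u$ in the graph that $\varphi'$ embeds, but in $S$ the cycle of $F$ does not yet exist---creating it is exactly what the new edge does---so that end of the extended trunk core of $T$ is unconstrained under $\varphi'$, and Lemma~\ref{monotone end-nodes} alone yields no contradiction. The missing idea is the paper's case analysis: show $ext_1$ could not have been cycle or inner simple-graph material, accept ``exit-graph'' as a possible former role, and use the fact that the strategy (enforced by invariant 2) embeds exit-graphs monotonically away from the trunk core, so that $trunkCore(T)\cup ext_2\cup ext_1$ is monotone in every surviving case.
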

\begin{proof}
All the nodes from a trunk core of $T$ belong to a trunk core in $S$. This is because $T$ is contained in some tree $T_S$ from a cycle-tree decomposition of $S$ and thus a trunk core of $T$ is contained in a trunk core of $T_S$ since all the support nodes remain support nodes when extending a tree. 

So we consider two extensions of $T$-s trunk core $ext_1$ and $ext_2$. Let's say $ext_1$ is the one which extends to $F$. Note that $ext_2$ was an extension in $S$ and thus we already have that $trunkCore(T) \cup ext_2$ is embedded monotonically by $\varphi'$. 

The $ext_1$ was either a part of an extended trunk core of $T_S$ or an exit-graph in $T_S$. This is because it is obviously couldn't have been a part of a cycle, since nodes on cycle remain on cycle when the new edge is added. Neither could it have been a part of the inner simple-graph since then the end-node of the $trunkCore(T)$ to which $ext_1$ is connected was an inner node of the trunk core meaning that there was two edge-disjoint paths from it to two support nodes. If those support nodes are in $T$ now the end-node of $trunkCore(T)$ is an inner trunk core node of $T$ which is nonsense. Otherwise it is not in $S_i$ meaning the path to it goes through $ext_1$ since it is the only path connecting $T$ and $F$. But then $ext_1$ belongs to a trunk core in $S$. So $ext_1$ was either a part of an exit-graph or a part of the $trunkCore(T_S)$.  In both cases it is embedded monotonically with $trunkCore(T) \cup ext_2$. 
\end{proof}

So assume we have $S_i$ $i \in \{1, 2\}$ which connects to $f \in V(F)$ via a node $u$, $u \in V(T)$ where $T$ is a tree from the cycle-tree decomposition of $S_i$ and $S_i$ has a node of degree three. Lemma \ref{tree to cycle} tells us that $T$-s extended trunk core was embedded monotonically, say, increasingly starting from $u$. In this case we embed $F$ the way that $f$ is embedded higher then every other node from $F$, say, to $level(i)[1]$, and we embed $j$-th node of an extended trunk core of $T$ to the $level(i + j)[1]$. All the other nodes from $S_i$ we embed the same as they were embedded by $\varphi'$ relatively to the nodes of the extended trunk core of $T$. 

We now analyze if we obtained a conflict of nodes from $S_i$ and $F$ and if $T$ is embedded respecting strategy \ref{tree embedding strategy}. 

Consider inner simple-graphs of $T$ in order they are connected to the extended trunk core of $T$ going through the extended trunk core from $u$. We state that all of them except possibly the first two were inner simple-graphs in $S$. This is because for a foot of such simple-graph there is a support node before it (one of the foots of first two simple-graphs) and a support node or an end-node after it (because it is inner in $S_+$). So those simple-graphs can not conflict with nodes of $F$ since they did not pass through the levels of foots of first two simple-graphs. They also respect the strategy \ref{tree embedding strategy}.

So we only need to orient first two simple-graphs in the way they don't conflict with nodes of $F$ and they respect the embedding strategy. This can be done, since the strategy can be applied to $T \cup \{f \text{ and its neighbours}\}$.

Our last goal in analyzing an embedding of a component with a node of degree three which starts with a tree is to show that we can't face incompatibility. If say, $S_1$ is starting with a tree $T$ with increasing order on its extended trunk core in $S$ then the order on $S_2$ (if one exists) is decreasing (meaning that $S_2$ has a decreasing order in $S$ on an extended trunk core of a tree it starts with or a decreasing order in $S$ on its hands if $S_2$ is just a line-graph). Let's say $S_1$ and $S_2$ are connected to $F$ with nodes $c_1 \in V(S_1)$ and $c_2 \in V(S_2)$ respectively.

First assume we have $S_1$ starting with a tree $T_1$ and $S_2$ starting with a tree $T_2$. Both $S_1$ and $S_2$ have a node of degree three. For $T_i$ take a node $u_i$ which is a foot of $T_i$ other then the one in $F$ if such exists or the end of the extended trunk core other then the one connected to $F$. Since $S_i$ has a node of degree three, one of those must exist. A shortest path connecting $u_1$ and $u_2$ contain both extended trunk cores of $T_1$ and $T_2$. Since no path in $\varphi'(S)$ is self-crossing the path $u_1 - u_2$ must be monotone by Lemma \ref{monotone end-nodes} since no nodes of the path can be embedded to the same level with $u_1$ and $u_2$ (the neighbour of $u_i$ is either a cycle node or an exit-graph node). Thus paths $u_1-c_1$ and $c_2-u_2$ have the same order in $\varphi'(S)$ so the extended trunk cores of $T_1$ and $T_2$ which are contained in $c_1-u_1$ and $c_2-u_2$ have opposite order.

Now consider the case when $S_1$ has a node of degree three and starts with a tree $T$ and $S_2$ is just a line-graph. Let's say that $S_1$ connects to $f_1 \in V(F)$ and $S_2$ connects to $f_2 \in V(F)$. For $T$ take a node $u_1$ which is a foot of $T$ other then the one in $F$ if such exists or the end of the extended trunk core other then the one connected to $F$. Since $S_1$ has a node of degree three, one of those must exist. And let's say we have an increasing order on the hands of $S_2$. Consider the top-most node of a hand of $S_2$ in $\varphi'(S)$, let's call it $u_2$. Note that $u_2$ is the top-most node among all $S$ and can share level only with nodes from exit-graphs.  Now consider the shortest path $u_1-u_2$. It contains an extended trunk core of $T$. Our goal now is to proof that this path is monotone, that would imply as in the previous case that $T$-s extended trunk core and $S_2$-s hands have different order. To see that it is monotone recall that $\varphi'$ produces no self-crossing paths and thus, if the path is not monotone, by Lemma \ref{monotone end-nodes} we either have $u_1$ sharing level with some other node from path which is impossible since it is either a cycle node or a trunk-core end-node. Or $u_2$ shares level with some other node from path which is also impossible since $u_2$ is the top-most node with only nodes from other exit-graphs hands possibly being on its level.\\

So we discussed what to do if $S_i$ connects to the $F$ with a node from tree from its cycle-tree decomposition. The last case is when it connects to $F$ with a node of a cycle from its cycle-tree decomposition.

So assume $S_1$ connects to a node $f_1 \in V(F)$ with a cycle $C$-s node, say $u_1$. And let's assume was the top-most node in $C$ the case when it is the bottom-most is totally symmetric. We embed $F$ the way $f_1$ becomes a bottom-most node we embed $C$ the way $u_1$ is the top most node and it is under $f_1$. We embed the rest of $S_1$ relatively to $C$ as it was embedded by $\varphi'$.

Nothing changed in $S_1$, so the invariants maintain for it. Therefor our only goal is to show that we don't face incompatibility with $S_2$.

If $S_2$ starts with a cycle $C_2$ which connects to $F$ via a node $u_2$ then we conclude that $u_2$ was the bottom-most node of $C_2$ since there is a path in which connects $u_1$ and $u_2$ without passing through other nodes of $C_1$ and $C_2$.

If $S_2$ starts with a tree we act similarly as we did in previous cases take such node $u_2$ that the path $u_1 - u_2$ contains tree-s trunk core/extended trunk core/hand of an exit graph and we chow that by Lemma \ref{monotone end-nodes} path $u_1-u_2$ should be monotone, since no nodes of it can be embedded to the same level as $u_2$ for the same reasons as before and neither can they be embedded to the same level with $u_1$ since there is a cycle node embedded to the same level as $u_1$. So if, say, $u_1$ was the top-most node of $C$ in $\varphi'(S)$ then the path is increasing and thus the trunk core/extended trunk core/hand of an exit graph is also increasing in $\varphi'(S)$ which is consistent to the fact that it connects to the top-most node of $F$.\\

All the actions described above assume that there was a cycle in $S$ already. If there wasn't we act as described below.

The new edge $(u, v)$ is in one connectivity component so there is a maximal cycle containing $u$ and $v$. Let's take a frame $F$ of that cycle. There are possibly two connectivity components left when removing $F$ from $S_+$, denote them $S_1$ and $S_2$. Let's say that $S_1$ connects to $f_1 \in V(F)$ and $S_2$ connects to $f_2 \in V(F)$. Moreover we know that $S_1$ and $S_2$ are trees. 

We embed $F$ the way that $f_1$ is the top-most node and $f_2$ is the bottom-most node. We then embed $S_1$ and $S_2$ the way they match strategy \ref{tree embedding strategy} orienting the trunk not to conflict with cycle nodes.\\
\va{This part is unreadable without the figures.} 

\subsection{Cost of the algorithm}
\label{app:cost}

We now want to analyze the cost of actions performed when serving a new edge within one component. Note that since the resulting embedding is quasi-correct the cost of serving the request is $O(1)$.

To make an amortized analysis we introduce the concept of \textit{scenarios}. The scenario is a reason for node to move in the embedding and thus change its number. Each scenario has two main properties: the number of times it can happen to a certain node (denoted with $SC_N$) and a cost payed for that node movement in this scenario (denoted with $SC_C$). So the total cost payed for node movements in this terms is bounded with
\begin{align}
    2n \cdot \sum\limits_{SC \in \mathrm{SCENARIOS}} SC_N \cdot SC_C
\end{align}
Where the first factor $2n$ arises due to the fact that $SC_N$ and $SC_C$ are defined for one particular node, but we want the total cost.

We propose that we only need to focus on the relative order changes.
\begin{lemma}\label{relative order cost}
If we have two enumerations $h_1$ and $h_2$ of graph $G$ then the cost of obtaining $h_2$ from $h_1$ via swaps is no more than
\begin{align}
    |\{(u, v) \mid u, v \in V(G),\ h_1(u) < h_1(v) \wedge h_2(u) > h_2(v)\}|
\end{align}
\end{lemma}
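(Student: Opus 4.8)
The plan is to identify the right-hand side with the number of \emph{discordant pairs} between the two enumerations and to exhibit an explicit sequence of adjacent swaps realizing it. Write $D = |\{(u,v) \mid u,v \in V(G),\ h_1(u) < h_1(v) \wedge h_2(u) > h_2(v)\}|$ for the quantity in the statement; note that each unordered pair of vertices whose relative order differs between $h_1$ and $h_2$ contributes exactly once to $D$ (through the ordered representative with $h_1(u) < h_1(v)$), while a pair with the same relative order in both enumerations contributes nothing. Since a migration acts only on two nodes occupying adjacent positions, it suffices to exhibit a sequence of at most $D$ such migrations that transforms $h_1$ into $h_2$.

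First I would establish the key local fact: a single swap of two nodes at adjacent positions changes the relative order of exactly one pair of vertices, namely the two swapped nodes, and leaves the relative order of every other pair unchanged. Indeed, if $u$ and $v$ occupy positions $i$ and $i+1$, then any third node $w$ sits at a position $\le i-1$ or $\ge i+2$, so after exchanging $u$ and $v$ the order of $w$ relative to both $u$ and $v$ is preserved; only the pair $\{u,v\}$ flips. Consequently, if the current enumeration is $h$ and we swap an adjacent pair $\{u,v\}$ that is \emph{discordant} with $h_2$ (out of order relative to $h_2$), the discordant-pair count between the new enumeration and $h_2$ drops by exactly one.

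Next I would argue that such a decreasing swap is always available until we reach $h_2$. If the current enumeration $h \ne h_2$, then there must exist a position $i$ such that the node at $i$ and the node at $i+1$ are in the wrong order with respect to $h_2$; otherwise the order of nodes along the line would agree with the order induced by $h_2$ on every pair of consecutive positions, forcing $h = h_2$, a contradiction. Swapping this adjacent discordant pair is a legal migration and, by the local fact, decreases the discordant-pair count toward $h_2$ by one. Starting from $h_1$, where this count equals $D$, and iterating, the count strictly decreases and reaches $0$ after exactly $D$ swaps, at which point the enumeration equals $h_2$. Hence the cost of obtaining $h_2$ from $h_1$ via swaps is at most (indeed exactly) $D$, the claimed bound.

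The only subtle points are the two combinatorial facts used above — that an adjacent swap flips precisely one relative order, and that a non-sorted arrangement always admits an adjacent discordant pair — both of which I expect to be the main, though routine, obstacle to phrase cleanly in the line-network model, where a swap is defined only between nodes mapped to neighbouring vertices of $L_n$. Everything else is a direct counting induction.
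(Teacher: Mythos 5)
Your proof is correct, but it follows a different route than the paper's. Both arguments reduce the lemma to the classical fact relating adjacent-swap distance to the number of inversions, but the paper proves it by viewing $h_1$ as a permutation relative to the ordering induced by $h_2$ and running a double induction (on the number of elements and the number of inversions) with a selection-sort step: it brings the element that belongs in position $1$ to the front with $i-1$ swaps, argues this removes exactly $i-1$ inversions, and recurses on the remaining $(n-1)$-element permutation after reindexing. You instead give a direct bubble-sort/monovariant argument: an adjacent swap flips the relative order of exactly the swapped pair, a discordant adjacent pair always exists while $h \neq h_2$ (by transitivity of agreement on consecutive positions), so the inversion count decreases by exactly one per swap and hits zero after exactly $D$ swaps. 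Your version avoids the paper's reindexing bookkeeping and the double induction, and it yields the slightly stronger conclusion that this strategy attains the bound with equality; the paper's recursive formulation moves one node many positions per step, which matches more closely how its reconfiguration routines are described, but proves only the inequality actually needed. Both are complete proofs of the statement.
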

\begin{proof}
We can order nodes of $G$ by $h_2$. $h_1$ can then be viewed as a permutation so the statement of the Lemma can be reformulated as "the swap distance between a permutation $p$ and an identity permutation is less or equal the number of inversions in $p$".

We proof this by induction. The induction would be among the number of elements in permutations and among the number of inversions in permutations. 

The induction base is 0 inversions for each number of elements which is trivial. We also notice that if there is just one element in the permutation then this permutation can't have inversions. 

We now assume that our permutation $p$ has $n$ elements and $k$ inversions, with $k > 0$ and $n > 1$.

If now $p[1] = 1$ then the distance between $p$ and $identity_n$ is the distance between $p'$ and $identity_{n - 1}$ where $p'[i] = p[i + 1] - 1$. And since $p'$ has the same number of inversions as $p$ then by the inductive assumption it is less or equal to $k$ which is what we desire. 

If $p[1] = i$, $i \neq 1$ then we first spend $i - 1$ swaps to bring $i$ to the position 1 reducing the number of inversions by $i - 1$. And then apply the same idea with $p'$ obtaining that the distance from $p'$ to an $identity_{n - 1}$ is $k - (i - 1)$ and thus we provided the series of swaps to obtain an $identity_n$ from $p$ with $\leq k$.  
\end{proof}

\begin{lemma}\label{one component nodes movement}
Suppose $S$ is a connectivity component with a cycle embedded into $Ladder_n$ via $\varphi'$ which respects invariants \ref{embedding invariants}. Suppose we have a new edge in the connectivity component $S$. We denote $S$ with a new edge  by $S_+$. Let's call the frame of a maximal cycle containing the ends of the new edge $F$. The connectivity components left when removing $F$ from $S_+$ are $S_1$ and $S_2$. Suppose that our algorithm embeds $S_1$ above $F$.
\begin{enumerate}
    \item If there is a cycle in $F$ that was present in $S$ then all the nodes from $S_1$ were above that cycle in $\varphi'(S)$.
    \item If there is a cycle in $F$ that was present in $S$ then there is a node of cycle that is top-most for both new and old embeddings. 
    \item For each node $u \in V(S_1)$ and for each node $v \in V(S_2)$ $level\langle \varphi'(u) \rangle > level\langle \varphi'(v) \rangle$ except possibly the first two inner simple-graph nodes of $T_i$ if $S_i$ connects to $F$ with a tree $T_i$ from a cycle-tree decomposition of $S_i$.
\end{enumerate}
\end{lemma}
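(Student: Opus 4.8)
The plan is to read all three items as rigidity statements: the level order that $\varphi'$ imposes on $S$ in a neighbourhood of the cycle is essentially forced, so it must be reproduced by the re-embedding. The tools I would use are Statement~\ref{cycle occupies level} (a cycle fills whole levels and occupies a contiguous band), Lemma~\ref{component above cycle} (everything strictly above the band of a frame forms a single connected component reached through one node), Lemma~\ref{two nodes connected to frame} together with invariant~4 of~\ref{embedding invariants} (a maximal cycle of length $\geq 6$ has at most two attachment nodes, sitting at opposite ends of its band), and, for the tree attachments, Lemma~\ref{tree to cycle} and Lemma~\ref{monotone end-nodes}.

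For items 1 and 2 I would fix the cycle $C_S \subseteq F$ that already existed in $S$ and let $[i,j]$ be the contiguous band of levels it occupies under $\varphi'$ (fully occupied, by Statement~\ref{cycle occupies level}). By Lemma~\ref{component above cycle} the vertices with $level\langle\varphi'(\cdot)\rangle > j$ form a single connected component, reached from $C_S$ through exactly one node $w$ lying on the top level $j$ (uniqueness is Lemma~\ref{two nodes connected to frame} under invariant~4). Since $S_1$ is connected in $S_+$ and the only new edge lies inside $F$, $S_1$ is already connected in $S$ and touches $C_S$ through a single node, hence lies entirely on one side of the band; as the algorithm fixes the above/below assignment from the old levels of the two attachment nodes, ``$S_1$ embedded above'' is exactly the case ``$S_1$ on the $>j$ side'', giving item~1. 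For item~2, the attachment edge of $S_1$ is preserved into $S_+$, so $S_1$ still hangs off $w$; $w$ is a top-most cycle node in $\varphi'$ (it sits on level $j$), and the frame re-embedding rotates $C_S$ so that the $S_1$-attachment is again top-most, so $w$ is top-most in both embeddings.

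For item 3 I would split on whether $F$ already contained a cycle of $S$. If it did, item~1 and its mirror image place $S_1$ entirely at levels $>j$ and $S_2$ entirely at levels $<i$; since $i \leq j$, every $u \in V(S_1)$ strictly exceeds every $v \in V(S_2)$ in level, with no exception. If $F$ is a brand-new cycle, its vertices formed in $S$ a monotone path $P$ along a trunk (Lemma~\ref{monotone end-nodes} and Lemma~\ref{trunk core monotone}), off whose two ends $S_1$ and $S_2$ hang. Whenever $S_i$ attaches through a tree $T_i$ carrying a degree-three node, Lemma~\ref{tree to cycle} forces its extended trunk core to have been embedded monotonically, and repeating the ``incompatibility'' argument shows the two extended trunk cores run in opposite vertical directions, so $S_1$ sits above and $S_2$ below $P$. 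The only nodes not pinned by this monotonicity are the first two inner simple-graphs adjacent to the newly created degree-three vertex, whose orientation is precisely the free choice analysed in Section~\ref{tree embedding strategy}; these are exactly the declared exceptions.

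The main obstacle I expect is the no-old-cycle subcase of item~3: there the separation is not handed to us by a pre-existing cycle, and I must (i) establish monotonicity of both attaching extended trunk cores, (ii) show they point in opposite vertical directions, and (iii) carve out exactly the first two simple-graphs near the new degree-three vertices while ruling out any further simple-graph straddling the band of $P$. This last step, combining Lemma~\ref{tree to cycle} with the single-attachment property, is the delicate bookkeeping; by contrast the cycle subcase is essentially immediate from Lemma~\ref{component above cycle}.
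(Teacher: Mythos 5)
Your proposal is correct in substance, and for item 3 (which both you and the paper identify as the hard part) it follows essentially the paper's route: split on whether $F$ contains a cycle already present in $S$; in the old-cycle case conclude from item 1 and its mirror; in the new-cycle case use monotonicity of the attached extended trunk cores (Lemma~\ref{tree to cycle}, Lemma~\ref{monotone end-nodes}), the opposite-orientation ``incompatibility'' argument, and carve out the first two inner simple-graphs as the only exceptions. The paper records two pieces of bookkeeping that you gloss over and that are needed to make this carving precise: (i) if $S_2$ exists at all, then $V(F)$ already contained a node of degree three in $S$, which kills the case where $c_1$ lies in an exit-graph; and (ii) inner simple-graphs from the third one onward cannot cross the heads of the first two, hence cannot cross $c_1$.

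Where you genuinely diverge is in items 1 and 2. The paper's mechanism is orientation preservation: if a component attached to a cycle were re-embedded on the other side, any path monotone in both embeddings would flip orientation; since the algorithm explicitly preserves the orientation of some monotone path of $S_1$ reaching its attachment, $S_1$ cannot change sides, and item 2 then follows by running an increasing monotone path from the attachment node. Your mechanism is rigidity of the old embedding plus the algorithm's decision rule: the old cycle's band is fully blocked (Statement~\ref{cycle occupies level} together with invariants 1 and 5 of~\ref{embedding invariants}), so $S_1$ lay entirely on one side, and the pseudocode's comparison of the old levels of the two attachment nodes agrees with those sides. This is sound, and arguably cleaner for item 1, but it has two soft spots. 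First, $S_1$ attaches to the frame $F$, not necessarily to $C_S$ itself; your ``single attachment node $w$ on the top level of $C_S$'' conflates the two, since whisker nodes of $F$ may sit between $S_1$ and $C_S$. Second, your item 2 step ``the frame re-embedding rotates $C_S$ so that the $S_1$-attachment is again top-most'' is close to a restatement of the claim: to know that the new top-most node of $C_S$ is $w$, you must argue that the portion of $F$ placed above $C_S$'s new band is exactly (part of) the old above-part, i.e., item 1 extended from $S_1$ to the frame nodes. The paper avoids this by deducing item 2 directly from the preserved increasing path that avoids the cycle's nodes. Neither issue is fatal, but both need filling to reach even the paper's (already informal) level of rigor.
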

\begin{proof}
Assume $S_1$ connects to $F$ via edge $(f_1, c_1),\ f_1 \in V(F), c_1 \in V(S_1)$ and $S_2$ connects to $F$ via edge $(f_2, c_2),\ f_2 \in V(F), c_2 \in V(S_2)$
\begin{enumerate}
    \item We want to prove the following fact: consider component $A$ is connected to cycle $C$ with edge $(a, c)$, $a \in V(A)$ $c \in V(C)$ and it is embedded above $C$ by $\varphi_1$ and below $C$ by $\varphi_2$. Then for each path in $A$ starting from $a$ which is monotone for both $\varphi_1$ and $\varphi_2$ it has changed its orientation i.e. if it was increasing it is now decreasing and vice versa. To see this note that $a$ was a top-most node and became the bottom-most node of the path or vice versa. 
    
    For each type of $S_1$ our new edge processing strategy maintained an orientation of some monotone path in $S_1$ which can be extended remaining monotone to the node connected to the cycle. Thus by the fact above it must remain on the same side of the cycle. 
    \item Denote the cycle in statement by $C$,. Denote by $A$ the component that was above $C$ in $\varphi'(S)$ which contains $S_1$. Say it is connected to $C$ via edge $(a, c)$ $a \in V(A)$ $c \in V(C)$. Then by item 1 node $c$ is top-most in both embeddings of $S$ and $S_+$ since there is a monotonically increasing path starting from $a$ which does not pass through nodes of $C$.
    \item If $S_1$ and $S_2$ are both line-graphs from the proof of compatibility for line-graphs we know that the must be in the different trees in a cycle-tree decomposition of $S$ meaning they are separated by cycle and thus their nodes don't share levels. Moreover nodes form $S_1$ were in the top-most tree and nodes from $S_2$ were in the bottom-most tree, so indeed every node from $S_1$ was embedded higher than every node of $S_2$.
    
    For all the other cases on $S_1$ and $S_2$ in the proof of their compatibility we build a monotone path which passes through $c_2, f_2, f_1, c_1$. This path is monotonically increasing levels in $\varphi'(S)$ if $S_1$ is embedded above $F$ by our algorithm. So $level\langle \varphi'(c_2) \rangle < level\langle \varphi'(c_1) \rangle$ so our goal now is to analyze what nodes from $S_1$ can go under $level\langle \varphi'(c_1) \rangle$ (the analysis for $S_2$ is symmetric). 
    
    If $S_1$ connects to $F$ with a cycle or $S_1$ is a line-graph no nodes can go under $c_1$.
    
    To analyze the last case on $S_1$ we need the following fact: if $S_2$ exists then $V(F)$ contains a node of degree three in $S$. This is because $V(F)$ contains two nodes of degree three in $S_+$ namely the foots of opposite (non adjacent) whiskers and those nodes are not adjacent since the cycle is of length at least 6. So one of them must have been of degree three before the new edge.
    
    So assume now $S_1$ connects to $F$ with a tree and contains a node of degree three. From the analysis of compatibility for such $S_1$ we know that $c_1$ is either an extended trunk core node or an exit-graph node in $S$. If it is an exit-graph node, then $S_2$ does not exists since if it does there is a node of degree three in $V(F)$ in $S$ and thus there are two disjoint paths from $c_1$ to nodes of degree three which can't be for an exit-graph node.
    
    If now $c_1$ is in an extended trunk core node of a tree $T$. If $F$ contains a cycle that was present in $S$ then nodes from $S_1$ and $S_2$ were separated by this cycle then by item 1 all the nodes from $S_1$ were above that cycle and all the nodes from $S_2$ were bellow so the proposal holds. All the nodes from an extended trunk core of $T$ were above $c_1$ in $S$ so the only possible nodes are the nodes to go below $c_1$ are nodes from inner simple-graphs of $T$. And the inner simple-graphs starting from the third one can't cross the first two graphs heads so they can't cross $c_1$ as well.
\end{enumerate}
\end{proof}

We are now ready to analyze the cost payed by each node when a new edge in the connectivity component appears.

\begin{scenario}[Inner simple-graph reorienting]
The node falls into this scenario if it is a part of an inner simple-graph which is being reoriented.
\end{scenario}
\begin{lemma}
The ``Inner simple-graph reorienting'' scenario costs $O(n)$ for a node and happens only once for a given node.
\end{lemma}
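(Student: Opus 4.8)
The plan is to verify the two quantities attached to this scenario separately: the per-node cost $SC_C = O(n)$ and the per-node multiplicity $SC_N = 1$. Throughout I would measure cost by relative-order changes, as licensed by Lemma~\ref{relative order cost}, so it suffices to count how many other nodes a reoriented node overtakes, and to bound how often a given node can be subjected to such a reorientation.

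For the cost bound, recall that by Lemma~\ref{simple-graphs line-graphs} every inner simple-graph is a line-graph, and by the Septum invariant (Invariant~\ref{invariant:septum}) its foot and head occupy a common level exclusively, with its hand(s) extending monotonically away from that level. A reorientation simply flips a hand: a node sitting at hand-position $j$ moves from $level(i+j)$ to $level(i-j)$ (or vice versa), i.e. by at most $2j \le 2N = O(n)$ levels. Since each level of a quasi-correct embedding carries at most three images, this node overtakes at most $O(n)$ other nodes on the line obtained by the level-by-level enumeration (including the remaining nodes of its own reversed hand). By Lemma~\ref{relative order cost} the swap cost charged to the node is bounded by this number of relative-order inversions, hence $SC_C = O(n)$.

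The more delicate claim is $SC_N = 1$. I would argue that a given node participates in an inner-simple-graph reorientation at most once, the key structural fact being that the demand graph only grows. Consequently, once the foot of a simple-graph becomes an \emph{inner} node of its trunk core -- equivalently, once it is flanked by support nodes or by a maximal cycle on both sides -- it stays inner forever. The algorithm performs a Septum-restoring reorientation only at the transition that first makes the simple-graph inner: in the edge-in-one-component case this is exactly the flip of the ``first'' inner-graphs adjacent to a freshly formed frame, and Lemma~\ref{one component nodes movement}(3) guarantees that only these first simple-graphs change their relative order, every deeper inner simple-graph retaining its orientation. Once flipped, the simple-graph is locked between the new cycle (or its bounding support nodes) on one side and the unchanged core on the other, and since the invariants are maintained thereafter, no later edge can force a second flip of the same simple-graph.

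The main obstacle is precisely this stability argument: one must rule out that a subsequently revealed edge re-creates a conflict at an already-oriented inner simple-graph. I would dispatch this by a case analysis mirroring Lemma~\ref{one component nodes movement}: a new edge either leaves the simple-graph's neighbourhood untouched, forms a frame strictly beyond it (so its foot stays interior and its orientation is inherited rather than recomputed), or merges a fresh component, which by the two-component analysis can only reorient simple-graphs that were not previously inner. In every branch the node's simple-graph is reoriented at most at its unique ``becoming-inner'' moment, giving $SC_N = 1$ and hence a total contribution of $2n \cdot SC_N \cdot SC_C = O(n^2)$.
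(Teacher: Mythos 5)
Your cost half is fine: the paper itself dismisses it with the trivial remark that a node can never pay more than $2n$, and your refinement via Lemma~\ref{relative order cost} (a flipped hand node crosses $O(n)$ levels, each carrying at most three images, hence $O(n)$ inversions) is a correct, slightly more detailed version of the same bound.

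The multiplicity half, however, has a genuine gap. Your entire stability argument rests on the claim that a Septum-restoring reorientation is performed \emph{only} at the unique transition when a simple-graph first becomes inner, and that a component merge ``can only reorient simple-graphs that were not previously inner.'' That contradicts the algorithm. In the two-components case (Algorithm~\ref{alg:two:components:edge}), when the smaller component attaches to an inner trunk-core node $u$, the algorithm explicitly reorients $l_1, l_2$, the two \emph{already-inner} simple-graphs neighbouring $u$, to restore Invariant~\ref{invariant:septum}; likewise, in the one-component case the first inner-graphs of $S_1$ and $S_2$ adjacent to a freshly formed frame are flipped even though they were inner simple-graphs before the new edge arrived. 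So ``becoming inner'' is not the only trigger, and your case analysis does not close the argument. The paper's actual reasoning is different and does not need to identify the triggering events at all: an inner simple-graph is a line-graph and therefore has exactly \textbf{two} possible orientations, and the algorithm flips an orientation only when the current one violates the invariants of Section~\ref{embedding invariants}. Since the demand graph only grows, an orientation that has once become invalid stays invalid, so the algorithm can never flip back to it; with only two orientations available, each simple-graph (and hence each of its nodes) is reoriented at most once, no matter which kind of request caused the flip. Replacing your stability case analysis with this two-orientations-plus-monotonicity argument repairs the proof.
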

\begin{proof}
Node can't pay more than $2n$ so $O(n)$ bound is trivial.

The inner simple-graph has only two possible orientations. We change its orientation only if the current one violates the invariants \ref{embedding invariants} thus we will never get back to it. 
\end{proof}

\begin{scenario}[First time on a cycle]
The node falls into this scenario if it was not on a cycle before the new edge and after the new edge it is.
\end{scenario}
\begin{lemma}
The ``First time on a cycle'' scenario happens at most once for each node and costs $O(n)$.
\end{lemma}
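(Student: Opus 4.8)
The plan is to establish the two assertions separately: that the scenario can occur at most once for a fixed node, and that a single occurrence costs $O(n)$.

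For the ``at most once'' part, the key fact I would invoke is the \emph{monotonicity} of the revealed demand graph: requests only ever add edges, never remove them, so if $G_i$ denotes the graph of edges revealed after $i$ requests then $G_0 \subseteq G_1 \subseteq \cdots$. From this I would argue that membership of a node in a cycle is a monotone property. Indeed, if a node $v$ lies on some cycle $C$ in $G_i$, then every edge of $C$ is still present in $G_j$ for all $j \ge i$, so $v$ still lies on $C$, and hence on a cycle, in $G_j$. Consequently the defining condition of the scenario---that $v$ was \emph{not} on a cycle immediately before the new edge is revealed---can be met at most once during the whole execution, since after the first such time $v$ is permanently on a cycle. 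This yields $SC_N \le 1$.

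For the cost of a single occurrence I would use the trivial displacement bound already applied to the previous scenario. The host network is a line on $2n$ vertices (the ladder $2\times n$ has $2n$ nodes), so relocating any one node to its new position moves it past at most $2n-1$ other nodes, i.e. costs at most $2n = O(n)$ swaps. Equivalently, by Lemma~\ref{relative order cost} the number of swaps charged to $v$ is bounded by the number of inversions in which $v$ participates, and a fixed node participates in at most $2n-1$ inversions; either way $SC_C = O(n)$.

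The only genuine subtlety I expect, and the step I would take care to pin down, is that the in-component edge-processing really relocates $v$ at most once into its cyclic position during a single step rather than shuffling it repeatedly. This follows from the description of that procedure: the frame $F$ of the newly formed maximal cycle is embedded in one pass, each newly-cyclic node being assigned a single target level, so $v$ incurs a single $O(n)$ relocation. Together with monotonicity (which also rules out $v$ oscillating off and back onto a cycle, and hence re-entering the scenario), the bounds $SC_N \le 1$ and $SC_C = O(n)$ give the claim.
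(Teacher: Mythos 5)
Your proof is correct and matches the paper's intent: the paper dismisses this lemma with the single word ``Trivial,'' and your argument (monotonicity of the revealed edge set for the at-most-once claim, plus the $O(n)$ displacement bound on a host of $O(n)$ nodes for the cost) is exactly the reasoning being taken for granted there. No gap; you have simply written out the details the paper omits.
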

\begin{proof}
Trivial.
\end{proof}

\begin{scenario}[Cycle in the frame]
The node falls in this scenario if it is in the cycle which is a part of $F$ and is present in $S$.
\end{scenario}
\begin{lemma}
The ``Cycle in the frame'' scenario happens $O(n)$ times for each node and costs $O(1)$.
\end{lemma}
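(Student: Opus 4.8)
The plan is to handle the two quantities in the claim separately, treating the occurrence count ($SC_N$) as the routine half and the per-occurrence cost ($SC_C$) as the crux. For the count I would argue globally rather than per node. The demand graph is a subgraph of $Ladder_n$, which has only $2(n-1)+n = 3n-2 = O(n)$ edges, so across the entire run at most $O(n)$ edges are ever revealed. Each invocation of the ``edge in one component'' procedure (Algorithm~\ref{alg:one:component:edge}) consumes exactly one freshly revealed edge, and a fixed node can be swept into the ``Cycle in the frame'' scenario at most once per such invocation. Hence the scenario fires $O(n)$ times in total, and in particular $O(n)$ times for any fixed node, giving $SC_N = O(n)$.

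For the cost, the key point is that when the procedure rebuilds the completed frame $F$ via \textsc{CycleEmbedding} while ``maintaining the orientation,'' a node of the already-present cycle $C \subseteq F$ is displaced by only a constant amount. I would invoke Lemma~\ref{one component nodes movement}, item~2: since $C$ was present in $S$, there is a cycle node that is top-most in both the old embedding $\varphi'$ and the new one, which pins the cycle image in place. Given this anchor, and since the orientation (clockwise or counterclockwise) is preserved by construction, \textsc{CycleEmbedding} reproduces $C$ on the same levels up to the single ``rotation by one node'' that the algorithm may perform. By Statement~\ref{cycle occupies level} each occupied level of the cycle image contains exactly two cycle nodes, so a one-step rotation moves every cycle node to an adjacent grid slot. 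To turn this into a cost bound I would feed the old and new enumerations into Lemma~\ref{relative order cost}: each old-cycle node participates in only a constant number of relative-order inversions, so $SC_C = O(1)$.

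The main obstacle is making the ``moves by $O(1)$'' claim airtight, i.e. ruling out that completing the frame shifts the old cycle wholesale along the ladder. This is precisely where item~2 of Lemma~\ref{one component nodes movement} does the real work: the preserved top-most node forbids a global translation of $C$, while any genuinely new frame vertices (new whisker or rung nodes that first join a cycle here) are charged to the separate \emph{``First time on a cycle''} scenario and therefore do not inflate $SC_C$ for this scenario. Combining the two halves through the accounting formula $2n \cdot \sum_{SC} SC_N \cdot SC_C$, this scenario contributes $2n \cdot O(n) \cdot O(1) = O(n^2)$ to the total reconfiguration cost, comfortably inside the $O(n^2 \log n)$ budget.
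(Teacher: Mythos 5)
Your counting half and your anchor-plus-rotation argument are essentially the paper's own proof: the $O(n)$ bound comes from the $O(n)$ edges of the ladder, and the $O(1)$ cost comes from Lemma~\ref{one component nodes movement} (the preserved top-most cycle node) pinning the cycle so that the re-embedding changes each cycle node's relative order with only $O(1)$ other \emph{cycle} nodes, which is then converted to swap cost via Lemma~\ref{relative order cost}. However, there is a localized gap in your cost half. Lemma~\ref{relative order cost} bounds the swap cost by the number of inverted pairs over \emph{all} nodes of the component, not just pairs internal to the cycle. So to conclude $SC_C = O(1)$ for an old-cycle node you must also rule out inversions between that node and the nodes of $S_1$ and $S_2$. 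Your argument covers cycle--cycle inversions (the anchor) and cycle--new-frame-vertex inversions (charged to the \emph{First time on a cycle} scenario), but says nothing about $S_1$ and $S_2$. This matters for the accounting: nodes of $S_1$/$S_2$ that are not in the first two inner simple-graphs pay nothing in any scenario, so if a cycle node did invert with many of them, no scenario budget would absorb that cost and the whole charging scheme would break.

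The paper closes exactly this hole with a separate observation, backed by items 1 and 3 of Lemma~\ref{one component nodes movement}: since the old cycle was present in $S$, all nodes of $S_1$ were above it in the old embedding and are embedded above the frame by the algorithm (and symmetrically for $S_2$), hence cycle nodes never change relative order with nodes of $S_1$ or $S_2$, and their internal reshuffling is the only cost they pay. Your invocation of item~2 alone (the common top-most node) prevents a wholesale translation of the cycle \emph{within its own image} but does not by itself forbid $S_1$ or $S_2$ from sweeping past the cycle in the enumeration. Adding one sentence citing items 1 and 3 for this purpose would make your proof match the paper's and be complete.
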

\begin{proof}
Since we have only $O(n)$ edges the $O(n)$ upper bound is trivial.

If the cycle is embedded respecting invariants \ref{embedding invariants} with a node specified to be the top most (which is the case due to the lemma \ref{one component nodes movement}) then we have only two four possibilities for a cycle to be embedded and for each to of them one can be transformed to another making each node changing the relative order only with $O(1)$ nodes from cycle which by Lemma \ref{relative order cost} gives us $O(1)$ cost per node. 

Note also that by Lemma \ref{one component nodes movement} nodes from cycle do not change relative order with nodes from $S_1$ or $S_2$ and thus the cost of their inner relative change is the only cost they need to pay. 
\end{proof}

So the nodes of $F$ fall into either \textit{First time on a cycle} scenario or to the cycle in the frame scenario. 

As for the $S_i$ nodes we need to consider 2 cases. The nodes are either a part of first two inner simple-graphs when $S_i$ starts from a tree or not.

If yes, and those node change their relative order to the nodes of $S_i$ this means that the reorientation of simple-graphs had been performed thus they fall into \textit{Inner simple-graph reorienting} scenario. If they did not change relative order to $S_i$ this means they didn't change the order with $S_j$, $j \in \{1, 2\} \setminus \{i\}$ neither with $F$ so they pay nothing. 

As for the nodes from $S_i$ that are not the part of first two inner simple-graphs by the Lemma \ref{one component nodes movement} they didn't change the relative order with $S_j$, by the embedding strategy they didn't change the relative order with the nodes from $S_i$ (// consider mirroring? //) accept possibly first two inner-simple graphs and also by the Lemma \ref{one component nodes movement} they didn't change the relative order with the cycle contained in $F$. As for the nodes of $F$ which were not on the cycle we say that they pay for all the relative order changes.

\subsubsection{New edge between two components}

We now define and analyse the behaviour of the algorithm when the edge between two connectivity components is revealed. The strategy would be to bring the larger component towards the smaller one.

\begin{scenario}[Connectivity component movement].
The node falls in this scenario if its component is a smaller of two between which the new edge is revealed.
\end{scenario}

\begin{lemma}
The ``Connectivity component movement'' scenario for a node happens $O(\log n)$ times and cost $O(n)$.
\end{lemma}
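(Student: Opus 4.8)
The plan is to bound separately the two quantities attached to this scenario in the amortized framework introduced above: $SC_N$, the number of times the scenario can befall a fixed node, and $SC_C$, the cost charged to that node per occurrence. Both bounds follow from the doubling structure of the merge process together with the $O(n)$ diameter of the host line.

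First I would bound the per-occurrence cost $SC_C$. When a new edge joins two connectivity components, the algorithm drags the smaller one towards the larger, re-embedding every node of the smaller component on $Ladder_N$ and then composing with the level-by-level line enumeration. Since $N = O(n)$ and each level of the quasi-correct embedding carries at most three images, every node occupies one of $O(n)$ positions on the resulting line. Hence to reach its new position a node is swapped past at most $O(n)$ other nodes; more formally one invokes Lemma~\ref{relative order cost}, which bounds the number of swaps by the number of inverted pairs, itself at most the line length $2N = O(n)$. This gives $SC_C = O(n)$.

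Next I would bound the count $SC_N$. Fix a node $u$ and let $s$ be the size of the connectivity component currently containing $u$. The scenario occurs only when $u$'s component is the \emph{smaller} of the two being merged, so if it is merged with a component of size $s' \geq s$ the resulting component has size $s + s' \geq 2s$. Thus every occurrence of the scenario at least doubles the size of the component containing $u$. Since that component starts as the singleton $\{u\}$ and can never exceed $|V| = 2n$ vertices, the number of doublings is at most $\log_2(2n) = O(\log n)$, so $SC_N = O(\log n)$. Combining the two bounds yields the two claims of the lemma, and a total charge of $SC_N \cdot SC_C = O(n\log n)$ accumulated by any single node under this scenario.

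I do not expect a genuine obstacle here, as both estimates are immediate from the halving/doubling invariant of the merge rule and the $O(n)$ span of the line. The one point to verify carefully is that after the drag the smaller component's internal embedding remains quasi-correct, so that no swaps beyond the $O(n)$ translation of its nodes are incurred; this is guaranteed because the larger component stays fixed while the static component-embedding routine is re-run on the smaller side, as described in the edge-between-components case.
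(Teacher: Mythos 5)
Your proposal is correct and follows essentially the same route as the paper: the paper's proof is simply the observation that the component containing the node at least doubles whenever this scenario occurs (hence $O(\log n)$ occurrences), with the $O(n)$ per-occurrence cost being trivial since no node can move farther than the length of the line. Your write-up just makes these two steps explicit.
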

\begin{proof}
The size of the component is at least doubled.
\end{proof}

We now dive into the case analysis.

We first want to distinguish to cases: either the bigger component has a tree with a non-empty trunk core or not. 

If not that means that there are at most two nodes of degree three. If the new edge increases the number of degree three nodes from 0 to one, from 1 to 2 or from 2 to three we say that it is an individual scenario and we can allow the total reconfiguration according to the \ref{tree embedding strategy}. 
\begin{scenario}[New degree three node.]
The node falls into this scenario if the new edge increases the number of the nodes of degree three in the component from 0 to 1, from 1 to 2 or from 2 to 3. 
\end{scenario}
\begin{lemma}
The ``New degree three node'' scenario can happen $O(1)$ times and costs $O(n)$. 
\end{lemma}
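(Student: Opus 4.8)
The plan is to separate the two claims: the $O(n)$ per-node cost is almost immediate, whereas the $O(1)$ bound on the number of occurrences rests on a monotonicity argument for the number of degree-three vertices.

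For the cost, I would note that when this scenario fires the algorithm discards the old embedding of the (merged) component and re-embeds it from scratch with the tree embedding strategy of Section~\ref{tree embedding strategy}. The resulting image lies inside $Ladder_N$ with $N=O(n)$, so after the level-by-level mapping to the line every vertex occupies a position in $\{1,\dots,2N\}=O(n)$. Since a migration only swaps the images of two line-adjacent vertices, relocating a single vertex from its old position to its new one costs at most the distance between them, hence $O(n)$ swaps. Thus one occurrence of the scenario charges each vertex at most $O(n)$.

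For the number of occurrences I would fix a vertex $w$ and track the number $D$ of degree-three vertices inside the connectivity component currently containing $w$. The central claim is that $D$ is monotonically non-decreasing throughout the execution. This holds because revealing an edge only raises the degrees of its two endpoints, and every graph we embed is a subgraph of $Ladder_n$, whose maximum degree is three; so no vertex ever exceeds degree three and a vertex that has attained degree three retains it forever. Merging two components can only take the union of their degree-three vertices and possibly promote the two endpoints of the connecting edge, so from $w$'s viewpoint $D$ never drops.

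Finally I would combine monotonicity with the guard of the scenario. By definition this scenario is used only while the bigger component has no tree with a non-empty trunk core, which (as argued just before the case split) means $D\le 2$ immediately before the reconfiguration; the scenario then fires on a strict increase of $D$, recorded as one of $0\to1$, $1\to2$, $2\to3$. As soon as $D$ reaches three the component owns a non-empty trunk core, and every subsequent merge involving $w$ is handled by the complementary case, so $w$ is never touched by this scenario again. Since the pre-firing values of $D$ form a strictly increasing sequence confined to $\{0,1,2\}$, there can be at most three such firings, i.e.\ $O(1)$ per vertex. The step needing the most care is this monotonicity-across-merges claim together with the equivalence ``no non-empty trunk core $\Leftrightarrow D\le 2$'' used by the case split; both are routine, but I must state them precisely so that reaching $D=3$ provably ejects $w$ from the regime of this scenario.
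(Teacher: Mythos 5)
Your proposal is correct; in fact the paper's own ``proof'' of this lemma is literally the single word ``Trivial,'' so your argument is a genuine filling-in rather than a divergence: the $O(n)$ cost follows because any vertex moves at most the length of the line ($2N = O(n)$) under a full re-embedding, and the $O(1)$ count follows because the number of degree-three vertices in the component containing a fixed node is non-decreasing (edges are only revealed, never removed, and a subgraph of $Ladder_n$ has maximum degree three), so each of the transitions $0\to1$, $1\to2$, $2\to3$ fires at most once. One small caution: the ``equivalence'' you flag at the end holds only in the direction you actually use ($D \geq 3$ forces a non-empty trunk core); the converse is false, since a component with a single degree-three node, or with two non-adjacent ones, already has a non-empty trunk core by the paper's definition of support nodes --- but your counting argument never needs that converse, so the proof stands.
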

\begin{proof}
Trivial.
\end{proof}

We now assume that there is a tree in the bigger component with a non-empty trunk core or a cycle with an inner edge. The smaller component can then connect to:
\begin{enumerate}
    \item The inner node of the trunk core. 
    \item The inner simple-graph node.
    \item The cycle node.
    \item The exit-graph node.
\end{enumerate}

In the following analysis the new scenario appear
\begin{scenario}[No more an exit-graph.]
The node falls into this scenario if it is no longer a part of an exit-graph.
\end{scenario}
\begin{lemma}
The ``No more an exit-graph'' scenario happens at most once and costs $O(n)$.
\end{lemma}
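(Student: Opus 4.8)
The plan is to separate the statement into its two assertions and dispatch them in increasing order of difficulty: the cost bound is immediate, while the ``at most once'' count is the only substantive claim and will follow from a monotonicity (absorbing-state) argument. For the cost: whenever this scenario fires for a node, the algorithm re-embeds that node somewhere in the image of $Ladder_N$, whose level-by-level projection onto the target line has $O(n)$ positions; since a single node can be brought to any position by at most $2n = O(n)$ adjacent swaps (the same trivial bound used throughout the cost section), the movement charged to the node under this scenario is $O(n)$.

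For the count, I would prove that \emph{ceasing to be an exit-graph node is irreversible}, i.e.\ once a node leaves exit-graph status it can never regain it, so the scenario fires at most once per node. By Definition~\ref{def:exit}, an exit-graph is a simple-graph attached to an \emph{end} node of the (extended) trunk core; consequently, at any moment a node of a component that is \emph{not} an exit-graph node falls into exactly one of three classes: a cycle node, an extended-trunk-core node, or an inner-graph node. The whole argument reduces to showing that each of these three classes is \emph{absorbing} under edge revelation (covering both intra-component edges that create cycles and inter-component merges), so that whatever a node becomes when it stops being an exit-graph node, it stays there.

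The cycle case and the trunk-core case are the easy ones. A cycle node stays a cycle node, since adding edges never destroys an existing cycle and an exit-graph (being a line-graph) contains no cycle node, so once on a cycle a node is never again an exit-graph node. For the extended-trunk-core case I would use that a degree-three node lies on the trunk of every correct embedding (Statement~\ref{pass through three node}) and that, because the demand graph is a subgraph of $Ladder_n$ in which the maximum degree is three, a degree-three node keeps degree exactly three; hence support-node status persists, and a node flanked by two degree-three anchors (or lying on a trunk extension toward a cycle) keeps those anchors as edges are only added and the relevant cycle attachment persists. Combined with the determination of the trunk core by its support nodes (Lemma~\ref{trunk certain nodes}), this gives monotonicity of trunk-core membership.

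The main obstacle I anticipate is the inner-graph case, which is where I would spend the most care. Here I must argue that once a simple-graph's foot becomes an \emph{inner} node of the extended trunk core, it never reverts to an end node: intuitively the ends of the trunk core only migrate outward as the component grows, never retract past an already-interior foot. I would make this precise by invoking the monotone growth of the set of support nodes (and of cycle attachments that extend the trunk core), so that the two flanking anchors straddling the foot survive all future edge additions; if instead some node of such a simple-graph itself acquires degree three or joins a cycle, it is deposited into one of the two already-absorbing classes. Putting the three cases together, every transition out of exit-graph status lands a node in an absorbing class, so the transition happens at most once, and each occurrence costs $O(n)$, proving the lemma.
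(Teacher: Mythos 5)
Your proposal is correct and takes essentially the same approach as the paper: the paper's own one-sentence proof (``the closest node of degree three to the exit-graph node has only one path to another degree three node'') is exactly your characterization that exit-graph status means all anchors lie on one side of the node, and the ``happens at most once'' claim follows in both cases from the same monotonicity under edge revelation that you invoke. Your write-up merely makes explicit what the paper leaves implicit -- the three-class (cycle / extended trunk core / inner-graph) case split, the persistence of degree-three nodes and cycles, and the resulting irreversibility of an inner foot -- so it is a fleshed-out version of the same argument rather than a different route.
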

\begin{proof}
The closest node of degree three to the exit-graph node has only one path to another degree three node.
\end{proof}

Let us discuss all the possible cases.

\begin{enumerate}
    \item If the smaller connectivity component connects to the inner trunk core node then by Lemma \ref{simple-graphs line-graphs} it must be a line-graphs and there for to maintain the quasi-correctness of the embedding we only to choose its orientation and reorient its trunk core neighbours. The scenarios engaged here are the \textit{connectivity component movement} and \textit{inner simple-graph reorienting}.
    \item For the inner simple-graph node the analysis is basically the same as in the previous case.
    \item Cases:
    \begin{itemize}
        \item Only cycle
        \item Cycle and a line-graph
        \item Cycle and a tree with a degree three node
    \end{itemize}
    \item Cases:
    \begin{itemize}
        \item Smaller component does not make new nodes of degree three
        \item It does. Then we reorient the exit-graphs as they are no longer exit-graphs. This is the no more an exit graph scenario. 
    \end{itemize}
\end{enumerate}

\section{Embedding a general demand graph on a line graph}
\label{app:general}

Here we propose an online algorithm for a general demand graph $G$ assuming having an oracle algorithm for solving the Bandwidth problem.  
Given a long enough request sequence, namely $|\sigma| = \Omega(|E(G)| \cdot |V(G)|^2)$ the proposed algorithm has an $O(\lambda \cdot \bandwidth(G))$ competitive ratio compared with an optimal offline algorithm. 
But there is more to it. We point out that this algorithm is robust in a sense that its maximal cost for serving a request exceeds the maximal cost of processing the request (reconfigure + serve) of any online algorithm by at most the factor of $\lambda$.
Moreover, this algorithm pays at most $O(|E(G)| \cdot |V(G)|^2)$ for reconfiguration in total. 

{
\def\thetheorem{\ref{thm:arbitrary-graph}}
\begin{theorem}
Suppose we are given a graph $G$ and an algorithm $B$, that for any subgraph $S$ of $G$ outputs an embedding $c \in C_{S \rightarrow L_n}$ with the bandwidth less than or equal to $\lambda \cdot \bandwidth(G)$ for some $\lambda$. 
Then, for any sequence of requests $\sigma$ with demand graph $G$ there is an algorithm that serves $\sigma$ with total cost $O(|E(G)| \cdot |V(G)|^2 + \lambda \cdot \bandwidth(G) \cdot |\sigma|)$.
If the number of requests is $\Omega(|E(G)| \cdot |V(G)|^2)$ each request has $O(\lambda \cdot \bandwidth(G))$ amortized cost.
\end{theorem}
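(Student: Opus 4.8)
The plan is to maintain the invariant that, after processing $i$ requests yielding the revealed demand subgraph $G_i = (V(G), E_i)$, the current configuration $c_i$ is exactly the embedding produced by the oracle $B$ on $G_i$. Since $G_i$ is a subgraph of $G$, the hypothesis on $B$ guarantees that $c_i$ has bandwidth at most $\lambda \cdot \bandwidth(G)$, so every edge of $E_i$ — and in particular every request whose edge is already revealed — is served at cost at most $\lambda \cdot \bandwidth(G)$. Thus when request $\sigma_{i+1}$ repeats an already known edge, the algorithm does nothing and pays only the routing cost $O(\lambda \cdot \bandwidth(G))$.

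The interesting case is when $\sigma_{i+1}$ reveals a new edge, producing $G_{i+1}$. First I would invoke $B$ on $G_{i+1}$ to obtain a target configuration $c_{i+1}$ of bandwidth at most $\lambda \cdot \bandwidth(G)$, then reconfigure the line from $c_i$ to $c_{i+1}$ using adjacent swaps, after which the new edge is served at cost at most $\lambda \cdot \bandwidth(G)$, restoring the invariant. The key quantitative step is bounding the reconfiguration cost: transforming $c_i$ into $c_{i+1}$ amounts to realizing a permutation of the $|V(G)|$ images on the line by adjacent transpositions. By the standard fact that the adjacent-swap distance between two orderings equals their number of inversions (cf. Lemma~\ref{relative order cost}), this costs at most $\binom{|V(G)|}{2} = O(|V(G)|^2)$ swaps; equivalently, each vertex migrates by at most $|V(G)|$ positions.

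It then remains to aggregate the costs. A reconfiguration is triggered only when a genuinely new edge appears, which happens at most $|E(G)|$ times over the whole sequence; each such event costs $O(|V(G)|^2)$ swaps, for a total reconfiguration cost of $O(|E(G)| \cdot |V(G)|^2)$. Independently, each of the $|\sigma|$ requests is served at routing cost $O(\lambda \cdot \bandwidth(G))$, contributing $O(\lambda \cdot \bandwidth(G) \cdot |\sigma|)$. Summing the two yields the claimed bound $O(|E(G)| \cdot |V(G)|^2 + \lambda \cdot \bandwidth(G) \cdot |\sigma|)$, and dividing by $|\sigma|$ under the hypothesis $|\sigma| = \Omega(|E(G)| \cdot |V(G)|^2)$ gives the amortized cost $O(\lambda \cdot \bandwidth(G))$ per request.

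The step I expect to require the most care is the reconfiguration bound — specifically, justifying that one can always pass from $c_i$ to $c_{i+1}$ with only $O(|V(G)|^2)$ adjacent swaps, no matter how $B$ reorders the vertices. This is where I would lean on the inversion-counting argument of Lemma~\ref{relative order cost}; the remaining accounting is routine.
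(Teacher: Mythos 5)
Your proposal is correct and follows essentially the same route as the paper's proof: maintain the invariant that the current configuration is the oracle's embedding of the revealed subgraph, rebuild via $B$ only when a new edge appears (at most $|E(G)|$ times, each costing $O(|V(G)|^2)$ swaps), and serve every request at cost $O(\lambda \cdot \bandwidth(G))$. Your explicit appeal to the inversion-counting argument to justify the $O(|V(G)|^2)$ reconfiguration bound is a slightly more careful justification of a step the paper simply asserts (each vertex moves by at most $|V(G)|$ positions), but it is the same argument in substance.
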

\addtocounter{theorem}{-1}
}
\begin{proof}
Assume we have processed $i$ requests so far. We get a demand graph built on edges $E_i = \{\sigma_0, \ldots, \sigma_i\}$. It induces a subgraph $S_i$ of $G$.
We want to maintain the invariant that each $S_i$ is embedded via $c_i \in C_{S_i \rightarrow L}$ such that bandwidth of $h$ is no greater than $\lambda \cdot \bandwidth(G)$.  
Suppose now the embedding $c_{i-1}$ of $S_{i-1}$ respects the invariant and we get a new request $\sigma_i$. $\sigma_i$ is an edge in $G$, say $(u, v)$. We have two possibilities: either $\sigma_i$ is already in $S_{i - 1}$ or not. 
If $(u, v) \in E(S_{i-1})$ then $S_{i-1} = S_i$ and since $c_{i - 1}$ respects the invariant we know that $|c_{i-1}(u) - c_{i-1}(v)| \leq \lambda \cdot \bandwidth(G)$ and hence we take $c_{i} = c_{i - 1}$.
If on the opposite $(u, v) \notin E(S_{i-1})$ we take $c_i = B(S_i)$, as an embedding of $S_i$ to the line, and reconfigure the network from scratch.

Now, we analyse the cost. We perform adjustments only for a new revealed edge and, thus, there would be no more than $|E(G)|$ reconfigurations. For each reconfiguration we make at most $|V(G)^2|$ migrations, meaning that the total cost of reconfigurations is at most $|E(G)| \cdot |V(G)|^2$. Since we serve the request after performing a reconfiguration and each configuration has a bandwidth of at most $\lambda \cdot \bandwidth(G)$ we state that we pay no more than $\lambda \cdot \bandwidth(G) \cdot |\sigma|$ to serve all the requests.
\end{proof}

We now explain how to construct an expensive request sequence, when given an online algorithm $ON$ to obtain a $\bandwidth(G)$ request processing cost lower bound.

{
\def\thelemma{\ref{lem:lower-bound:arbitrary-graph}}
\begin{lemma}
Given a demand graph $G$. For each online algorithm $ON$ there is a request sequence $\sigma_{ON}$ such that $ON$ serves each request from $\sigma_{ON}$ for a cost of at least $\bandwidth(G)$.
\end{lemma}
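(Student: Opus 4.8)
The plan is to let the adversary drive the online algorithm $ON$ into a state where it is forced to pay $\bandwidth(G)$ at every step, exploiting the definition of bandwidth directly. (Here I read ``serves each request for a cost of at least $\bandwidth(G)$'' as: the total cost charged to $ON$ for processing a request, i.e. its migrations plus the serving distance, is at least $\bandwidth(G)$.) The key fact I would isolate first is: for \emph{any} configuration $c \in C_{G \rightarrow L_n}$ there is an edge $(u,v) \in E(G)$ with $|c(u) - c(v)| \geq \bandwidth(G)$. Indeed, the bandwidth of the single configuration $c$ equals $\max_{(u,v) \in E(G)} |c(u)-c(v)|$, and since $\bandwidth(G)$ is the minimum of this quantity over all configurations, we get $\max_{(u,v) \in E(G)} |c(u)-c(v)| \geq \bandwidth(G)$; the maximizing edge is the one to request.

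Next I would build $\sigma_{ON}$ adaptively by simulating the deterministic algorithm $ON$. Since $ON$'s configuration is a deterministic function of the requests seen so far, the adversary can pre-compute the whole sequence offline. At step $i$, let $c$ be the configuration $ON$ holds just before processing the $i$-th request (the configuration at the end of step $i-1$). Using the fact above, choose an edge $(u_i,v_i)$ with $|c(u_i)-c(v_i)| \geq \bandwidth(G)$ and set $\sigma_i = (u_i, v_i)$. Iterating this for as many steps as desired yields a valid sequence $\sigma_{ON}$, every entry of which is an edge of $G$.

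Then I would lower-bound the cost $ON$ pays on $\sigma_i$. Write $D = |c(u_i) - c(v_i)| \geq \bandwidth(G)$ for the distance in the pre-migration configuration $c$, let $c'$ be the configuration after $ON$'s migrations for this step, and let $d = |c'(u_i) - c'(v_i)|$ be the serving cost it incurs. The crucial elementary observation is that one migration, swapping two images on adjacent line vertices, shifts the position of any fixed node by at most $1$ and hence changes $|c(u_i) - c(v_i)|$ by at most $1$. Therefore the number of migrations needed to pass from $c$ to $c'$ is at least $|D - d|$, and the total cost at step $i$ is at least
\begin{align*}
(D - d) + d = D \geq \bandwidth(G)
\end{align*}
when $d \leq D$, and at least $d > \bandwidth(G)$ when $d > D$; in either case it is at least $\bandwidth(G)$. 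As this holds at every step, $ON$ pays at least $\bandwidth(G)$ for every request of $\sigma_{ON}$.

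I expect the delicate point to be the ``each swap changes the $u_i$--$v_i$ distance by at most one'' claim, which must be verified against the precise migration model, including the degenerate case where the swapped pair is $\{u_i, v_i\}$ itself (there $D=1$ and the distance is unchanged). A secondary issue is the dependence on determinism of $ON$: the adaptive construction of $\sigma_{ON}$ requires the adversary to know $ON$'s configuration before each request, which is immediate for deterministic $ON$; extending the bound to randomized algorithms would need a Yao-style averaging argument.
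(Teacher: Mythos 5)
Your proposal is correct and takes essentially the same approach as the paper: an adaptive adversary that, at each step, requests the edge realizing the maximum distance in $ON$'s current configuration, which must be at least $\bandwidth(G)$ because the bandwidth is a minimum over all configurations. The only difference is that you explicitly verify, via the swap-by-swap distance argument, that pre-request migrations cannot push the total per-request cost below $\bandwidth(G)$~--- a point the paper's proof asserts without elaboration.
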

\addtocounter{theorem}{-1}
}
\begin{proof}
Consider the resulting numeration $\varphi$ of $V(G)$ done by $ON$ after serving $r \geq 0$ requests. By the definition of bandwidth, there are such $u, v \in V(G)$ that $|\varphi(u) - \varphi(v)| \geq \bandwidth(G)$. So, we let the next request $\sigma_{ON}[r + 1]$ be $(u, v)$ making $ON$ pay at least $\bandwidth(G)$ for that request. 
\end{proof}


\end{document}